\newtheorem{thm}{Theorem}
\newtheorem{lma}{Lemma}
\newtheorem{claim}{Claim}
\newtheorem{fact}{Fact}
\newtheorem{definition}{Definition}
\newtheorem{example}{Example}
\newtheorem{corollary}{Corollary}
\renewcommand{\d}{\boldsymbol{d}}
\renewcommand{\S}{\mathcal{S}}
\newcommand{\N}{\mathbb{N}}
\newcommand{\K}{\mathcal{K}}
\newcommand{\f}{\boldsymbol{f}}
\newcommand{\1}{\mathbbm{1}}
\newcommand{\Wu}[1]{#1}
\newcommand{\Wuu}[1]{#1}
\title{Selfishness \Wu{need} not be bad: a general proof}
\author{Zijun Wu\\
	\texttt{zijunwu1984a@163.com},\\
	Center of Traffic and Big Data (CTB),\\
	Sino-German Institute for Applied Optimization (SG-IAO),\\
	Department of Computer Science and Technology (DCST),\\
	Hefei University (HU)\\[1ex]
	 Rolf H. M{\"o}hring\\
	 \texttt{rolf.moehring@me.com},\\
	 Beijing Institute for Scientific and Engineering Computing (BISEC),\\
	 Beijing University of Technology (BJUT)\\[1ex]
	 Dachuan Xu\\
	 \texttt{xudc@bjut.edu.cn},\\
	 Beijing Institute for Scientific and Engineering Computing (BISEC),\\
	 Beijing University of Technology (BJUT)
	 }
\begin{document}
	\maketitle
	
	\begin{abstract}
		This article studies the user behavior in non-atomic congestion games. We consider non-atomic congestion games with \Wu{continuous and non-decreasing} functions \Wuu{and investigate} the limit of the price of anarchy \Wu{when} the total user
		volume approaches infinity. 
		\Wu{We deepen the knowledge on 
			 \Wu{{\em asymptotically well designed
			games}} \cite{Wu2017Selfishness}, {\em limit games} \cite{Wu2017Selfishness}, {\em scalability} \cite{Wu2017Selfishness} and {\em gaugeability} \cite{Colini2017b}
			that were recently} \Wuu{used} \Wu{in 
			the limit analyses \Wuu{of} the price of anarchy for 
			non-atomic congestion games.}
		\Wu{We develop} a unified
			\Wu{framework and derive new techniques that allow a general
			limit analysis \Wuu{of} the price of anarchy.} 
		\Wu{With these new techniques,} \Wu{we are able to
			prove a global convergence on the
			price of anarchy for non-atomic congestion
			games with arbitrary polynomial price functions
			and arbitrary user volume vector sequences,
			see Theorem~\ref{theo:SelfishMainTh}.
			This means that
			non-atomic congestion games with polynomial price
			functions are asymptotically well designed.}  
		\Wu{Moreover, we show that
			these new techniques are very flexible
			and robust \Wuu{and} apply also to 
			non-atomic congestion games with 
			price functions of other types. 
			In particular, we prove that non-atomic congestion
			games with regularly varying
			price functions are also asymptotically
			well designed, provided that
			the price functions are slightly
			restricted, see Theorem \ref{theo:MainTheoremRegularVariation}
			and Theorem \ref{theo:MainTheoremGaugeableExtension}.
			Our proofs are direct and very elementary
			without using \Wuu{any} heavy machinery.
			They only use basic properties
			of Nash equilibrium and system optimum 
			profiles, simple facts about \Wuu{the} asymptotic
			notation $O(\cdot),\Omega(\cdot),$
			etc,
			and  induction.}
		\Wu{Our results greatly} generalize
		recent results from 
		\cite{Colini2016}, \cite{Colini2017a}, \cite{Colini2017b} and
		\cite{Wu2017Selfishness}. 
		\Wu{In particular, our results
			further support the view of \cite{Wu2017Selfishness}
			with a general proof that
			selfishness need not be bad for
			non-atomic congestion games.} 
	\end{abstract}

\section{Introduction}

\subsection{Motivation}
Nowadays, traffic congestion has almost become a daily annoyance to every
citizen in large cities of China. According to the newest data from 
AMap \cite{amap2017} in 2017, more than 26\% of cities in China \Wu{experienced}
traffic congestion in rush hours, 55\% of cities \Wu{experienced} low speed, and only
19\% of cities \Wu{did} not \Wu{suffer from} traffic congestion. 

Traffic congestion does
not only considerably \Wu{enlarge} travel latency, but also causes serious economic
loss. We take the capital city of China, Beijing, as an example. The average economic loss 
caused by congestion in 2017 was about
4,013 RMB per person, see \cite{baidu2017}, 
which accounts for 3.1\% of the annual GDP of Beijing in that year. Note that the annual GDP growth of Beijing
was only about 6.8\% in 2017. This means that
traffic congestion \Wu{has} almost \Wu{destroyed} one third of the potential
economic growth of Beijing.

To alleviate problems caused by traffic congestion, 
\Wu{the government of China} has actively implemented a series of traffic management
measures in \Wu{some large} cities in recent years, including \Wu{the} even and odd license 
plate number
rule, license plate \Wu{lotteries}, encouraging public transportation and others. 
These measures definitely prevent further deterioration of traffic, but
not yet completely cure congestion. 

\Wu{Road traffic conditions are} a direct result 
from simultaneous travel of citizens 
in a particular area. Given road \Wu{conditions}, the routing behavior of travelers  
almost determines
how the traffic \Wu{develops}. Thus, to comprehensively cure congestion, a preliminary
step is to well understand the routing behavior of travelers. In particular, we
need to \Wu{find} out the extent to which the \Wu{autonomous} routing behavior of travelers contributes to congestion.
This motivates the present article. 

\subsection{\Wu{The} static model}

To that end, one needs to model road traffic
appropriately.
A popular static model for road traffic is the so-called {\em non-atomic congestion
	game} (NCG), see \cite{Rosenthal1973A}
or \cite{Nisan2007}. NCGs are non-cooperative games of perfect information. In an NCG, users (players) are collected
into $K$ different groups according to some measurement on their similarities, for a \Wu{fixed} integer $K\in \N_+$. Associated
with each group $k\in \K:=\{1,\ldots,K\}$ \Wu{is} a finite non-empty \Wu{set $\S_k$
containing all} strategies {\em only} available to users from \Wuu{group $k$}. 
Every user engaged in the game \Wu{chooses} a strategy $s\in \S:=\bigcup_{k\in \K}\S_k$ \Wu{that he will follow}, and 
every \Wu{chosen} strategy $s\in \S$  \Wu{
 consumes}
$r(a,s)$ units of resource $a$ for each $a\in A.$ \Wu{Here,} $A$ is a 
finite non-empty set \Wu{containing} all available resources, and $r(a,s)$
is a \Wu{fixed} non-negative constant
denoting
the consumed (or demanded) volume
of \Wuu{resource $a$ by strategy
$s$ for each $a\in A$ and
each $s\in \S.$} The eventual price of a resource $a\in A$ 
depends {\em only} on its consumed volume. Given a vector $d=(d_k)_{k\in \K}$
of user volumes, a {\em feasible strategy profile} $f$ is  an assignment \Wu{that} assigns to
each of the $d_k$ users from group $k\in \K$ a feasible strategy $s\in \S_k$ for each group $k\in \K.$ 
See \cite{Wu2017Selfishness} or Section \ref{sec:Model} for details. 

Obviously, NCGs model road traffic \Wu{on} an macroscopic level. \Wuu{Then}, resources $a\in A$ will be arcs (streets) \Wu{of} the
underlying road network, a group $k\in \K$ will be a travel origin-destination
(OD) pair, and a strategy $s\in \S_k$ will be \Wu{a path}
from the $k$-th origin to the $k$-th destination, for each
$k\in \K$. The constant 
$r(a,s)$ is just an indicator
function of the \Wu{membership} relation ``$a\in s$" for each arc $a\in A$ and each path $s\in \S.$ \Wuu{A feasible strategy profile
$f$ is then} a feasible {\em traffic (path) assignment} \cite{Dafermos1969} for all the $T(d):=\sum_{k\in \K}d_k$ travelers.
\Wu{As our results hold on a more general level}, we will not stick to these terminologies of
road traffic in the sequel, \Wu{but still be able} to understand the 
routing behavior of travelers.

In an NCG, the price of a resource $a\in A$ is often expressed as 
an {\em non-negative, non-decreasing} and {\em continuous} function $\tau_a(\cdot)$ of 
its demanded volume, see, e.g.,
\cite{Roughgarden2000How}, \cite{Roughgarden2001Designing}, \cite{Roughgarden2005Selfish}, \cite{Nisan2007}, \cite{Roughgarden2015Intrinsic}.
Popular price functions \Wu{are} {\em polynomials}. For instance, \Wu{latency functions}
$\tau_a(\cdot)$ in road traffic are conventionally assumed to be 
{\em Bureau of Public Road} (BPR) functions \cite{BPR}, which
are polynomials of degree $4.$ In our study, we will follow this fashion, and
\Wu{emphasize 
	on polynomial price functions $\tau_a(\cdot)$ 
and others that are related to polynomials, \Wuu{e.g.,} 
regularly varying functions \cite{Bingham1987Regular}.}
However, \Wu{the polynomials we will consider are general, i.e., they are allowed to
	\Wuu{have} different degrees.}
\Wu{We} do not consider 
	strategies \Wuu{that}
are {\em completely free,} i.e., 
$\sum_{a\in A} r(a,s)\cdot \tau_a(x)\equiv 0$ for all
$x\ge 0,$ for some group
$k\in \K$ and some strategy $s\in \S_k.$ 
This is rather reasonable \Wuu{in} congestion game, since
users with choices of free strategies are actually
outside the underlying game!

\subsection{\Wu{Selfish user} behavior}

\Wu{NCGs are non-cooperative, and so} users  are \Wu{considerd to be} selfish. They would like to use
strategies minimizing their own \Wu{cost}.
For instance, travelers would like to
follow \Wu{a quickest} \Wuu{path,} so as
to reduce their travel latency. In general,
the cost of a user is just the cost
of the strategy adopted by that user.
Given a 
vector $d=(d_k)_{k\in \K}$ of user volumes
and a feasible profile $f,$
the cost of a strategy $s\in \S$ equals
$\sum_{a\in A}r(a,s)\cdot\tau_a(f_a),$
where $f_a$ denotes the total 
consumed volume of resource $a$
w.r.t. profile $f$,
for each $a\in A.$ Obviously,
for road traffic, 
the cost of a strategy $s\in \S$ 
is just the 
total travel time \Wu{(latency) along} path $s.$

The selfish behavior of users will eventually lead the underlying game into a 
so-called {\em Wardrop equalibrium}
(WE) \cite{Wardrop1952ROAD}, in which
every user follows a cheapest strategy
\Wuu{he} could follow \Wu{given the choices of
	all other users}, see Section \ref{sec:Model} for details. 
Under our assumption
of \Wu{continuous and non-decreasing} price functions
$\tau_a(\cdot),$ a WE is actually
a {\em pure Nash equilibrium} (NE) \cite{Rosenthal1973A}, in which all users
will loyally adhere to the choices they have done, since no unilateral change
in strategy can introduce any extra profit. Therefore, the game will
enter a steady state if no external force interferes. An NE
(equivalently, a WE)
is thus \Wu{a} macro model
of user (selfish) behavior.
The average cost of users \Wu{in} an
NE profile therefore reflects 
\Wu{the cost} users need to pay in practice.
Note that under our setting of continuous and non-decreasing
price functions
$\tau_a(\cdot),$ all NE profiles will have the same
\Wu{cost}, see, e.g., \cite{Smith1979The}.

\Wu{A question of great 
interest in NCGs is} whether user
(selfish) behavior is harmful. This
actually concerns whether the selfish behavior of users will
demage {\em social welfare}, i.e., \Wu{increases} the average cost of users
engaged in the game. If this is the case, then we may need to
employ
some external
\Wu{forces or measures} to break up the equilibrium induced by
the selfish behavior, so as to \Wu{get closer to} social welfare.
For road traffic, possible external \Wu{measures} could be
some road guidance policies like congestion pricing,
see, e.g.,
\cite{Cole2003Pricing}, \cite{Fleischer2004Tolls}, \cite{Cole2006How}, \cite{Phang2004Road} and \cite{Harks2015Computing}.

The
price of anarchy (PoA), a concept \Wu{stemming} from \cite{Papadimitriou2001Algorithms}, is
a popular measure for
the ``\Wuu{defficiency}" of \Wu{selfish user}
behavior.
Given a vector
$d=(d_k)_{k\in \K}$ of user
volumes, a feasible strategy 
profile $f^*$
is said to be at {\em system optimum}
(SO) if it minimizes
the average cost of 
users engaged in the game. 
\Wu{The value of the PoA for \Wuu{non-decreasing}
	price functions $\tau_a(\cdot)$}
equals the {\em ratio of the average cost of users in an NE profile over that in an SO
profile}. Obviously, the larger the
value of PoA, the \Wu{more} \Wu{selfish user} behavior \Wuu{demages}
the social welfare.

The value of the PoA does not only reflect the 
extent to which
the \Wu{selfish user} behavior ruins social welfare, but also
the potential benifit we could get if all users
were appropriately guided. \Wuu{For road traffic,} the value of the PoA \Wuu{thus} shows the
extent to which the average latency \Wuu{can} be reduced in principle, if
all travelers use the ``right" paths.
Therefore, \Wu{for} our purpose, we need a close inspection \Wu{of} the value of the PoA. 

\subsection{The state of the art}
Traditionally, \Wu{selfish user} behavior
is considered to be harmful, see, e.g.,
the studies in
\cite{Roughgarden2000How},
\cite{Roughgarden2001Designing},
\cite{Roughgarden2002The}, \cite{Roughgarden2004Bounding},
\cite{Roughgarden2005Selfish},
\cite{Correa2005On}, \cite{Roughgarden2007Introduction},
and \cite{Roughgarden2015Intrinsic}.
These studies investigated \Wu{an} upper bound of the PoA 
for several classes of price functions.
They demonstrated that the worst-case upper bound \Wu{can} be very large.
A famous example \Wu{motivating} these studies is
Pigou's game, see, e.g., \cite{Nisan2007} or 
Figure \ref{fig:pigou}.
\begin{figure}[!htb]
	\centering
	\includegraphics[scale=0.3]{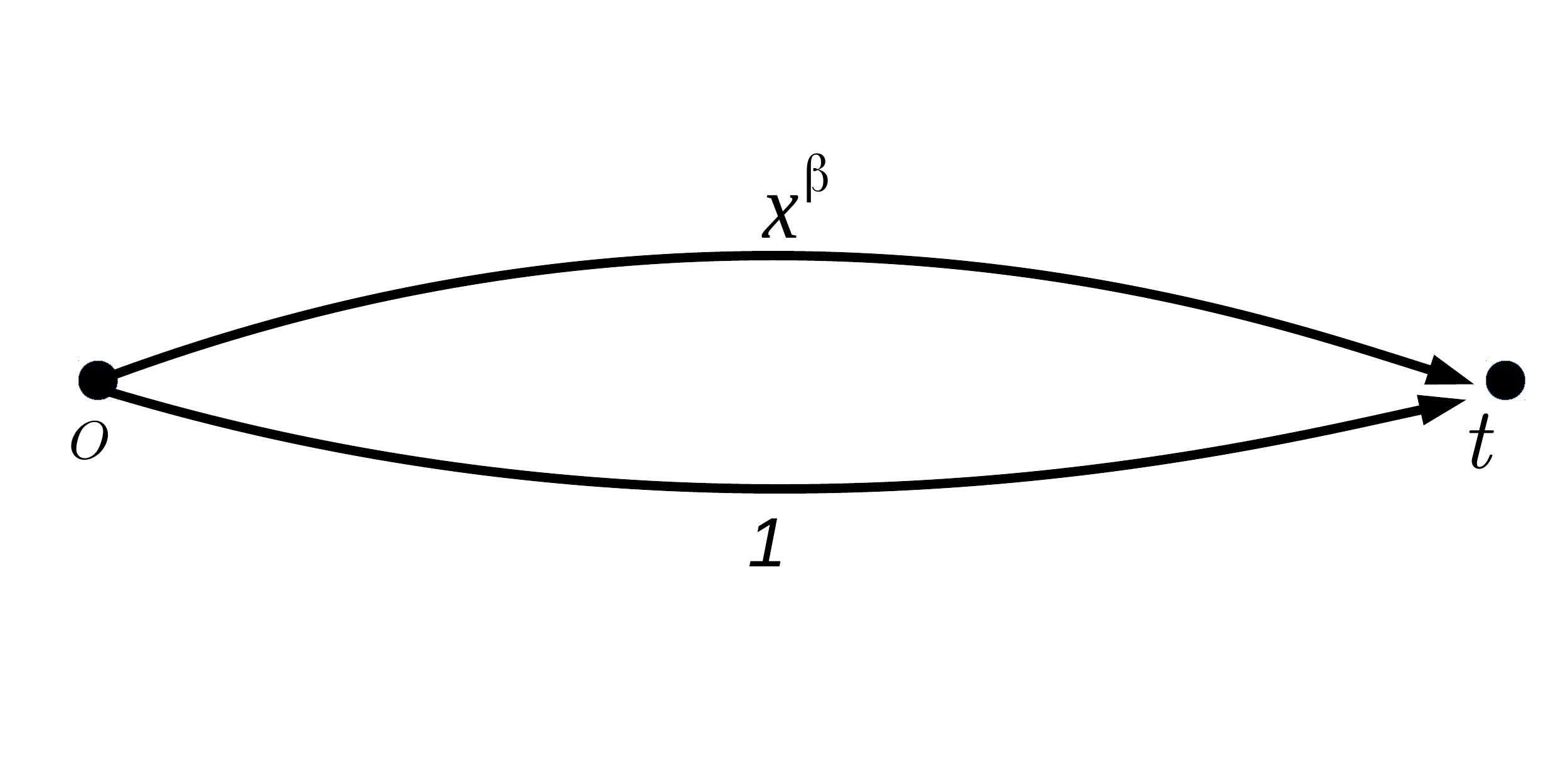}
	\caption{Pigou's game}
	\label{fig:pigou}
\end{figure}
In Pigou's game, there is only one
user group with two strategies of
price functions
$x^{\beta}$ and $1,$ respectively,
for some constant $\beta\ge 0.$ The PoA
of this game equals $T/\big(T-(\beta+1)^{-1/\beta}+(\beta +1)^{-1}\big),$ where
$T\ge 1$ is the volume of users engaged in the game. Obviously,
considering all possible $\beta,$ the worst-case
upper bound of the PoA is infinity, since the PoA
tends to $\infty$ as
$\beta\to \infty,$ if $T=1.$ 

\Wu{Worst-case upper bounds} of the PoA \Wu{are} actually
not a valid evidence to show that \Wu{selfish user} behavior 
is bad, especially \Wu{not} for a large
total volume of users. For instance, the PoA of Pigou's game
actually tends to $1,$ as $T$ approaches infinity, for
each \Wu{fixed} $\beta\ge 0.$
This means that \Wu{selfish user} behavior \Wu{may} well guarantee social
welfare if \Wu{the} total user volume is large. 
Generally, the total travel demand in rush hours
is usually very large. Therefore, to comprehensively understand
selfish (routing) behavior,
a closer inspection \Wu{of} the value of the PoA
is still needed, especially for the case of heavy traffic, i.e., 
the case \Wuu{when} \Wu{the} total travel demand $T(d)=\sum_{k\in \K}d_k$ is large.

Recently, \Wu{several} studies have been done \Wu{in this direction,} 
see \cite{Colini2016}, \cite{Colini2017a},
\cite{Colini2017b}, and \cite{Wu2017Selfishness}. 
Colini et al. \cite{Colini2016} considered two special cases: (i) \Wu{games with a} single
OD pair, and (ii) \Wu{games with a} single OD pair with parallel feasible paths.
For case (i), they proved that if one of the feasible paths has a 
latency function that is bounded by a constant
from above, then the PoA will converge to 1 as the
total demand $T(d)$ tends to infinity. For case (ii), they proved that if the latency functions
$\tau_a(\cdot)$ are regularly varying \cite{Bingham1987Regular},
then the PoA will converge to 1 as \Wuu{the} total demand $T(d)$ tends
to infinity. 

Colini et al. \cite{Colini2017b} continued the study of 
\cite{Colini2016}. They investigated more general cases, i.e., \Wu{games with} multiple OD pairs. \Wu{Using the notion of regular
	variation \cite{Bingham1987Regular},} they proposed the concept of \Wu{gaugeable games that
	are \Wuu{defined} only for particular user volume vector
	sequences $\{d^{(n)}\}_{n\in \N}$
	fulfilling \Wuu{the} condition \Wuu{that} $
	\varliminf_{n\to\infty}\sum_{k\in \K_{tight}}\frac{d_k^{(n)}}{T(d^{(n)})},
	$ 
	see \cite{Colini2017b} or Subsection~\ref{sec:KnownResults}
	for details.
	}
 \Wu{With the technique of
 	regular variation \cite{Bingham1987Regular}, they proved} that the PoA of
gaugeable games converges to 1 as the total
demand $T(d)=\sum_{k\in \K} d_k$ tends to infinity.
However, \Wu{this convergence result only
	holds for particular user volume vector sequences, due
	to the sequence-specific nature of gaugeable games.
	See Subsection \ref{sec:KnownResults}
	for details \Wuu{about} the convergence results of gaugeable games.}


Colini
et al. \cite{Colini2017a} further continued
the study of \cite{Colini2017b}.
They \Wu{applied the technique of gaugeability
	\cite{Colini2017b} to NCGs with
	polynomial price functions.
Due to the sequence-specific nature of
 gaugeability \cite{Colini2017b},
they assumed a user volume vector sequence $\{d^{(n)}\}_{n\in \N}$ \Wuu{with} $\varliminf_{n\to \infty}\tfrac{d_k^{(n)}}{T(d^{(n)})}>0$
for each group $k\in \K.$}
 They proved that
if all $\tau_a(\cdot)$ are
polynomials, then \Wu{$\lim_{n\to\infty}\text{PoA}(d^{(n)})=1,$
provided that $\lim_{n\to\infty}T(d^{(n)})=\infty$.}
\Wu{Moreover},
Colini et al. \cite{Colini2017a} further extended the study
of \cite{Colini2017b} to light traffic, i.e., $T(d)\to 0$. 

All the results of \cite{Colini2016}, \cite{Colini2017a} and
\cite{Colini2017b} are derived \Wu{with} the technique of regular variation
\cite{Bingham1987Regular}. A different study was done by 
Wu et al. \cite{Wu2017Selfishness}. They
assumed \Wu{arbitrary} 
user volume vector sequence \Wu{$\{d^{(n)}\}_{n\in \N}$} \Wuu{and}
aimed to explore properties of {\em asymptotic
	well designed games}, \Wu{i.e., games} in which the PoA \Wu{tends} to
$1,$ as \Wu{the} total volume $T(d)$ approaches infinity.
They proposed the concept of
{\em scalable games}, and proved that 
all scalable games are asymptotically well designed. They also
proved that gaugeable games are special cases of scalable games \Wu{w.r.t. the particular user volume vector sequences
	assumed by gaugeability.}
Moreover, they \Wu{provided} examples that are scalable, but not
gaugeable. 

In addition,
 Wu et al. \cite{Wu2017Selfishness}
\Wu{made a detailed study of} the case that all $\tau_a(\cdot)$
are BPR-functions. They proved for this particular
case that an SO profile is an {\em $\epsilon$-approximate
NE profile} \cite{Roughgarden2000How} for
a small $\epsilon>0$, and 
the PoA equals $1+O\big(T(d)^{-\gamma}\big),$
where $\gamma$ is the common degree of the BPR-functions.
This proves a conjecture proposed by O'Hare et al. \cite{O2016Mechanisms}. \Wu{They} also proved for this particular case
that the cost of both, an NE-profile and
an SO-profile, can be \Wu{asymptotically} approximated by $L(\boldsymbol{d})\cdot
 T(d)^{\gamma},$ where
$L(\d)$ is a computable constant that only depends on 
distribution $\d:=\big(d_k/T(d)\big)_{k\in \K}$ of users
among groups,
when \Wu{the} total volume $T(d)$ is large enough.
However, Wu et al. \cite{Wu2017Selfishness} \Wu{still}
failed to show that NCGs with general polynomials
are asymptotically well designed.

In summary, \cite{Colini2016}, \cite{Colini2017a}, 
\cite{Colini2017b} and \cite{Wu2017Selfishness} 
definitely show that selfish behavior \Wu{need} not be bad
for the case of a large $T(d)$ \Wu{under certain conditions.} 
However, one important question \Wu{has remained} open, namely,
whether NCGs with \Wu{arbitrary} polynomial price functions $\tau_a(\cdot)$
are asymptotically well designed. Although \cite{Colini2017a} and \cite{Colini2017b} have preliminary \Wuu{results} towards this question,
\Wu{their results only partially answer this
	question due to the sequence-specific nature
	\Wuu{of} their study.}

\Wuu{Polynomial functions are so popular because} \Wu{they
usually serve} as a \Wu{first} prototype to understand quantitative 
relations between variables. Price functions $\tau_a(\cdot)$ 
are key components of an NCG \Wu{and model} the quantitative relations between resource
prices and demanded volumes. NCGs with polynomial
price functions $\tau_a(\cdot)$ are thus of great importance in practice.
The open question \Wuu{about the PoA
	for polynomials thus} 
concerns properties of the \Wu{selfish user} behavior in such games
for a large user volume $T(d)$
\Wuu{and is thus} of great \Wuu{interest} to our purpose
\Wu{of understanding heavy traffic}. 

\subsection{Our main result}

We will continue 
the study of \cite{Wu2017Selfishness}.
\Wu{However, to better understand  the state of
	the art, we will first \Wuu{discuss} the 
	notions of scalability \cite{Wu2017Selfishness} and 
	gaugeability \cite{Colini2017b}, and give a detailed
	description \Wuu{of} the known results from
	\cite{Wu2017Selfishness}, \cite{Colini2017a} and
	\cite{Colini2017b}, see Subsection \ref{sec:KnownResults}.
	}
	
	\Wu{We then apply the concepts
		of scalability and limit games stemming from \cite{Wu2017Selfishness}.
		We first prove that if the limit game
		exists, then an NCG is asymptotically well designed
		if and only if \Wuu{it} is scalable,
		see Theorem~\ref{theo:NegativeConditionforAWDG}.
		This deepens the knowledge about scalability and asymptotically well designed games.
		\Wuu{For an even deeper understanding},
		we adapt some algebraic ideas to our analysis
		and consider decompositions of NCGs.
		We prove that the class of asymptotically well designed
		games is actually closed under direct \Wuu{sums},
		see Corollary \ref{theo:DirectSum}.}
	\Wuu{This demonstrates \Wuu{in a certain sense the} extent 
		of the notion of asymptotically designed games.}
	
	\Wu{To obtain a general proof for
		the convergence of the PoA, we
		develop a new technique called {\em asymptotic decomposition.}
		This technique generalizes
		the idea of direct \Wuu{sums}, and is designed for
		\Wuu{handling general NCGs in the limit analysis.} 
		We are able to demonstrate its power
		by \Wuu{applying it to} 
		NCGs with arbitrary polynomial price functions
		and NCGs with regularly varying
		price functions, see Theorem~\ref{theo:SelfishMainTh},
		Theorem~\ref{theo:MainTheoremRegularVariation} and
		Theorem~\ref{theo:MainTheoremGaugeableExtension}.
		}
	
\Wu{With \Wuu{the} asymptotic decomposition,} we are able
to prove that \Wu{NCGs with arbitrary
	polynomial price functions are
	asymptotically well designed}, 
see Theorem \ref{theo:SelfishMainTh}.
\Wu{This \Wuu{
	completely solves the convergence of} the PoA of NCGs
	with polynomial price functions,
	and thus the aforementioned open question.}
\Wu{This result greatly extends the findings of \cite{Wu2017Selfishness},
\cite{Colini2016}, \cite{Colini2017b} and
\cite{Colini2017a} for road traffic, and 
deepens the understanding that selfishness 
need not be bad, and might be the best choice in a \Wuu{bad}
environment. Moreover, this result also indicates that
selfish routing is actually not the main cause of congestion, when
the total travel demand $T(d)$ is large. In particular, if
the total travel demand stays high, then we \Wuu{cannot}
significantly reduce the average travel latency by any road guidance
policies. }


\Wu{Theorem \ref{theo:SelfishMainTh}} also 
brings some insight \Wuu{into} free market economics. In 
\Wu{market} economics, resources correspond
to factors of production, \Wuu{groups} correspond
to \Wuu{sets} of suppliers manufacturing a 
particular type of product, and 
resource prices $\tau_a(\cdot)$
are the purchasing prices of those production
factors.  In a free market system,
the prices $\tau_a(\cdot)$ of production factors are completely determined by the demanded volumes, and are often assumed to be polynomial functions. \Wu{Theorem \ref{theo:SelfishMainTh} then shows}
that given the demand of each \Wuu{product type,} the free market \Wu{will} autonomously minimize
the average manufacturing cost, when the total number
of suppliers is large. 

\Wu{Asymptotic decomposition also applies to
	NCGs with price \Wuu{functions} of other types.
	To \Wuu{demonstrate} this, we also applied this
	technique to NCGs with regularly
	varying price functions.
	The result shows that
	these NCGs are in general also
	asymptotically well designed,
	see Theorem \ref{theo:MainTheoremRegularVariation}.
	In particular,
	with this technique, we
	are able to remove the sequence limitation
	for gaugeable games and generalize
	the main result Theorem 4.4 in \cite{Colini2017b}
	for gaugeability, see Theorem~\ref{theo:MainTheoremGaugeableExtension}.}

\Wu{Theorem~\ref{theo:SelfishMainTh},
	Theorem~\ref{theo:MainTheoremRegularVariation} and
	Theorem~\ref{theo:MainTheoremGaugeableExtension}
	definitely demonstrate
	the power of asymptotic decomposition.
	\Wuu{They assume an arbitrary user volume vector
		sequence, and thus the results \Wuu{hold globally}.}
	In particular, \Wuu{together they constitute} a general proof
	that selfishness need not be bad for NCGs.
	\Wuu{Their proofs} are direct and very elementary
	without using any heavy machinery,
	\Wuu{and} only use some basic properties
	of Nash equilibrium and system optimum 
	profiles, simple facts about asymptotic
	notation $O(\cdot),\Omega(\cdot),$
	etc,
	and a \Wuu{simple induction \Wuu{over} the group set $\K$}.}

\subsection{The structure of the paper}
The \Wuu{remainder} of the article is arranged as \Wu{follows:}
Section \ref{sec:Model} gives \Wu{a} detailed description of
the NCG model \Wu{that} we will study. Section \ref{sec:MainResults} \Wu{gives a detailed description
	on known results and} \Wuu{then}
reports our results.
Section \ref{sec:Conclusion} gives a brief summary of the whole
article. To improve readability, we \Wuu{move} the \Wu{elementary but long}
proofs of \Wuu{our} results \Wuu{to} the Appendix.

\section{The Model}
\label{sec:Model}
In our study, we follow the formulation of NCGs \Wu{in}
\cite{Wu2017Selfishness}. \Wu{This} formulation is slightly
different from the traditional model commonly used in
the literature, see, e.g., \cite{Nisan2007}. Traditionally,
a strategy is assumed to be a subset of resources. 
In our study, we employ a constant $r(a,s)\ge 0$ to 
refelct the relation between a resource $a\in A$
and a strategy $s\in \S.$ This \Wu{slightly generalizes our results.}

\Wu{An NCG is} represented by a tuple
\[
\Gamma=\big(\K, A, \S, (r(a,s))_{a\in A, s\in \S}, (\tau_a)_{a\in A},d\big),
\]
where:
\begin{itemize}
	\item $\K$ is a finite non-empty set of groups. We assume, w.l.o.g.,
	that $\K=\{1,\ldots,K\},$ i.e., there \Wu{are} 
	$K$ groups of users.
	
	\item $A$ is a finite non-empty set of resources that will be demanded by 
	users engaged in the game. 
	\item $\S=\bigcup_{k\in \K}\S_k$ is a finite non-empty set of available
	strategies. \Wu{Herein,} each $\S_k$ \Wu{contains} all strategies available
	to users in group $k$ for each $k\in \K.$ We assume that
	$\S_k\cap \S_{k'}=\emptyset,$ provided that $k\ne k',$
	for each $k,k'\in \K.$ 
	
	\item $r(a,s)\ge 0$ denotes the demanded volume of resource $a$
	by a user adopting strategy $s,$ for each $a\in A$ and 
	each $s\in \S.$
	\item $\tau_a:[0,+\infty)\mapsto [0,+\infty)$ denotes
	the {\em price function} of resource $a$ for each
	$a\in A.$ We assume that each $\tau_a(x)$ is a \Wu{continuous} function that is non-negative and non-decreasing for all $x\ge 0,$ for all $a\in A$.
	\item $d=(d_k)_{k\in \K}$ is a non-negative {\em user volume vector},
	where each component $d_k\ge 0$ represents the volume of users belonging
	to group $k\in \K.$
\end{itemize}
In our study, we assume further that for each group 
$k\in \K$ and each strategy $s\in \S_k,$
\begin{equation}\label{eq:NoFreeStrategies}
	\sum_{a\in \S_k} r(a,s)\cdot\tau_a(x)\not\equiv 0.
\end{equation}
Note that \eqref{eq:NoFreeStrategies} is a reasonable assumption.
Otherwise, there would be a group $k\in \K,$ \Wu{whose users}
can consume resources without paying any \Wu{price}. This
actually conflicts \Wu{with} the spirit of congestion games in practice.  

In an NCG, users usually want to adopt strategies
that \Wu{minimize} their own \Wuu{cost}. However, the cost of
a user does not only depend on his/her
choice, but
also \Wu{on the choices of} other users, i.e., 
the cost of a user is eventually determined by the strategy
profile formed by all users engaged in the game.
Herein,
a {\em feasible strategy profile} $f$
can be represented by a vector $f=(f_s)_{s\in \S},$
where:
\begin{itemize}
	\item[p1)] Each component $f_s\ge 0$ represents the total volume of 
	users adopting strategy $s,$ for each strategy $s\in \S.$ 
	\item[p2)] $\sum_{s\in \S_k}f_s=d_k,$ for each group $k\in \K,$ which indicates that every user must choose a strategy
	to follow.
\end{itemize}

Consider a feasible strategy profile $f=(f_s)_{s\in \S}.$
The {\em demanded volume (or consumed volume)} of each resource $a\in A$ \Wu{w.r.t} profile
$f,$ denoted by $f_a,$ can be computed \Wu{as}
\[
f_a=\sum_{s\in \S} r(a,s)\cdot f_s.
\]
Thus, the price of a resource 
$a\in A$ w.r.t. profile $f$ equals
$\tau_a(f_a).$ Therefore,
the cost of a strategy $s\in \S$ w.r.t. 
profile $f,$ denoted by $\tau_s(f),$ equals
\[
\tau_s(f)=\sum_{a\in A} r(a,s)\cdot \tau_a(f_a).
\]
Then, the {\em average cost} of users w.r.t.
profile $f$ equals
\[
C(f):=\frac{1}{T(d)}\cdot \sum_{s\in \S} f_s\cdot \tau_s(f)=
\frac{1}{T(d)}\cdot \sum_{a\in A} f_a\cdot \tau_a(f_a),
\]
where $T(d)=\sum_{k\in \K}d_k$ denotes the {\em total volume}
of users \Wu{in} the game.

The selfishness of users will autonomously lead their choices
to eventually form a feasible profile $\tilde{f}=\big(\tilde{f}_s\big)_{s\in \S}$ satisfying that
\begin{equation}\label{def:WE}
	\forall k\in \K \ \forall s,s'\in \S_k \Big(
	\tilde{f}_s>0\implies \tau_s(\title{f})\le 
	\tau_{s'}(\tilde{f})\Big),
\end{equation}
i.e., every user \Wu{chooses} a ``cheapest" strategy \Wu{he/she}
could follow w.r.t. profile $\tilde{f}$.
Such profiles \Wu{are called} Wardrop equilibria (WE, \cite{Wardrop1952ROAD}). Under our assumption \Wu{on} the
price functions
$\tau_a(\cdot),$ Wardrop equilibria coincide with
\Wu{the} {\em pure Nash equilibria} (NE). A feasible
strategy profile $f=(f_s)_{s\in \S}$ is said to be
at NE, if
\begin{equation}\label{def:NE}
	\forall k\in \K \ \forall s,s'\in \S_k\Bigg(
	f_s>0\!\implies\! \Big(\forall \epsilon (f_s>\epsilon>0)\!\implies\!\tau_s(f)\le \tau_{s'}(f^{1+\epsilon})\Big)\Bigg),
\end{equation}
where $f^{1+\epsilon}=(f_{s''}^{1+\epsilon})_{s''\in \S}$
is a feasible strategy profile that moves 
$\epsilon$ users from strategy $s$ to strategy $s',$ i.e.,
for each strategy $s''\in \S,$ 
\[
f^{1+\epsilon}_{s''}=
\begin{cases}
f_{s''}&\text{if } s''\notin\{s,s'\},\\
f_{s''}-\epsilon&\text{if } s''=s,\\
f_{s''}+\epsilon &\text{if } s''=s'.
\end{cases}
\]
In the sequel, we shall always put a tilde above a strategy
profile, if the strategy profile \Wu{is}
an NE profile (or, equivalently a Wardrop equilibrium).

An NE profile $\tilde{f}$ is \Wu{a macro} model for the selfish behavior
of users in practice. Under our assumption \Wu{on}
price functions $\tau_a(\cdot),$  all NE profiles $\tilde{f}$
have the same average cost, see e.g. \cite{Smith1979The} for a proof. Obviously, these profiles \Wu{are 
user ``optimal"} \eqref{def:WE},
 and stable \eqref{def:NE} to some extent.

Besides \Wu{NE} profiles, system optimum (SO) profiles are also of great \Wu{interest}, for the sake of achieving social welfare.
\Wu{Formally}, a feasible strategy profile $f^*=(f_s^*)_{s\in\S}$ is \Wu{an} SO profile if
it solves the following program:
\begin{equation}\label{def:SO}
	\begin{split}
	&\min\quad\quad C(f)\\
	&s.t.\\
	&\quad \sum_{s\in \S_k} f_s=d_k,\forall k\in \K,\\
	&\quad f_s\geq 0,\forall s\in \S.
	\end{split}
\end{equation}
In the sequel, we shall always \Wu{use} a star in the \Wu{superscript}
of a feasible profile \Wuu{to indicate that it} is an SO
profile.

In general, an NE profile \Wu{need not} be a solution
to the program \eqref{def:SO}. The PoA is
a popular index to show the extent to which
the \Wuu{selfish user behavior} \Wuu{destroys} 
social welfare in practice. \Wu{It} is a concept \Wu{stemming}
from \cite{Papadimitriou2001Algorithms}, \Wu{and}
can be defined as \Wu{follows}
\begin{equation}\label{eq:PoA}
	\text{PoA}:=\frac{C(\tilde{f})}{C(f^*)}
	=\frac{\sum_{s\in \S} \tilde{f}_s\cdot \tau_s(\tilde{f})}{\sum_{s\in \S} f^*_s\cdot \tau_s(f^*)},
\end{equation}
where $\tilde{f}$ is an NE profile, and
$f^*$ is an SO profile. 

As mentioned, we will investigate the limit
of the PoA \Wu{when} the total volume $T(d)=\sum_{k\in \K}d_k$ approaches infinity. Therefore, to avoid ambiguity, we shall denote
by PoA$(d)$ the corresponding PoA \Wu{for} user volume
vector $d=(d_k)_{k\in \K}$ in the sequel.
 
\section{Selfishness need not be bad: a general discussion}
\label{sec:MainResults}

In this Section, we consider the limit of the PoA under
our assumption of \Wu{continuous, non-decreasing and non-negative}
price functions $\tau_a(\cdot)$. \Wu{In particular, we
	will emphasize on \Wuu{the} polynomial functions and
	regularly varying functions that have been recently
	studied in \cite{Wu2017Selfishness}, \cite{Colini2016},
	\cite{Colini2017a} and \cite{Colini2017b}.}
To better understand our
result, we first introduce
some relevant concepts and results from \cite{Wu2017Selfishness} and \cite{Colini2017b}. 

\subsection{\Wu{Scalability and gaugeability}}
\label{sec:KnownResults}

NCGs are static models for decision-making behavior of selfish users (players)
in systems with limited resources.
\Wu{Designing an NCG such that the selfish choices of users
autonomously optimize social welfare is in general not easy,} see e.g.
\cite{Roughgarden2001Designing}. However, \Wu{such games exist,} see
\cite{Wu2017Selfishness}.
\begin{definition}[See \cite{Wu2017Selfishness}]\label{def:WDG}
	An NCG $\Gamma$ is said to be a {\em well designed game} (WDG), if  \text{PoA}$(d)=1$ for \Wu{each} given user volume vector $d=(d_k)_{k\in \K}$ with total user volume $T(d)>0.$
\end{definition}
Obviously, selfish behavior of users in a WDG
should be strongly favored, \Wu{as it leads} the underlying
system into a steady state with minimum average cost. 
\Wu{Readers may refer to \cite{Wu2017Selfishness} for examples
of WDGs.}

As mentioned, WDGs are often too restrictive \Wu{for designing them} 
in practice. \Wu{Nevertheless, an important goal in
NCGs concerns} how to effectively allocate
limited resources to a large volume of
users. Therefore, an alternative choice \Wu{to designing a WDG} is to design
an NCG that \Wu{will} approximate a WDG when the total user volume
$T(d)$ becomes large. This inspires the 
concept of \Wu{an} {\em asymptotically well designed game} (AWDG) \cite{Wu2017Selfishness}.
\begin{definition}[see \cite{Wu2017Selfishness}]\label{def:AWDG}
	An NCG $\Gamma$ is said to be an AWDG, if 
	the PoA$(d)$ converges to $1$ as 
	$T(d)$ approaches infinity. \Wu{For later use, we also denote the class of all AWDGs as AWDG.}
\end{definition}
Scalable games \Wu{introduced} by
Wu \Wu{et al.} \cite{Wu2017Selfishness} 
are examples of AWDGs. These games
require the existence of a well designed {\em limit} game
for each sequence 
\Wuu{$\{d^{(n)}\}_{n\in \N}$} of user volume vectors
with 
\[
\lim_{n\to \infty}T(d^{(n)})=\sum_{k\in \K}d_k^{(n)} =\infty,
\]
where $d_k^{(n)}$ is the $k$-th component
of $d^{(n)}=\big(d_k^{(n)}\big)_{k\in \K}$ \Wu{and denote} the user volume
in group $k$ for each $k\in \K$ \Wu{and each} $n\in \N.$ 
\begin{definition}\label{def:limitGame}
	Given a sequence \Wuu{$\big\{d^{(n)}=(d_k^{(n)})_{k\in \K}\big\}_{n\in \N}$} of 
	user volume vectors with $\lim_{n\to \infty} T(d^{(n)})=\lim_{n\to \infty} \sum_{k\in \K} d_k^{(n)}=\infty,$
	an NCG 
	\[
	\Gamma^{\infty}=\big(\K^{\infty}, A, \S^{\infty}, (r(a,s))_{a\in A, s\in \S^{\infty}}, (\tau^{\infty}_a)_{a\in A},\d\big)
	\]
	is called a {\em limit} of the NCG 
	\[
\Gamma =	\Big(\K, A, \S, (r(a,s))_{a\in A, s\in \S}, (\tau_a)_{a\in A},d\Big)
	\]
	w.r.t. the user volume vector sequence \Wuu{$\{d^{(n)}\}_{n\in \N}$,} if
	there exists an infinite subsequence 
	$\{n_i\}_{i\in \N}$ such that:
	\begin{itemize}
		\item[L1)]  
		For each $k\in \K,$
		\[
		\lim_{i\to \infty}\frac{d_k^{(n_i)}}{T(d^{(n_i)})}=\d_k,
		\]
		where $\d_k\in [0,1]$ is the limit volume of group $k.$
		\item[L2)] There exists a sequence $\{g_i\}_{i\in \N}$ of positive scaling factors, such that
		\[
		\lim_{i\to \infty} \frac{\tau_a\big(T(d^{(n_i)})x\big)}{g_i}
		=\tau^{\infty}_a(x)
		\] 
		for all $x>0,$ where $\tau_a^{\infty}(\cdot)$ is the \Wu{{\em limit price}}
		of resource $a,$ for each $a\in A.$
		\item[$L3)$] Each limit price function 
		$\tau^{\infty}_a(\cdot)$ is {\em either} a 
		continuous and non-decreasing function,
		{\em or} $\tau^{\infty}_a(x)\equiv \infty$ for
		all $x>0,$
		for each $a\in A.$ And for each group $k\in \K,$
		\begin{itemize}
			\item[$L3.1)$] {\em either} group $k$ is {\em negligible} w.r.t. 
			scaling factors $\{g_i\}_{i\in \N},$ i.e.,
			\[
			\lim_{i\to \infty} \frac{\sum_{s\in \S_k}f^{(n_i)}_s\cdot\tau_s(f^{(n_i)})}{
				T(d^{(n_i)})\cdot g_i}=0,
			\]
			where each $f^{(n_i)}$ is an arbitrary feasible
			strategy profile of $\Gamma$ \Wu{w.r.t.} user volume
			vector $d^{(n_i)}$ \Wu{for each $i\in \N,$}
			\item[$L3.2)$] {\em or}
			there exists a strategy $s\in \S_k$ that is {\em tight} \Wu{w.r.t.}
			scaling factors $\{g_i\}_{i\in \N},$ i.e.,
			$\tau^{\infty}_a(x)\not\equiv\infty$ for $x>0,$ for each resource
			$a\in A$ with $r(a,s)>0.$
		\end{itemize}
		\item[L4)] Put \begin{equation*}
		\begin{split}
		S^{\infty}&:=\big\{s\in \S: s \text{ is tight w.r.t. }\{g_i\}_{i\in \N}\big\},\\
		\K^{\infty}&:=\{k\in \K: k
		\text{ is not negligible, or } \S_k\cap \S^{\infty}\ne \emptyset\}.
		\end{split}
		\end{equation*} 
		The cost of NE profiles of the limit game $\Gamma^{\infty}$ is positive
		\Wu{w.r.t. the} limit user volume vector $(\d_k)_{k\in\K^{\infty}}.$
	\end{itemize}
\end{definition}
\begin{definition}[See also \cite{Wu2017Selfishness}]\label{def:scalablegame}
	An NCG $\Gamma$ is called a {\em scalable game} if, 
	for each user volume sequence $\{d^{(n)}\}_{n\in \N}$ with
	total volume $T(d^{(n)})\to \infty$ as $n\to \infty,$
	there is a well designed game $\Gamma^{\infty}$ that is a limit of $\Gamma$
	w.r.t. the user volume sequence $\{d^{(n)}\}_{n\in \N}.$ 
\end{definition}
\Wu{Conditions $L3)$ and $L4)$} of Definition 
\ref{def:limitGame} are imposed to guarantee that
the cost of NE profiles in the limit game $\Gamma^{\infty}$ will be neither unbounded,
nor \Wu{vanishing w.r.t.} the scaling factors $\{g_i\}_{i\in \N}.$
Therefore, the scaling factors $\{g_{i}\}_{i\in \N}$ should be carefully
chosen with reference to
the sequence $\{T(d^{(n)})\}_{n\in \N}$ of user volume vectors
and price functions $\tau_a(\cdot)$, \Wu{so as to fulfill these conditions.}

\Wu{
Note that limit games only consider tight strategies
$s\in \S^{\infty}$ and ``non-negligible" groups
$k\in \K^{\infty}$.
This is actually reasonable, since users will asymptotically adopt only tight strategies w.r.t. both, NE profiles and SO profiles, see the proofs of Lemma \ref{theo:LimitGame_Wu2017} and Theorem \ref{theo:NegativeConditionforAWDG} in the Appendix.}

\Wuu{Condition $L2)$} of Definition \ref{def:limitGame} can be further relaxed. \Wu{Note that for
each resource $a\in A,$ only those $x>0$ make sense that
are possible consumed volumes of resource $a$ w.r.t. the 
limit $\Gamma^{\infty}.$} 
We
can therefore
require only that \Wu{the} limit price functions $\tau_a^{\infty}(x)$
exist for $x\in I_a\cap (0,\infty),$ where 
$I_a$ is a non-empty set \Wuu{containing} all the possible consumed volumes of resource
$a$ \Wu{w.r.t.} the limit game $\Gamma^{\infty},$ 
for each $a\in A.$ \Wuu{See \cite{Wu2017Selfishness} for details.}

Note that it is possible that \Wu{there are several limit games for a given user volume sequence
$\{d^{(n)}\}_{n\in \N},$ see
the following example.}
\begin{figure}[!htb]
	\centering
	\includegraphics[scale=0.15]{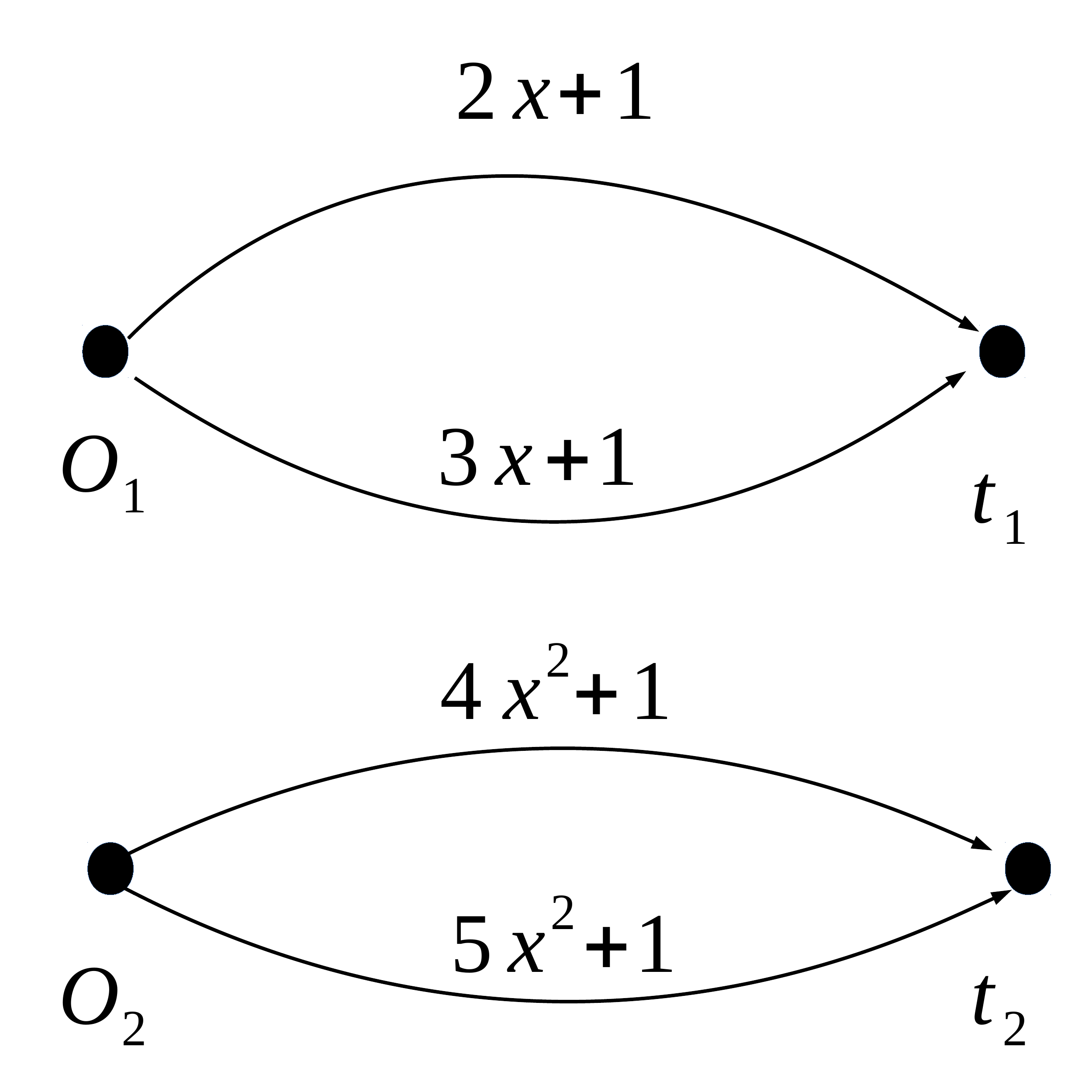}
	\caption{An NCG with double limits}
	\label{fig:doublelimits}
\end{figure}
\begin{example}\label{exam:doublelimts}
	Consider \Wu{the NCG $\Gamma$ shown} in Figure \ref{fig:doublelimits}.
	The game has two groups (OD pairs), each of which has
	two strategies (two parallel paths). The price functions
	are listed above the corresponding paths. Let 
	$\{d^{(n)}=(d_1^{(n)},d_2^{(n)})\}_{n\in \N}$ be a user volume \Wuu{vector} sequence such that
	\begin{equation*}
		d_1^{(n)}=
		\begin{cases}
		0 &\text{if } n\text{ is odd},\\
		n &\text{if } n\text{ is even},
		\end{cases}
		\quad
		d_2^{(n)}=
		\begin{cases}
		n &\text{if } n\text{ is odd},\\
		0 &\text{if } n\text{ is even},
		\end{cases}
	\end{equation*}
	where $d_1^{(n)}$ denotes user volume of the upper group, 
	and $d_2^{(n)}$ denotes user volume of the lower group, for
	each $n\in \N.$ Obviously, \Wu{w.r.t.} subsequence $\{2i\}_{i\in \N}$
	and scaling factor sequence $\{g_{i}=i\}_{i\in \N},$ the
	lower group is negligible since $d^{(2i)}_2\equiv 0$ for
	all $i\in \N.$ Moreover, limit price functions
	\[
	\lim_{i\to\infty}\frac{2\cdot (2i\cdot x)+1}{2i}=2x
	\quad \text{and}\quad \lim_{i\to\infty}\frac{3\cdot (2i\cdot x)+1}{2i}=3x
	\]
	\Wu{exist and are continuous and non-decreasing} for all $x>0.$ 
	
	Furthermore, the NCG game $\Gamma_1^{\infty}$ consisting of these two limit price functions
	and the upper group has positive average cost for NE profiles \Wu{w.r.t.}
	the limit user volume vector $(\d_1),$ where $\d_1=1.$
	\Wu{Thus,} $\Gamma_1^{\infty}$ is a limit game of $\Gamma$
	\Wu{w.r.t.} the given user volume vector sequence $\{d^{(n)}\}_{n\in \N}.$
	
	Similarly, considering the subsequence $\{2i+1\}_{i\in \N}$ and
	scaling factor sequence $\{g_i=(2i+1)^2\}_{i\in \N},$ we can define
	another limit game $\Gamma_2^{\infty}$ 
	\Wu{consisting} of the lower group and two price functions
	$4x, 5x,$ respectively.
	
	Obviously, these two limit games $\Gamma_1^{\infty}$
	and $\Gamma_2^{\infty}$ are different. However, both of them are
	limits of the given NCG $\Gamma$ \Wu{w.r.t.} the user volume
	sequence $\{d^{(n)}\}_{n\in \N}.$
\end{example}

Wu et al. \cite{Wu2017Selfishness} 
proved that for an NCG $\Gamma$ and
a given user volume vector $\{d^{(n)}\}_{n\in \N}$
with total volume $T(d^{(n)})\to\infty$
as $n\to \infty,$ if $\Gamma^{\infty}$ is the
limit of $\Gamma$ \Wu{for} an infinite subsequence 
$\{n_i\}_{i\in \N}$ and a scaling factor sequence
	$\{g_i\}_{i\in \N},$ 
	then the average cost of NE profiles normalized by
the scaling factor sequence $\{g_{i}\}_{i\in \N}$ \Wu{converges} to the total cost of NE profiles of $\Gamma^{\infty}.$
\begin{lma}\label{theo:LimitGame_Wu2017}
	Consider an NCG 
	\[
	\Gamma =	\Big(\K, A, \S, (r(a,s))_{a\in A, s\in \S}, (\tau_a)_{a\in A},d\Big),
	\]
	in which \Wu{all} price functions $\tau_a(\cdot)$ are non-negative, non-decreasing and 
	continuous.  Let
	$\{d^{(n)}\}_{n\in \N}$ be an \Wu{arbitrary} sequence of user volume vectors such that
	the total volume $T(d^{(n)})=\sum_{k\in \K}d_k^{(n)}
	\to \infty$ as $n\to \infty,$ where
	each $d^{(n)}=(d_k^{(n)})_{k\in \K}$
	for each $n\in \N.$ Let $\tilde{f}^{(n)}=\big(f_s^{(n)}\big)_{s\in \S}$ be an NE profile of
	$\Gamma$ \Wu{for the} user volume vector
	$d^{(n)},$ for each $n\in \N.$
	If $\Gamma$ has a limit 
	\[
	\Gamma^{\infty}=\big(\K^{\infty}, A, \S^{\infty}, (r(a,s))_{a\in A, s\in \S^{\infty}}, (\tau^{\infty}_a)_{a\in A},\d\big)
	\]
	\Wu{for} the user volume vector
	sequence $\{d^{(n)}\}_{n\in \N},$
	then there exists an infinite
	subsequence $\{n_i\}_{i\in \N}$
	and a sequence $\{g_{i}\}_{i\in \N}$
	of positive numbers such that
	\[
	\lim_{i\to\infty} \frac{C(\tilde{f}^{(n_i)})}{g_i}
	=\sum_{s\in \S^{\infty}} \tilde{\f}_s^{\infty}\cdot\tau_s^\infty(\tilde{\f}^{\infty}),
	\]
	where $\tilde{\f}^{\infty}=(\tilde{\f}_s)_{s\in \S^{\infty}}$ is some NE profile of
	the limit game $\Gamma^{\infty}.$
\end{lma}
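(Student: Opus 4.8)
The plan is to pass the normalized Nash equilibria of $\Gamma$ to the limit and to identify the limit with a Nash (Wardrop) equilibrium of $\Gamma^\infty$. Fix the infinite subsequence $\{n_i\}_{i\in\N}$ and the positive scaling factors $\{g_i\}_{i\in\N}$ that realize $\Gamma^\infty$ in Definition~\ref{def:limitGame}, and abbreviate $T_i:=T(d^{(n_i)})$. First I would normalize each equilibrium by setting $\bar f^{(n_i)}_s:=f^{(n_i)}_s/T_i$, so that $\sum_{s\in\S_k}\bar f^{(n_i)}_s=d_k^{(n_i)}/T_i\le 1$ and every component lies in $[0,1]$. Because the normalized profiles live in a compact set, I pass to a further subsequence (kept under the same name) along which $\bar f^{(n_i)}_s\to\bar f^\infty_s$ for every $s\in\S$, and hence also $\bar f^{(n_i)}_a\to\bar f^\infty_a$ for the induced resource loads $\bar f^{(n_i)}_a:=\sum_{s}r(a,s)\bar f^{(n_i)}_s$.

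The computation starts from the resource form of the cost, which after normalization reads
\[
\frac{C(\tilde f^{(n_i)})}{g_i}=\sum_{a\in A}\bar f^{(n_i)}_a\cdot\frac{\tau_a\big(T_i\,\bar f^{(n_i)}_a\big)}{g_i}.
\]
For a tight resource $a$ (one with $\tau^\infty_a\not\equiv\infty$) I would use the monotonicity of $\tau_a$ to squeeze $\tau_a\big(T_i(\bar f^\infty_a-\delta)\big)\le\tau_a\big(T_i\bar f^{(n_i)}_a\big)\le\tau_a\big(T_i(\bar f^\infty_a+\delta)\big)$, divide by $g_i$, invoke condition $L2)$ and the continuity of $\tau^\infty_a$ from $L3)$, and let $\delta\to0$ to conclude $\tau_a\big(T_i\bar f^{(n_i)}_a\big)/g_i\to\tau^\infty_a(\bar f^\infty_a)$; the corresponding summand then tends to $\bar f^\infty_a\,\tau^\infty_a(\bar f^\infty_a)$. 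The delicate point, and the main obstacle, is the non-tight resources together with the negligible groups. Here I would split the cost by groups and argue via the equilibrium characterization \eqref{def:WE} (recalling that WE and NE coincide under our continuity and monotonicity assumptions): the negligible groups contribute nothing by $L3.1)$, while in a non-negligible group the tight strategy guaranteed by $L3.2)$ has, after scaling by $g_i$, a bounded cost. If a non-tight strategy $s$ carried non-vanishing mass $\bar f^\infty_s>0$, then one of its resources $a$ with $r(a,s)>0$ and $\tau^\infty_a\equiv\infty$ would satisfy $\bar f^{(n_i)}_a\ge r(a,s)\bar f^{(n_i)}_s$ bounded away from $0$, forcing $\tau_s(\tilde f^{(n_i)})/g_i\to\infty$ and contradicting the equal-cost equilibrium condition; and when $s$ is used its cost equals the bounded minimum, so $\bar f^{(n_i)}_s\,\tau_s(\tilde f^{(n_i)})/g_i\to0$. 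Hence every non-tight summand vanishes in the limit.

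It remains to verify that the limit profile $\tilde{\f}^\infty:=(\bar f^\infty_s)_{s\in\S^\infty}$ is an equilibrium of $\Gamma^\infty$. Feasibility follows from $L1)$ together with the vanishing of the non-tight mass established above: for a non-negligible $k$ one gets $\sum_{s\in\S_k\cap\S^\infty}\bar f^\infty_s=\lim_{i\to\infty} d^{(n_i)}_k/T_i=\d_k$. The equilibrium inequalities transfer by writing \eqref{def:WE} for $\Gamma$ at the finite stage, namely $\tau_s(\tilde f^{(n_i)})\le\tau_{s'}(\tilde f^{(n_i)})$ whenever $\bar f^{(n_i)}_s>0$, dividing by $g_i$, and passing to the limit using the termwise convergence of the previous paragraph; for tight $s'$ this yields $\tau_s^{\infty}(\tilde{\f}^\infty)\le\tau_{s'}^{\infty}(\tilde{\f}^\infty)$, and for non-tight $s'$ the inequality is vacuous. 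Thus $\tilde{\f}^\infty$ satisfies the equilibrium condition of $\Gamma^\infty$, and condition $L4)$ guarantees that its cost is positive, so the limit is non-degenerate. Finally, combining the surviving summands and rewriting the resource sum as a strategy sum through the identity $\sum_{a}f_a\tau_a(f_a)=\sum_{s}f_s\tau_s(f)$ gives
\[
\lim_{i\to\infty}\frac{C(\tilde f^{(n_i)})}{g_i}=\sum_{s\in\S^\infty}\tilde{\f}^\infty_s\cdot\tau_s^{\infty}(\tilde{\f}^\infty),
\]
which is the assertion. I expect the concentration-of-mass step in the second paragraph---showing that non-tight strategies and negligible groups do not contribute---to be the genuine work, while the remaining steps are compactness and a routine monotone squeeze.
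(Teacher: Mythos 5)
Your proposal is correct and follows essentially the same route as the paper's proof: compactness to extract convergent normalized profiles, a monotone squeeze via $L2)$--$L3)$ for tight strategies (the paper's Fact~\ref{theo:StrategyPriceConvergence}), and the Wardrop inequality against the tight strategy guaranteed by $L3.2)$ to show that non-tight strategies carry vanishing mass and vanishing scaled cost (the paper's Fact~\ref{theo:StrategyPriceLimit_NonTight}), with negligible groups dismissed by $L3.1)$ and the limit profile identified as an NE of $\Gamma^{\infty}$ by passing the equilibrium inequalities to the limit. The only (harmless) deviations are that you start from the resource form of the cost before reverting to the group/strategy decomposition for the non-tight part, and that your ``bounded scaled cost times vanishing mass'' argument replaces the paper's slightly more roundabout proof by contradiction of Fact~\ref{theo:StrategyPriceLimit_NonTight}~$2)$.
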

\begin{proof}
	See \Wuu{the proof of Theorem 3.2 in \cite{Wu2017Selfishness}, or the appendix for an alternative proof.}
\end{proof}

\Wu{Using} Lemma \ref{theo:LimitGame_Wu2017}, 
Wu et al. \cite{Wu2017Selfishness} then proved that
all scalable games are asymptotically well designed.
In that proof, the condition that the limit game
is well designed \Wu{plays} a pivotal role, \Wu{which actually implies that the average costs of NE profiles is
asymptotically not larger than the average costs
of SO profiles. We summarize this in Lemma \ref{theo:MainResult_Wu2017}.}
\begin{lma}\label{theo:MainResult_Wu2017}
	Every scalable game is an AWDG.
\end{lma}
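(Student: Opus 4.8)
The goal is to prove Lemma~\ref{theo:MainResult_Wu2017}: every scalable game is an AWDG. My plan is to argue by contradiction, exploiting the fact that scalability guarantees a \emph{well designed} limit game along \emph{every} volume sequence tending to infinity. Suppose $\Gamma$ is scalable but not an AWDG. Then $\text{PoA}(d)$ does not converge to $1$ as $T(d)\to\infty$, so there exist $\varepsilon>0$ and a sequence $\{d^{(n)}\}_{n\in\N}$ with $T(d^{(n)})\to\infty$ such that $\text{PoA}(d^{(n)})\ge 1+\varepsilon$ for all $n$ (recall $\text{PoA}\ge 1$ always, so the only way to fail convergence is to stay bounded away from $1$). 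The plan is to show this is incompatible with the existence of a well designed limit game along this particular sequence.

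First I would invoke scalability on this sequence to obtain a well designed limit game $\Gamma^{\infty}$, together with the subsequence $\{n_i\}_{i\in\N}$ and scaling factors $\{g_i\}_{i\in\N}$ from Definition~\ref{def:limitGame}. The key analytic tool is Lemma~\ref{theo:LimitGame_Wu2017}, which I would apply \emph{twice}: once to the NE profiles $\tilde{f}^{(n_i)}$ to get
\[
\lim_{i\to\infty}\frac{C(\tilde{f}^{(n_i)})}{g_i}=\sum_{s\in\S^{\infty}}\tilde{\f}_s^{\infty}\cdot\tau_s^{\infty}(\tilde{\f}^{\infty}),
\]
the cost of an NE profile of $\Gamma^{\infty}$. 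The main conceptual step, and the place I expect the real work to lie, is establishing the analogous convergence for the SO cost: I would argue that $C(f^{*(n_i)})/g_i$ converges (along a further subsequence) to the cost of some feasible profile of $\Gamma^{\infty}$, and in fact to its SO cost. This direction is not literally the statement of Lemma~\ref{theo:LimitGame_Wu2017} (which is phrased for NE profiles), so I would need either a matching lemma for SO profiles or a direct lower/upper bound argument: the upper bound on the limiting SO cost comes from scaling down the SO profile of $\Gamma^{\infty}$ to feasible profiles of $\Gamma$, and the lower bound from the fact that any limit of the rescaled $\Gamma$-SO-profiles is feasible in $\Gamma^{\infty}$ and hence costs at least the $\Gamma^{\infty}$-SO cost.

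Once both limits are identified, the conclusion is immediate from the hypothesis that $\Gamma^{\infty}$ is well designed. Since $\Gamma^{\infty}$ is a WDG, its NE cost equals its SO cost, both positive by condition $L4)$. Therefore
\[
\lim_{i\to\infty}\text{PoA}(d^{(n_i)})=\lim_{i\to\infty}\frac{C(\tilde{f}^{(n_i)})/g_i}{C(f^{*(n_i)})/g_i}=\frac{\text{NE cost of }\Gamma^{\infty}}{\text{SO cost of }\Gamma^{\infty}}=1,
\]
where the positivity of the denominator (from $L4)$) legitimizes dividing the two limits. This contradicts $\text{PoA}(d^{(n_i)})\ge 1+\varepsilon$, completing the argument. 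The hardest and most delicate part is thus the SO-side convergence: I must handle the negligible groups and non-tight strategies carefully, verifying via conditions $L3)$ and $L3.1)$ that the contributions of negligible groups vanish after rescaling by $g_i$, so that only the tight strategies $s\in\S^{\infty}$ and non-negligible groups $k\in\K^{\infty}$ survive in the limit for both the NE and the SO profiles. This is exactly where the careful choice of scaling factors and the boundedness/non-vanishing guarantees of Definition~\ref{def:limitGame} do their work.
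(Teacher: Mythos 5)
Your proposal is correct and follows essentially the same route as the paper, which defers the details to Theorem~3.2 of \cite{Wu2017Selfishness} but describes exactly this argument (Lemma~\ref{theo:LimitGame_Wu2017} applied along the subsequence, plus the well-designedness of the limit game forcing the rescaled NE and SO costs to coincide) and carries out the delicate SO-side step you flag --- extending Fact~\ref{theo:StrategyPriceLimit_NonTight} to SO profiles and showing the limit of the rescaled SO profiles is feasible (indeed optimal) in $\Gamma^{\infty}$ --- in the appendix proof of Theorem~\ref{theo:NegativeConditionforAWDG}. Your identification of the SO-profile convergence as the genuinely new work, and your use of $L4)$ to justify dividing the limits, match the paper's treatment.
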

\begin{proof}
	See \Wuu{the proof of Theorem 3.2 in \cite{Wu2017Selfishness} for details.}
\end{proof}

Theorem \ref{theo:NegativeConditionforAWDG} \Wu{below} continues
the study of Wu et al. \cite{Wu2017Selfishness}.
\Wu{It states} that scalable games and AWDGs actually coincide \Wu{when they have a limit game.}
\Wu{Moreover, we showed in the proof of Theorem \ref{theo:NegativeConditionforAWDG}
that users will asymptotically 
adopt only tight strategies w.r.t. SO profiles, and
Lemma \ref{theo:LimitGame_Wu2017}
also applies to SO profiles. Therefore, it
is reasonable to consider only tight strategies \Wuu{in limit games.}}
\begin{thm}\label{theo:NegativeConditionforAWDG}
	Consider an NCG $\Gamma$ and a user volume
	vector $\{d^{(n)}\}_{n\in \N}$ with the total volume $T(d^{(n)})\to \infty$
	as $n\to \infty.$ If $\Gamma$ has a limit game $\Gamma^{\infty}$ \Wu{w.r.t.}
	the given user volume vector, and the limit game  $\Gamma^{\infty}$
	is not well designed, then $\Gamma$ is not an AWDG.
\end{thm}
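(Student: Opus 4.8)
First I would invoke Lemma~\ref{theo:LimitGame_Wu2017}, which, for the given sequence and its limit game $\Gamma^{\infty}$, produces an infinite subsequence $\{n_i\}$ and positive scaling factors $\{g_i\}$ such that $C(\tilde f^{(n_i)})/g_i\to c_{NE}$, the average cost of an NE profile of $\Gamma^{\infty}$ at its limit volume $\d$. The overall plan is then to show that the price of anarchy converges along this subsequence to the price of anarchy of $\Gamma^{\infty}$ at $\d$. Indeed, writing \eqref{eq:PoA} for the volume $d^{(n_i)}$ and dividing numerator and denominator by $g_i$ gives $\text{PoA}(d^{(n_i)})=\big(C(\tilde f^{(n_i)})/g_i\big)\big/\big(C(f^{*(n_i)})/g_i\big)$, so it remains to establish the companion statement $C(f^{*(n_i)})/g_i\to c_{SO}$ for system optima, where $c_{SO}$ is the minimum average cost of $\Gamma^{\infty}$ at $\d$. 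Most of the work goes into this system-optimum analogue of Lemma~\ref{theo:LimitGame_Wu2017}.

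The convergence $C(f^{*(n_i)})/g_i\to c_{SO}$ I would split into two estimates. For the upper bound, lift an SO profile $\f^{*\infty}$ of $\Gamma^{\infty}$ to a feasible profile of $\Gamma$ at $d^{(n_i)}$ by scaling its strategy volumes by $T(d^{(n_i)})$ and repairing the $o(T(d^{(n_i)}))$ volume mismatch, together with the mass of negligible groups, on tight strategies; condition L2 of Definition~\ref{def:limitGame} then forces the normalized cost of this lift to converge to $c_{SO}$, and optimality of $f^{*(n_i)}$ yields $\limsup_i C(f^{*(n_i)})/g_i\le c_{SO}$. For the lower bound, normalize the restriction of $f^{*(n_i)}$ to $\S^{\infty}$ by $T(d^{(n_i)})$ and extract a convergent subsequence with limit $\phi$; dropping the non-negative cost carried by non-tight strategies and applying L2 gives $\liminf_i C(f^{*(n_i)})/g_i\ge\sum_{s\in\S^{\infty}}\phi_s\,\tau^{\infty}_s(\phi)\ge c_{SO}$, the last inequality because $\phi$ is feasible for $\Gamma^{\infty}$ and $c_{SO}$ is the minimum. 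The delicate ingredient, and the one the excerpt flags, is exactly that an SO profile asymptotically routes all but a vanishing fraction of its volume on tight strategies: any non-tight strategy carries a resource with $\tau^{\infty}_a\equiv\infty$, so placing $\Omega(T(d^{(n_i)}))$ volume on it would cost $\omega(g_i)$ and beat the $O(g_i)$ lift above, contradicting optimality. This concentration is what guarantees that the limiting tight-strategy volumes of each group $k$ still sum to $\d_k$, i.e. that $\phi$ is genuinely feasible for $\Gamma^{\infty}$. Combining the two estimates forces $C(f^{*(n_i)})/g_i\to c_{SO}$.

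Putting the pieces together, and using $c_{SO}>0$ (a consequence of the positivity of the NE cost required by condition L4, which rules out a free routing in the limit), I obtain $\text{PoA}(d^{(n_i)})\to c_{NE}/c_{SO}=\text{PoA}_{\Gamma^{\infty}}(\d)$. Since an NE cost never undercuts the SO cost this limit is at least $1$, and if it strictly exceeds $1$ then the subsequence $\{d^{(n_i)}\}$ itself witnesses that $\Gamma$ is not an AWDG. The main obstacle is precisely this last strict inequality: Definition~\ref{def:WDG} calls $\Gamma^{\infty}$ well designed only when its PoA equals $1$ at \emph{every} volume, so ``not well designed'' a priori yields only $\text{PoA}_{\Gamma^{\infty}}(v)>1$ at some volume $v$, not necessarily at the specific limit volume $\d$. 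I would close this gap by exploiting that the limit price functions produced by the scaling in L2 are homogeneous of a common degree for the function classes of interest (regularly varying, hence polynomial), which makes $\text{PoA}_{\Gamma^{\infty}}$ scale-invariant and therefore equal to $1$ at $\d$ if and only if $\Gamma^{\infty}$ is well designed; in the absence of homogeneity one would instead have to manufacture a second volume sequence for $\Gamma$ whose limit game is $\Gamma^{\infty}$ realized at a high-PoA volume $v$, and arranging that both the limit game and its volume emerge exactly as prescribed is the genuinely technical part.
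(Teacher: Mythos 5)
Your proposal follows essentially the same route as the paper's proof: Lemma~\ref{theo:LimitGame_Wu2017} handles the NE side, and the bulk of the work is the SO analogue $C(f^{*(n_i)})/g_i\to c_{SO}$, which the paper establishes with exactly the ingredients you describe --- the limit volumes of non-tight strategies vanish, the cost carried by non-tight strategies is asymptotically negligible (the paper proves this by explicitly rerouting the non-tight volume onto a tight strategy of the same group and contradicting optimality, which is your ``lift and repair'' in contrapositive form), and the limit profile $\f^{*,\infty}$ is system-optimal for $\Gamma^{\infty}$ because arbitrary feasible profiles of $\Gamma^{\infty}$ can be lifted back to feasible profiles of $\Gamma$ and compared. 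Your two-sided squeeze and the paper's argument are the same proof organized slightly differently.

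The one place you part company with the paper is the final step, and there your proposed repair does not work as stated. The paper simply reads the hypothesis ``$\Gamma^{\infty}$ is not well designed'' as asserting that the NE cost strictly exceeds the SO cost at the volume $\d$ that is part of the limit game's data, i.e.\ $\mathrm{PoA}_{\Gamma^{\infty}}(\d)>1$, and concludes at once. You are right that Definition~\ref{def:WDG} literally quantifies over all volume vectors, so under the literal reading there is daylight between ``not well designed'' and ``$\mathrm{PoA}_{\Gamma^{\infty}}(\d)>1$''. But homogeneity of the limit prices only makes $\mathrm{PoA}_{\Gamma^{\infty}}$ invariant under $v\mapsto\lambda v$; a witness $v$ of ``not well designed'' may distribute its volume among the groups of $\K^{\infty}$ in proportions different from $\d$, and scale-invariance says nothing about those, so your ``if and only if'' claim is false and the gap you identified would remain open. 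The workable (and clearly intended) reading is the paper's: the hypothesis refers to the limit game together with its designated limit volume $\d$, under which your first two paragraphs already complete the proof.
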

\begin{proof}
	See the Appendix.
\end{proof}

Specific \Wuu{for} polynomial price functions, \Wu{Theorem
3.2}
from Wu et al. \cite{Wu2017Selfishness} \Wu{directly
	yields} that
NCGs with polynomial price functions $\tau_a(\cdot)$
of the same degree are asymptotically well designed.
In \Wuu{that case}, we can take scaling factors $g_n=T(d^{(n)})^{\gamma}$
for each given user volume vector sequence $\{d^{(n)}\}_{n\in \N}$
with $T(d^{(n)})\to \infty$ as $n\to \infty,$ where
$\gamma\ge 0$ is the common degree \Wu{of all polynomials}. Then, the corresponding
limit game is well designed. 

\Wu{To show this for general polynomial price functions}
$\tau_a(\cdot)$ \Wuu{with different degrees}, we \Wu{will use} 
some helpful notations from \cite{Colini2017a}. 
For each resource $a\in A,$ let $\rho_a(\cdot)$ be
the degree of polynomial $\tau_a(\cdot).$ Put
$\rho_s=\max\{\rho_a: r(a,s)>0, a\in A\}$ and
$\rho_k=\min\{\rho_s: s\in \S_k\}$ for each $s\in \S_k$
and $k\in \K.$ \Wu{If $\rho_k=\rho_l$
for all $k,l\in\K,$ then the above argument shows} that 
the underlying NCG is scalable, and therefore asymptotically well designed.
\Wu{We summarize this in Corollary \ref{theo:SameDegreeScalable}.}
\begin{corollary}\label{theo:SameDegreeScalable}
	Consider an NCG $\Gamma$ with polynomial price functions
	$\tau_a(\cdot).$ If $\rho_k=\rho_l$
	for all $k,l\in\K,$ then $\Gamma$ is asymptotically well designed.
\end{corollary}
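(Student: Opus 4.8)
The plan is to show that $\Gamma$ is \emph{scalable} (Definition~\ref{def:scalablegame}) and then invoke Lemma~\ref{theo:MainResult_Wu2017}. Write $\gamma$ for the common value of the $\rho_k$, $k\in\K$. I fix an arbitrary user volume vector sequence $\{d^{(n)}\}_{n\in\N}$ with $T(d^{(n)})\to\infty$, pass by compactness of $[0,1]^{\K}$ to a subsequence $\{n_i\}_{i\in\N}$ along which $d_k^{(n_i)}/T(d^{(n_i)})\to\d_k$ exists for every $k\in\K$, and choose the scaling factors $g_i:=T(d^{(n_i)})^{\gamma}$. The whole argument then reduces to identifying the resulting limit game and checking that it is well designed.

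Writing each polynomial as $\tau_a(x)=\sum_{j=0}^{\rho_a}c_{a,j}x^{j}$ with leading coefficient $c_{a,\rho_a}>0$ (positive because $\tau_a$ is non-negative and non-decreasing), a direct computation gives
\[
\frac{\tau_a(T(d^{(n_i)})x)}{g_i}=\sum_{j=0}^{\rho_a}c_{a,j}\,T(d^{(n_i)})^{\,j-\gamma}x^{j}\xrightarrow[i\to\infty]{}
\begin{cases}
c_{a,\gamma}x^{\gamma}&\text{if }\rho_a=\gamma,\\
0&\text{if }\rho_a<\gamma,\\
\infty&\text{if }\rho_a>\gamma,
\end{cases}
\]
for every $x>0$, so each $\tau_a^{\infty}$ is either a monomial of degree $\gamma$, the zero function, or identically $\infty$; this settles L2) and the price part of L3). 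Since $\rho_k=\gamma$ means $\min_{s\in\S_k}\rho_s=\gamma$, every group owns a strategy $s$ with $\rho_s=\gamma$, which uses only resources with $\rho_a\le\gamma$ and is therefore tight, so L3.2) holds for \emph{every} group. Consequently $\K^{\infty}=\K$ and $\S^{\infty}$ consists exactly of the strategies with $\rho_s=\gamma$. Finally, because $\sum_k\d_k=1$ some group carries positive limit volume, and its tight strategies use at least one resource with $\rho_a=\gamma$, hence positive coefficient; this yields the positivity required in L4).

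The key step is to show the limit game is well designed. On $\S^{\infty}$ all finite limit price functions are $\tau_a^{\infty}(x)=c_ax^{\gamma}$ with $c_a\ge 0$ of the \emph{common} degree $\gamma$. I would invoke the standard Beckmann--Rosenthal characterization by which the Wardrop/NE profiles of the limit game are exactly the minimizers over the feasible polytope of the potential $\Phi(f)=\sum_a\int_0^{f_a}\tau_a^{\infty}(t)\,dt=\tfrac{1}{\gamma+1}\sum_ac_af_a^{\gamma+1}$, and compare it with the social cost $\sum_a f_a\tau_a^{\infty}(f_a)=\sum_ac_af_a^{\gamma+1}$, whose minimizers are the SO profiles. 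These two convex objectives differ only by the positive factor $\gamma+1$, so they share the same set of minimizers; hence every NE profile of the limit game is an SO profile and $\text{PoA}=1$ for every limit volume vector. The limit game is therefore well designed, $\Gamma$ is scalable, and Lemma~\ref{theo:MainResult_Wu2017} finishes the proof.

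The main obstacle is genuinely this last proportionality observation, together with the bookkeeping needed to guarantee that the degenerate features of the limit --- resources whose limit price collapses to $0$ (those with $\rho_a<\gamma$) and groups with vanishing limit volume $\d_k=0$ --- neither destroy the positivity demanded by L4) nor spoil the identity $\Phi=\tfrac{1}{\gamma+1}\cdot(\text{social cost})$. It is precisely the homogeneity of degree $\gamma$, enforced by the hypothesis $\rho_k\equiv\gamma$, that keeps $\Phi$ a fixed positive multiple of the social cost and makes the well-designedness argument go through uniformly in the chosen sequence.
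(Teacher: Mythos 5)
Your proposal is correct and follows essentially the same route as the paper: choose scaling factors $g_i=T(d^{(n_i)})^{\gamma}$ for the common degree $\gamma$, verify that the resulting limit game exists and is well designed, and conclude scalability, hence AWDG via Lemma~\ref{theo:MainResult_Wu2017}. The paper leaves the well-designedness of the limit game as an ``easy'' claim, whereas you supply the standard justification (the Beckmann potential is the fixed multiple $\tfrac{1}{\gamma+1}$ of the social cost for monomial limit prices of common degree, so NE and SO profiles coincide), which is exactly the intended argument.
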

\begin{proof}
	Let $\{d^{(n)}\}_{n\in \N}$ be \Wu{an arbitrary
	} user volume vector such that $T(d^{(n)})\to \infty$
	as $n\to \infty.$ Let $\rho=\rho_k$ for some
	$k\in \K.$ Put $g_n=T(d^{(n)})^{\rho}$ for each $n\in \N.$
	By assumption \eqref{eq:NoFreeStrategies}, 
	one can then easily show that the limit of $\Gamma$ \Wu{w.r.t.}
	sequence $\{n\}_{n\in \N}$ and \Wu{the} scaling factor sequence $\{g_n\}_{n\in \N}$
	is well designed. 
\end{proof}

\Wu{The result of Wu et al. \cite{Wu2017Selfishness}
\Wuu{does not} directly apply if  $\rho_k\neq \rho_l$ for some
$l,k\in \K$.} The reason is that, in this case, there
\Wu{need} not exist a unified limit game for all \Wu{{\em groups}} for some
user volume vector sequences $\{d^{(n)}\}_{n\in \N}$.
We \Wu{thus need additional arguments in this case and leave the proof of this case to Subsection~\ref{sec:asymptoticdecomposition}.}

\Wu{To better understand
the current state} of the art, we now introduce some relevant results from 
\cite{Colini2017a}
and \cite{Colini2017b}.
They 
employed an alternative technical path to prove
the convergence of \Wu{the} PoA. They \Wu{introduced} the so-called
{\em gaugeable games}, which \Wu{consider only} particular sequences
of user volume vectors. Gaugeability is a concept based on \Wu{the notion of}
{\em regular variation} \cite{Bingham1987Regular}.
A {\em non-negative} function $g(\cdot)$ is said \Wu{to} be {\em regularly varying}, if the limit
\begin{equation}\label{eq:regularvarying}
	\lim_{t\to\infty}
	\frac{g(t\cdot x)}{g(t)}=q(x)\in (0,\infty)
\end{equation}
exists for all $x>0.$ 

\begin{definition}[See also \cite{Colini2017b}]\label{def:GaugeableGame}
	An NCG $\Gamma$ is said to be gaugeable \Wu{for}
	a user volume vector $\{d^{(n)}\},$ if there exists a \Wu{regularly} varying
	function $g(\cdot)$ such that:
	\begin{itemize}
		\item[$G1)$] The limit
		\[
		\lim_{n\to \infty} \frac{\tau_a(x)}{g(x)}=:q_a\in [0,\infty]
		\]
		exists for all resource $a\in A.$
		\item[$G2)$] For each group $k\in \K,$ there exists a strategy 
		$s\in \S_k$ such that
		\[
		q_a<\infty, \quad\text{for all resource } a\in A \text{ with }r(a,s)>0.
		\]
		\item[$G3)$] The lower limit
		\begin{equation}\label{eq:GaugeableCondition}
		 \varliminf_{n\to \infty} \sum_{k\in \K_{tight}}\frac{d_k^{(n)}}{T(d^{(n)})}>0,
		\end{equation}
		where $\K_{tight}$ is the set of all tight groups, and
		a group $k$ is said to \Wu{be} tight \Wu{in such a case} if 
		\[
		0<\min_{s\in \S_k}\max\{q_a: r(a,s)>0, a\in A\}<\infty.
		\]
	\end{itemize}
\end{definition}

With the technique of regular variation \cite{Bingham1987Regular}, Colini et al. \cite{Colini2017b} proved that for  
each NCG $\Gamma,$ if $\Gamma$ is gaugeable \Wu{w.r.t.}
a user volume vector $\{d^{(n)}\}_{n\in \N},$
then PoA$(d^{(n)})\to 1$ as $n\to \infty,$ \Wuu{see Theorem
4.4 in \cite{Colini2017b}. Here,}
we recall that PoA$(d^{(n)})$ denotes the price of
anarchy \Wu{w.r.t.} user volume vector $d^{(n)},$ for each
$n\in \N.$

In fact, if $\Gamma$ is gaugeable \Wu{w.r.t.} a user volume vector
$\{d^{(n)}\}_{n\in \N},$ then $\Gamma$ has a well designed
limit \Wu{w.r.t.} that user volume vector sequence.
Let $g(\cdot)$ be the required \Wu{regularly} varying
function in Definition~\ref{def:GaugeableGame}.
By $G1),$ \Wu{the} limit price functions 
\[
\tau_a^{\infty}\big(x\big)=\lim_{n\to \infty}
\frac{\tau_a\big(T(d^{(n)})x\big)}{g_n}=q_a\cdot x^{\rho}
\]
exist, where the scaling factors 
$g_n=g\big(T(d^{(n)})\big),$
and $\rho\ge 0$ is the \Wu{regular variation index} of $g$ in Karamata's Characterization Theorem \cite{Bingham1987Regular}. Moreover, by $G2)$ and
$G3),$ one can easily check that the limit game
consisting of groups $\K^{\infty}=\K$ and price functions
$\tau_a^{\infty}$ is well designed. See Wu et al. \cite{Wu2017Selfishness}
for details. Therefore, gaugeable games are \Wu{special cases of} scalable games.
To explicitly show this, we define the ``sequential counterpart" of
scalable games in a natural way.
\begin{definition}\label{def:ScalableSequence}
	Consider an NCG $\Gamma$ and \Wu{some} user volume vector $\{d^{(n)}\}_{n\in \N}$
	with $T(d^{(n)})\to \infty$ as $n\to \infty.$ 
	$\Gamma$ is said to be {\em scalable} \Wu{w.r.t. the} sequence
	$\{d^{(n)}\}_{n\in \N},$ if 
	for each infinite subsequence $\{n_i\}_{i\in \N},$
	$\Gamma$ has a well designed limit \Wu{w.r.t.} the 
	subsequence $\{d^{(n_i)}\}_{i\in \N}.$  
\end{definition} 

\Wu{\Wuu{Obviously,} an NCG $\Gamma$ is scalable if and only if
	$\Gamma$ is scalable w.r.t. each user volume sequence
	$\{d^{(n)}\}_{n\in \N}$ with $\lim_{n\to \infty}T(d^{(n)})=\infty.$
	However, if $\Gamma$ is only scalable w.r.t. some user
	volume sequence $\{d^{(n)}\}_{n\in \N},$ then
	$\Gamma$ need not to be \Wuu{globally} scalable.
	Nevertheless,
	Lemma \ref{theo:Gaugeable<=Scalable} below states
	that this restricted \Wuu{notion} of scalability 
	already \Wuu{generalizes} the gaugeability \Wuu{of} \cite{Colini2017b}.}
\begin{lma}\label{theo:Gaugeable<=Scalable}
	Consider an NCG $\Gamma$ and some user volume vector $\{d^{(n)}\}_{n\in \N}$
	with $T(d^{(n)})\to \infty$ as $n\to \infty.$
	\Wu{If} $\Gamma$ is gaugeable \Wu{w.r.t. the sequence} $\{d^{(n)}\}_{n\in \N},$
	then $\Gamma$ is also scalable \Wu{w.r.t.} that user volume vector
	sequence.
\end{lma}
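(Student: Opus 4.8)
The plan is to verify Definition~\ref{def:ScalableSequence} directly: for every infinite subsequence of $\{d^{(n)}\}_{n\in\N}$ I must exhibit a \emph{well designed} limit game. Let $g(\cdot)$ be the regularly varying gauge function supplied by gaugeability (Definition~\ref{def:GaugeableGame}), and let $\rho\ge 0$ be its regular variation index, so that Karamata's Characterization Theorem \cite{Bingham1987Regular} gives $\lim_{t\to\infty}g(tx)/g(t)=x^{\rho}$ for all $x>0$. The natural choice of scaling factors is $g_n:=g\big(T(d^{(n)})\big)$, which are positive for all large $n$ since $T(d^{(n)})\to\infty$. I would fix an arbitrary infinite subsequence $\{n_i\}_{i\in\N}$ and construct a well designed limit of $\Gamma$ along a suitable sub-subsequence of it, checking conditions $L1)$--$L4)$ of Definition~\ref{def:limitGame} in turn.

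First I secure $L1)$: since each ratio $d_k^{(n)}/T(d^{(n)})$ lies in the compact interval $[0,1]$ and $\K$ is finite, I can pass to a further subsequence (still written $\{n_i\}$) along which $d_k^{(n_i)}/T(d^{(n_i)})\to\d_k\in[0,1]$ for every $k\in\K$. For $L2)$ I use the factorization
\[
\frac{\tau_a\big(T(d^{(n_i)})x\big)}{g_i}
=\frac{\tau_a\big(T(d^{(n_i)})x\big)}{g\big(T(d^{(n_i)})x\big)}\cdot
\frac{g\big(T(d^{(n_i)})x\big)}{g\big(T(d^{(n_i)})\big)}.
\]
For each fixed $x>0$ we have $T(d^{(n_i)})x\to\infty$, so the first factor tends to $q_a$ by $G1)$ and the second tends to $x^{\rho}$ by regular variation; hence $\tau_a^{\infty}(x)=q_a\,x^{\rho}$. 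This expression is the desired finite monomial when $q_a<\infty$ and is identically $\infty$ when $q_a=\infty$ (since $x^{\rho}>0$), so the degenerate coefficients $q_a\in\{0,\infty\}$ are handled automatically.

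Next I treat $L3)$ and $L4)$. Each $\tau_a^{\infty}(x)=q_a x^{\rho}$ is non-negative, continuous and non-decreasing for $q_a<\infty$, and is $\equiv\infty$ otherwise, giving the first half of $L3)$. By $G2)$ every group $k$ owns a strategy $s\in\S_k$ with $q_a<\infty$ on all its resources; such an $s$ is tight in the sense of Definition~\ref{def:limitGame}, so $\S_k\cap\S^{\infty}\neq\emptyset$ for every $k$, which yields the group half of $L3)$ and forces $\K^{\infty}=\K$. For $L4)$, condition $G3)$ gives $\varliminf_n\sum_{k\in\K_{tight}}d_k^{(n)}/T(d^{(n)})>0$, so along the $L1)$ subsequence the tight groups carry positive limit mass $\sum_{k\in\K_{tight}}\d_k>0$. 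Because a gaugeability-tight group has $\min_{s\in\S_k}\max\{q_a:r(a,s)>0\}>0$, any strategy such a group uses routes positive flow through a resource with $q_a>0$, so the limit-game NE cost $\sum_{s\in\S^{\infty}}\tilde{\f}_s^{\infty}\,\tau_s^{\infty}(\tilde{\f}^{\infty})$ is strictly positive, which is $L4)$.

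Finally I argue that $\Gamma^{\infty}$ is well designed. Its price functions on the tight strategies $\S^{\infty}$ are all monomials $q_a x^{\rho}$ of the \emph{same} degree $\rho$, so its potential $\sum_{a}\int_0^{f_a}\tau_a^{\infty}(u)\,du=\tfrac{1}{\rho+1}\sum_a q_a f_a^{\rho+1}$ is a fixed positive multiple of the total-cost objective $\sum_a q_a f_a^{\rho+1}$. Hence an NE profile, which minimizes the potential, simultaneously minimizes the total cost and is therefore an SO profile, giving $\text{PoA}=1$ for every volume vector; the infinite-cost non-tight strategies are used by neither profile because $G2)$ guarantees a finite-cost alternative in each group. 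As $\{n_i\}$ was arbitrary, $\Gamma$ is scalable w.r.t. $\{d^{(n)}\}_{n\in\N}$. I expect the delicate step to be $L2)$ combined with the positivity in $L4)$: the factorization cleanly decouples the gauge limit $G1)$ from the regular-variation limit, but one must keep track of the coefficients $q_a\in\{0,\infty\}$ and invoke $G3)$ together with the strict positivity built into the notion of a tight group to ensure the limiting NE cost does not collapse to $0$.
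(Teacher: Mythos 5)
Your proposal is correct and follows essentially the same route the paper takes: it cites the proof of Corollary 3.1 in \cite{Wu2017Selfishness}, but the construction it sketches just before Definition~\ref{def:ScalableSequence} is exactly yours --- scaling factors $g_n=g\big(T(d^{(n)})\big)$, limit prices $\tau_a^{\infty}(x)=q_a x^{\rho}$ obtained by combining $G1)$ with the regular variation of $g$, and well-designedness of the resulting limit game from $G2)$--$G3)$ because all limit prices are monomials of the common degree $\rho$. Your write-up merely fills in the details (the compactness argument for $L1)$, the factorization for $L2)$, and the potential-function argument for $\text{PoA}=1$), all of which check out.
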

\begin{proof}
	See \Wuu{the proof of Corollary 3.1 in Wu et al. \cite{Wu2017Selfishness} for details.}
\end{proof}

The difference between scalable games and gaugeable games 
are \Wu{therefore} obvious. Scalable games \Wu{consider arbitrary} user volume
vector sequence, \Wu{while} gaugeable games \Wu{consider} particular
user volume vector sequence satisfying \eqref{eq:GaugeableCondition}.
Thus, the convergence result of \Wu{the PoA for} scalable games is 
global, while that \Wu{for gaugeable games holds only locally.}

Actually, \Wu{scalablity is more general than gaugeability even
for a specific user volume vector sequence $\{d^{(n)}\}_{n\in \N}$.} Gaugeability requires that
\Wu{there exists a uniform \Wu{regularly} varying function
$g(\cdot)$ for the whole sequence that fulfills conditions $G1)$-$G3)$. }
\Wu{As shown above, this results} in \Wu{the same} well designed limit game for every subsequence
of $\{d^{(n)}\}_{n\in \N}$. However,
scalability \Wu{allows} different subsequences of $\{d^{(n)}\}_{n\in \N}$ to have different well designed limit games. \Wu{The} NCG $\Gamma$ in Example \ref{exam:doublelimts}
has two limit games \Wu{w.r.t.} the given user volume vector sequence,
and \Wu{both of them are} well designed, see Wu et al. \cite{Wu2017Selfishness}
for details. \Wu{We summarize this in Lemma \ref{theo:ScalableGammes>GaugeableGames}.} 
\begin{lma}\label{theo:ScalableGammes>GaugeableGames}
	There exists an NCG $\Gamma$ and a user volume vector sequence
	$\{d^{(n)}\}_{n\in \N}$ with 
	$T(d^{(n)})\to \infty$ as $n\to\infty,$ such that
	$\Gamma$ is scalable \Wu{w.r.t.} $\{d^{(n)}\}_{n\in \N},$
	but not gaugeable \Wu{w.r.t.} $\{d^{(n)}\}_{n\in \N}.$
\end{lma}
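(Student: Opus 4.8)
The plan is to exhibit a concrete witness rather than argue abstractly, and the natural candidate is the NCG $\Gamma$ together with the parity-alternating user volume vector sequence $\{d^{(n)}\}_{n\in\N}$ of Example~\ref{exam:doublelimts}. Recall that in that example all demand sits on the upper group when $n$ is even and on the lower group when $n$ is odd, while $T(d^{(n)})=n\to\infty$; moreover the upper group carries (affine-)linear price functions and the lower group carries quadratic ones. The example already produces two distinct limit games: $\Gamma_1^{\infty}$, with limit functions $2x,3x$ on the upper group (even subsequence, scaling of order $T(d^{(n)})$), and $\Gamma_2^{\infty}$, with limit functions $4x^2,5x^2$ on the lower group (odd subsequence, scaling of order $T(d^{(n)})^2$). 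Each is a single-group game with two parallel strategies whose price functions are positive monomials of a common degree, hence well designed; this is exactly what Wu et al.~\cite{Wu2017Selfishness} verify for this example, so I would simply cite it.

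Scalability w.r.t. $\{d^{(n)}\}$ is then almost immediate from Definition~\ref{def:ScalableSequence}. Given an arbitrary infinite subsequence $\{n_i\}_{i\in\N}$, it must contain infinitely many indices of at least one fixed parity; restricting to such a constant-parity sub-subsequence, Conditions L1)--L4) of Definition~\ref{def:limitGame} hold with the corresponding limit game ($\Gamma_1^{\infty}$ on the even side, $\Gamma_2^{\infty}$ on the odd side). Since that limit game is well designed, $\Gamma$ has a well designed limit w.r.t. $\{d^{(n_i)}\}$, and as $\{n_i\}$ was arbitrary this establishes scalability along $\{d^{(n)}\}$.

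The substance of the lemma lies in showing that $\Gamma$ is \emph{not} gaugeable w.r.t. $\{d^{(n)}\}$, and this is the step I expect to be the crux. I would argue by contradiction: suppose a single regularly varying gauge $g(\cdot)$ satisfies G1)--G3). By Karamata's Characterization Theorem~\cite{Bingham1987Regular}, $g$ has a well-defined regular variation index $\rho\ge 0$, say $g(x)=x^{\rho}\ell(x)$ with $\ell$ slowly varying, and $q_a=\lim_{x\to\infty}\tau_a(x)/g(x)$ exists for every resource. The decisive observation is how G3) interacts with the alternating demand: since all mass lies on the upper group for even $n$ and on the lower group for odd $n$, the quantity $\sum_{k\in\K_{tight}}d_k^{(n)}/T(d^{(n)})$ equals the indicator that the upper group is tight (on even $n$), resp. that the lower group is tight (on odd $n$). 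Hence its lower limit is positive only if \emph{both} groups are tight.

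The contradiction then comes from incompatible growth rates. Tightness of the upper group forces $0<\min_{s}\max\{q_a\}<\infty$ for its linear functions, i.e. $cx/g(x)$ must tend to a finite positive limit, which by the index computation requires $\rho=1$ (with $\ell$ tending to a positive constant). Tightness of the lower group forces $c'x^2/g(x)$ to tend to a finite positive limit, i.e. $\rho=2$. Since a regularly varying function has a unique index and $1\neq 2$, no gauge $g$ can render both groups tight, so G3) must fail and $\Gamma$ is not gaugeable along $\{d^{(n)}\}$. The main obstacle is precisely isolating that G3), together with the parity-alternating demand, forces simultaneous tightness of two groups whose price functions live at different polynomial scales; once this is pinned down, the single-index property of regular variation closes the argument.
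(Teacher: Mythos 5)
Your proposal is correct and follows essentially the same route as the paper, which simply points to Example~\ref{exam:doublelimts} (and Theorem~3.4 of \cite{Wu2017Selfishness}) as the witness: the parity-alternating sequence gives two well designed limit games along the two parities (hence scalability per Definition~\ref{def:ScalableSequence}), while condition $G3)$ forces both groups to be tight under a single gauge, which is impossible since the two groups' price scales would require the regular variation index of $g$ to be simultaneously $1$ and $2$. Your write-up merely supplies in full the details the paper leaves to the cited reference.
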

\begin{proof}
	See \Wuu{the proof of Theorem 3.4 in Wu et al. \cite{Wu2017Selfishness}, } \Wu{or Example \ref{exam:doublelimts}.}
\end{proof}

Now, let us \Wu{return} to the discussion of NCGs with \Wu{arbitrary} polynomial
price functions $\tau_a(\cdot).$
By considering \Wu{a} particular user volume vector \Wu{sequence} $\{d^{(n)}\}_{n\in \N},$
Colini et al. \cite{Colini2017a} proved that 
PoA$(d^{(n)})\to 1,$ as $n\to \infty,$ \Wuu{see also Corollary 4.7 in \cite{Colini2017b}.} They assumed that 
for each $k\in \K,$
\begin{equation}\label{eq:ColiniAssumption}
\varliminf_{n\to \infty} \frac{d_k^{(n)}}{T(d^{(n)})} >0.
\end{equation}
Obviously, \eqref{eq:ColiniAssumption} fulfills \eqref{eq:GaugeableCondition}.
Let $g(x)=T(d^{(n)})^{\rho},$ where $\rho=\max\{\rho_k:k\in \K\}.$ They
actually proved that if $\{d^{(n)}\}_{n\in \N}$ satisfies 
\eqref{eq:ColiniAssumption}, then the underlying NCG is gaugeable \Wu{w.r.t.}
$\{d^{(n)}\}_{n\in \N}$ and regularly varying function $g(\cdot).$
\Wu{We summarize this in Lemma \ref{theo:ColiniPoly}.}
\begin{lma}\label{theo:ColiniPoly}
	Consider an NCG $\Gamma$ with polynomial price functions
	$\tau_a(\cdot),$ and a user volume vector sequence
	$\{d^{(n)}\}_{n\in \N}$ such that $T(d^{(n)})\to \infty$
	as $n\to \infty,$ and satisfies \eqref{eq:ColiniAssumption}.
	Then, $\Gamma$ is scalable \Wu{w.r.t.} 
	$\{d^{(n)}\}_{n\in \N},$ i.e., PoA$(d^{(n)})\to 1$ as $n\to \infty.$
\end{lma}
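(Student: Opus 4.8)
The plan is to show that, under assumption \eqref{eq:ColiniAssumption}, the game $\Gamma$ is gaugeable w.r.t. $\{d^{(n)}\}_{n\in\N}$ with the gauge function $g(x)=x^{\rho}$, where $\rho=\max_{k\in\K}\rho_k$, and then to invoke Lemma~\ref{theo:Gaugeable<=Scalable}, which turns gaugeability into scalability w.r.t. the same sequence. The convergence $\text{PoA}(d^{(n)})\to 1$ then follows from a standard subsequence argument together with Lemmas~\ref{theo:LimitGame_Wu2017} and \ref{theo:MainResult_Wu2017}. Hence the whole task reduces to verifying the three gaugeability conditions $G1)$--$G3)$ of Definition~\ref{def:GaugeableGame} for this particular gauge. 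First I would note that $g(x)=x^{\rho}$ is regularly varying, since $g(tx)/g(t)=x^{\rho}\in(0,\infty)$ for every $x>0$.

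For $G1)$, writing each $\tau_a$ as a polynomial of degree $\rho_a$ and comparing degrees gives $\lim_{x\to\infty}\tau_a(x)/x^{\rho}=q_a$, where $q_a$ is the leading coefficient of $\tau_a$ when $\rho_a=\rho$, $q_a=0$ when $\rho_a<\rho$, and $q_a=\infty$ when $\rho_a>\rho$; in every case $q_a\in[0,\infty]$. For $G2)$, fix a group $k$ and choose a strategy $s^{\ast}\in\S_k$ attaining $\rho_{s^{\ast}}=\rho_k$. Since $\rho_k\le\rho$ by the definition of $\rho$, every resource $a$ with $r(a,s^{\ast})>0$ satisfies $\rho_a\le\rho_{s^{\ast}}\le\rho$, whence $q_a<\infty$; this is exactly $G2)$. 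For $G3)$, pick a group $k^{\ast}$ with $\rho_{k^{\ast}}=\rho$. Every $s\in\S_{k^{\ast}}$ then has $\rho_s\ge\rho$; the strategies with $\rho_s=\rho$ give a finite positive value $\max\{q_a:r(a,s)>0\}$ (a leading coefficient), while those with $\rho_s>\rho$ give $+\infty$, so $0<\min_{s\in\S_{k^{\ast}}}\max\{q_a:r(a,s)>0\}<\infty$ and $k^{\ast}$ is tight. Assumption \eqref{eq:ColiniAssumption} then yields $\varliminf_{n\to\infty}\sum_{k\in\K_{tight}}d_k^{(n)}/T(d^{(n)})\ge\varliminf_{n\to\infty}d_{k^{\ast}}^{(n)}/T(d^{(n)})>0$, which is $G3)$.

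With $G1)$--$G3)$ in hand, $\Gamma$ is gaugeable w.r.t. $\{d^{(n)}\}_{n\in\N}$, and Lemma~\ref{theo:Gaugeable<=Scalable} delivers scalability w.r.t. this sequence; to extract $\text{PoA}(d^{(n)})\to 1$ I would argue by contradiction, taking a subsequence on which the PoA stays bounded away from $1$, passing to its well designed limit, and using Lemma~\ref{theo:LimitGame_Wu2017} (for both NE and SO profiles) so that the limiting PoA equals that of the limit game, namely $1$. The only delicate point is the degree bookkeeping in $G2)$ and $G3)$: one must keep the three nested extrema defining $\rho_a$, $\rho_s$ and $\rho_k$ straight, and in particular confirm that the cheapest-degree strategy of each group is finite under the gauge while some group genuinely attains the global degree $\rho$, so that $\K_{tight}\ne\emptyset$. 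Everything else is a routine comparison of polynomial degrees.
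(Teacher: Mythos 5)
Your proof is correct and follows essentially the same route as the paper: the paper likewise establishes gaugeability w.r.t. $\{d^{(n)}\}_{n\in\N}$ using the gauge $g(x)=x^{\rho}$ with $\rho=\max_{k\in\K}\rho_k$ and then invokes Lemma~\ref{theo:Gaugeable<=Scalable}. Your explicit verification of $G1)$--$G3)$ by degree comparison is the same bookkeeping the paper delegates to the discussion preceding the lemma and to \cite{Colini2017a}.
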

\begin{proof}
	\Wu{The proof follows immediately 
		\Wuu{from Lemma \ref{theo:Gaugeable<=Scalable}.}}
\end{proof}

Although Lemma \ref{theo:ColiniPoly} shows an \Wu{inspiring} result
for general polynomial price functions, we can not
directly conclude that NCGs with polynomial price functions
are asymptotically well designed, due to \Wu{the sequence-specific
nature of Lemma \ref{theo:ColiniPoly}.} \Wu{To} show \Wu{that} NCGs with polynomial price functions
are asymptotically well designed, we still need a more sophisticated
analysis. \Wu{Motivated by} Lemma \ref{theo:Gaugeable<=Scalable},
Lemma \ref{theo:ScalableGammes>GaugeableGames}, and
Example \ref{exam:doublelimts},  we
\Wuu{will} \Wu{use} the idea of scalability \Wu{for a general proof.}

\subsection{AWDG is closed under direct sum}
\label{sec:DirectSum}

\Wu{A direct application of scalability does not lead \Wuu{to} a general proof.} 
\Wu{To see this,} consider again Example \ref{exam:doublelimts}, but
now with user volume vector sequence $d^{(n)}=(d_1^{(n)}=n^2,d_2^{(n)}=n).$ 
In this case, the two groups are \Wu{mutually} non-negligible, and we \Wu{cannot}
find a suitable scaling factor sequence \Wu{that results in} a well designed
limit game. However, \Wu{a closer inspection shows} that
either of the two groups has its own well designed limit game \Wu{w.r.t.}
the given user volume sequence.
This inspires us to consider the two groups separately.
\begin{definition}\label{def:MutuallyDisjointGroups}
	An NCG $\Gamma$ is said to have {\em mutually disjoint groups} (MDGs), if
	\begin{equation}\label{eq:MDG}
	\sum_{k\in \K} \1_{\{x: x>0\}}\big(r(a, s_k)\big)\le 1,
	\end{equation}
	\Wu{for each $a\in A,$ for each K-dimensional strategy vector
		$(s_1,\ldots,s_K)\in \prod_{k\in \K} \S_k,$} where $\1_{\{x: x>0\}}(\cdot)$ is the indicator function of
	set $\{x:x>0\}.$
\end{definition}

\Wu{Condition} \eqref{eq:MDG} in Definition \ref{def:MutuallyDisjointGroups}
implies that users from different groups \Wu{cannot} share resources.
Therefore, users from different groups of an NCG with MDGs
will not affect each other when they determine strategies to follow.
This means that each group in an NCG with MDGs forms an independent
\Wu{subgame.} Let 
\[
\Gamma=	\Big(\K, A, \S, (r(a,s))_{a\in A, s\in \S}, (\tau_a)_{a\in A},d\Big)
\]
be an NCG with MDGs. For each group $k\in \K,$ we denote by
\[
\Gamma_{| k}=\Big(\{k\}, A, \S, (r(a,s))_{a\in A, s\in \S_k}, (\tau_a)_{a\in A},(d_k)\Big)
\]
the {\em $k$-marginal} of $\Gamma$, and denote by $f_{|k}=(f_s)_{s\in \S_k}$
the {\em $k$-marginal} of a feasible strategy $f=(f_s)_{s\in \S}.$
\begin{lma}\label{theo:MarginalProperty}
	Consider an NCG $\Gamma$ with MDGs. Then, $f^*=(f_s^*)_{s\in \S}$
	is an SO profile of $\Gamma$ if and only if the $k$-th marginal profile $f^*_{|k}=\big(f_s^*\big)_{s\in \S_k}$
	is an SO profile of the $k$-marginal game $\Gamma_{|k},$ for each $k\in \K.$
	\Wu{This holds similarly} for NE profiles.
\end{lma}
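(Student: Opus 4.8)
The plan is to exploit the separability that the MDG condition forces on both the objective and the feasible region, so that the SO program and the equilibrium conditions decouple group by group. First I would establish the key structural fact: by Definition~\ref{def:MutuallyDisjointGroups}, every resource $a\in A$ is used by strategies of at most one group. Indeed, if two strategies $s_k\in\S_k$ and $s_{k'}\in\S_{k'}$ with $k\ne k'$ satisfied $r(a,s_k)>0$ and $r(a,s_{k'})>0$, then a strategy vector $(s_1,\dots,s_K)$ containing both components would violate~\eqref{eq:MDG}. Hence, putting $A_k:=\{a\in A: r(a,s)>0 \text{ for some } s\in\S_k\}$ yields pairwise disjoint resource sets, and for $a\in A_k$ the consumed volume $f_a=\sum_{s\in\S_k} r(a,s)\,f_s$ depends only on the $k$-marginal $f_{|k}$. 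Resources used by no group contribute $0$ to every relevant quantity and may be ignored.

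For the SO part, this structural fact makes the total cost separate across groups:
\[
\sum_{a\in A} f_a\,\tau_a(f_a)=\sum_{k\in\K}\ \sum_{a\in A_k} f_a\,\tau_a(f_a),
\]
where the $k$-th inner sum is exactly the total cost of the $k$-marginal game $\Gamma_{|k}$ and depends only on $f_{|k}$. Since the feasibility constraints $\sum_{s\in\S_k}f_s=d_k$ and $f_s\ge 0$ of~\eqref{def:SO} also decouple into one independent constraint block per group, the SO program is the minimization of a sum of group-wise objectives over a product feasible region. I would then invoke the elementary fact that minimizing a separable sum over a product set is equivalent to minimizing each summand over its own factor. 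This gives that $f^*$ is an SO profile of $\Gamma$ if and only if each $f^*_{|k}$ is an SO profile of $\Gamma_{|k}$.

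For the NE part, I would use the Wardrop characterization~\eqref{def:WE} together with the observation that for $s\in\S_k$ the strategy cost $\tau_s(f)=\sum_{a\in A}r(a,s)\,\tau_a(f_a)=\sum_{a\in A_k}r(a,s)\,\tau_a(f_a)$ again depends only on $f_{|k}$. Consequently the equilibrium inequalities for group $k$ involve only $f_{|k}$, and the $\epsilon$-deviation profile $f^{1+\epsilon}$ of~\eqref{def:NE}, which shifts $\epsilon$ users between two strategies of $\S_k$, alters only the $k$-marginal. The global NE condition is therefore the conjunction over $k\in\K$ of conditions each referring solely to $f_{|k}$, so it holds for $f$ if and only if it holds marginally for every $f_{|k}$, which is precisely the claim.

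The argument is entirely routine once the structural fact is in place; I expect the only point requiring care to be the bookkeeping of the resource partition—in particular, confirming that the MDG inequality~\eqref{eq:MDG}, stated for strategy vectors in $\prod_{k\in\K}\S_k$, indeed forces each resource to be consumed by a single group, and correctly discarding resources used by no strategy so that they do not disturb the separation of either the cost or the equilibrium conditions.
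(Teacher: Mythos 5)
Your proof is correct and supplies exactly the separability argument that the paper's proof (which just says ``Trivial'') leaves implicit: the MDG condition forces each resource to be used by at most one group, whence the cost, the feasibility constraints, and the equilibrium conditions all decouple group by group. No gaps; the only point needing care, the derivation of the resource partition from \eqref{eq:MDG}, is handled correctly.
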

\begin{proof}
	Trivial.
\end{proof}

With Lemma \ref{theo:MarginalProperty} \Wu{and by applying}
scalability to each of \Wuu{the} mutually independent marginals, we can easily derive that
NCGs with mutually \Wu{disjoint} and scalable marginals
are asymptotically well designed. 
\begin{lma}\label{theo:AWDG_NCG_MDG}
	Consider an NCG $\Gamma$ with \Wu{MDGs. If all the marginals are
		scalable,} then $\Gamma$
	is asymptotically well designed.
\end{lma}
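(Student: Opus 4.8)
The plan is to use the MDG condition to decouple the game completely into its marginals, each of which is an AWDG by Lemma~\ref{theo:MainResult_Wu2017}, and then control the full price of anarchy as a ratio of sums of per-group costs.

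First I would record the cost decomposition. Write $\Phi(f):=\sum_{a\in A}f_a\,\tau_a(f_a)=T(d)\cdot C(f)$ for the total cost, so that $\text{PoA}(d)=\Phi(\tilde f)/\Phi(f^*)$. Condition~\eqref{eq:MDG} forces every resource $a\in A$ to be loaded by the strategies of at most one group, hence $f_a$ depends only on the marginal profile of that group and $\Phi(f)=\sum_{k\in\K}\Phi_k(f_{|k})$, where $\Phi_k(f_{|k}):=\sum_{s\in\S_k}f_s\,\tau_s(f)$ is the cost incurred inside group $k$ and depends on $f$ only through $f_{|k}$. Together with Lemma~\ref{theo:MarginalProperty} (each $\tilde f_{|k}$ is an NE and each $f^*_{|k}$ an SO profile of $\Gamma_{|k}$) this gives
\[
\text{PoA}(d)=\frac{\sum_{k\in\K}\Phi_k(\tilde f_{|k})}{\sum_{k\in\K}\Phi_k(f^*_{|k})},
\]
a ratio in which each numerator term dominates its matching denominator term since SO minimizes cost; in particular $\text{PoA}(d)\ge 1$, so it suffices to bound the $\limsup$ by $1$.

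I would then argue via the subsequence principle. Fix an arbitrary $\{d^{(n)}\}_{n\in\N}$ with $T(d^{(n)})\to\infty$; it is enough that every subsequence admits a further subsequence along which $\text{PoA}(d^{(n)})\to 1$. As $\K$ is finite, pass to a subsequence where each coordinate either satisfies $d_k^{(n)}\to\infty$ or stays bounded, and split $\K=\K_\infty\sqcup\K_b$ accordingly; since $T(d^{(n)})\to\infty$ and the $\K_b$-coordinates are bounded, $\K_\infty\neq\emptyset$. For $k\in\K_b$ the feasible profiles of $\Gamma_{|k}$ range over a fixed compact set and the $\tau_a$ are continuous, so $\Phi_k(\tilde f_{|k})$ and $\Phi_k(f^*_{|k})$ stay bounded; let $N_b,D_b$ be their (bounded) sums over $\K_b$, and set $N_\infty:=\sum_{k\in\K_\infty}\Phi_k(\tilde f_{|k})$, $D_\infty:=\sum_{k\in\K_\infty}\Phi_k(f^*_{|k})$. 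For each $k\in\K_\infty$, scalability of $\Gamma_{|k}$ and Lemma~\ref{theo:MainResult_Wu2017} give $\Phi_k(\tilde f_{|k})/\Phi_k(f^*_{|k})\to 1$.

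The crux is to show $D_\infty\to\infty$, so that the bounded $\K_b$-terms become negligible. For a fixed $k\in\K_\infty$ I would apply Lemma~\ref{theo:LimitGame_Wu2017} to $\Gamma_{|k}$ along $\{d_k^{(n)}\}$: on a suitable subsequence the normalized total cost $\Phi_k(\tilde f_{|k})/(d_k^{(n_i)}g_i)$ tends to the NE cost of a well designed limit game of $\Gamma_{|k}$, which is positive by condition~$L4)$ of Definition~\ref{def:limitGame}; moreover the scaling factors $g_i$ stay bounded away from $0$, since by $L4)$ some tight resource (cf.~$L3.2)$) keeps a positive limit price and, being non-trivial and non-decreasing by~\eqref{eq:NoFreeStrategies}, its values $\tau_a(T(d^{(n_i)})x)$ are bounded below by a positive constant. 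Hence $\Phi_k(\tilde f_{|k})\to\infty$, and $\text{PoA}_k\to1$ forces $\Phi_k(f^*_{|k})\to\infty$ as well, so $D_\infty\to\infty$. Finally, given $\epsilon>0$, for large $n$ every $k\in\K_\infty$ satisfies $\Phi_k(\tilde f_{|k})\le(1+\epsilon)\Phi_k(f^*_{|k})$, whence $N_\infty\le(1+\epsilon)D_\infty$ and
\[
\text{PoA}(d^{(n)})=\frac{N_\infty+N_b}{D_\infty+D_b}\le\frac{(1+\epsilon)\,D_\infty+N_b}{D_\infty+D_b};
\]
as $D_\infty\to\infty$ while $N_b,D_b$ remain bounded, the right-hand side tends to $1+\epsilon$, and letting $\epsilon\downarrow0$ gives $\limsup_n\text{PoA}(d^{(n)})\le1$. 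The only genuine obstacle is this divergence of $D_\infty$: one must exclude that a marginal with exploding volume has asymptotically vanishing total cost, which is precisely what the positivity in $L4)$ together with the no-free-strategy assumption~\eqref{eq:NoFreeStrategies} rules out; the remaining bounded-group bookkeeping is routine.
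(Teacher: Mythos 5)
Your proof is correct and follows essentially the same route as the paper: decompose the PoA into a ratio of sums of per-group costs using the MDG property together with Lemma~\ref{theo:MarginalProperty}, and let each marginal PoA tend to $1$ by scalability of the marginals (Lemma~\ref{theo:MainResult_Wu2017}). The paper's own proof is only a two-line sketch, and your explicit treatment of the groups whose volume stays bounded --- showing they are negligible because the cost of at least one diverging marginal tends to infinity, via $L4)$ and the no-free-strategy assumption \eqref{eq:NoFreeStrategies} --- supplies precisely the detail that the paper only gestures at with ``Herein we recall assumption \eqref{eq:NoFreeStrategies}''.
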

\begin{proof}
	\Wu{This} is easy by observing the trivial facts that
	\[
	\text{PoA}(d)=\frac{C(\tilde{f})}{C(f^*)}=
	\frac{\sum_{k\in \K} \sum_{s\in \S_k} \tilde{f}_s\cdot \tau_s(\tilde{f})}{\sum_{k\in \K} \sum_{s\in \S_k} f^*_s\cdot \tau_s(f^*)},
	\]
	and that \Wu{each marginal is scalable, and thus }the marginal PoA
	\[
	\frac{ \sum_{s\in \S_k} \tilde{f}_s\cdot \tau_s(\tilde{f})}{\sum_{k\in \S_k} f^*_s\cdot \tau_s(f^*)},
	\]
	tends to $1,$ as $d_k\to \infty,$ for
	each $k\in \K.$ Herein we recall  assumption \eqref{eq:NoFreeStrategies} \Wu{that there are} no free strategies, and that
	\[
	\tau_s(\tilde{f})=\tau_s(\tilde{f}_{|k}) \quad\text{and}
	\quad \tau_s(f^*)=\tau_s(f^*_{|k}),
	\]
	if $s\in \S_k,$ for each $k\in \K.$ 
\end{proof}

\Wu{Combining Corollary \ref{theo:SameDegreeScalable} and
	Lemma \ref{theo:AWDG_NCG_MDG}, we obtain immediately
	that NCGs with MDGs and polynomial price functions
	$\tau_a(\cdot)$ are asymptoticall well designed.}

A direct extension of Lemma \ref{theo:AWDG_NCG_MDG} \Wu{considers} the {\em direct sum} of asymptotically \Wuu{well} designed games. Let 
\[
\Gamma_l=	\Big(\K_l, A_l, \S_l, (r(a,s))_{a\in A_l, s\in \S_l}, (\tau_a)_{a\in A_l},d(l)\Big)
\]
be \Wu{an NCG}, for $l=1,\ldots,m,$ such that 
$A_1,\ldots,A_m$ are mutually disjoint.
Then, we call the game
\[
\Big(\bigcup_{l=1}^{m}\K_l, \bigcup_{l=1}^{m}A_l, \bigcup_{l=1}^{m}\S_l, (r(a,s))_{a\in \bigcup_{l=1}^{m}A_l, s\in \bigcup_{l=1}^{m}\S_l}, (\tau_a)_{a\in \bigcup_{l=1}^{m}A_l},\bigcup_{l=1}^{m}d(l)\Big)
\]
the {\em direct sum} of $\Gamma_1,\ldots,\Gamma_m,$
denoted by $\oplus_{l=1}^{m}\Gamma_l,$ where
$\bigcup_{l=1}^{m}d(l)$ means the concatenation of vectors
$d(1),\ldots,d(m).$ Obviously, direct sums of asymptotically \Wuu{well}
designed games are again asymptotically well designed.
\begin{corollary}\label{theo:DirectSum}
	The class AWDG is closed under direct sums.
\end{corollary}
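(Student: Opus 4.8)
The plan is to promote the additive decomposition behind Lemma~\ref{theo:AWDG_NCG_MDG} from single groups to whole summands, and then to neutralise those summands whose volume does not grow. First I would record that, because $A_1,\dots,A_m$ are mutually disjoint and a strategy $s\in\S_l$ consumes only resources in $A_l$ (so $r(a,s)=0$ for $s\in\S_l$, $a\notin A_l$), the consumed volume $f_a$ of any $a\in A_l$ and the cost $\tau_s(f)=\sum_{a\in A_l}r(a,s)\tau_a(f_a)$ of any $s\in\S_l$ depend only on the $\S_l$-entries of $f$. Exactly as in Lemma~\ref{theo:MarginalProperty}, the group-wise equilibrium conditions \eqref{def:WE}--\eqref{def:NE} and the block-separable objective of \eqref{def:SO} then show that a profile of $\oplus_{l=1}^m\Gamma_l$ is an NE (resp.\ SO) profile iff its restriction to each $\S_l$ is an NE (resp.\ SO) profile of $\Gamma_l$. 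Writing $d^{(n)}=\bigcup_{l=1}^m d^{(n)}(l)$ for the concatenation of the per-summand volume vectors, I obtain the exact identity $\text{PoA}(d^{(n)})=\big(\sum_{l=1}^m N_l^{(n)}\big)/\big(\sum_{l=1}^m D_l^{(n)}\big)$, where $N_l^{(n)}=\sum_{s\in\S_l}\tilde f_s\tau_s(\tilde f)$ and $D_l^{(n)}=\sum_{s\in\S_l}f^*_s\tau_s(f^*)$ are the NE and SO costs of the $l$-th summand. Since $N_l^{(n)}=\text{PoA}_l(d^{(n)}(l))\,D_l^{(n)}$ with $\text{PoA}_l\ge 1$, the global PoA is a $D_l^{(n)}$-weighted average of the summand-wise prices of anarchy.

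Next I would fix an arbitrary user volume vector sequence $\{d^{(n)}\}$ with $T(d^{(n)})\to\infty$ and argue that it suffices to show that every subsequence has a further subsequence along which $\text{PoA}(d^{(n)})\to 1$. Along any such subsequence, since there are only $m$ summands I can pass to a further subsequence on which, for each $l$, the volume $T(d^{(n)}(l))$ either diverges to infinity or stays bounded; call the two index sets $U$ and $B$. Because $T(d^{(n)})=\sum_{l}T(d^{(n)}(l))\to\infty$ while the $B$-summands contribute boundedly, $U\neq\emptyset$. For $l\in U$ the hypothesis that $\Gamma_l$ is an AWDG (Definition~\ref{def:AWDG}) yields $\text{PoA}_l(d^{(n)}(l))\to 1$; for $l\in B$ the continuity of the price functions on the compact range of feasible volumes keeps both $N_l^{(n)}$ and $D_l^{(n)}$ below a common constant $M$.

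The crux is then to show that the denominator is driven to infinity by the growing summands. For each $l\in U$ a pigeonhole argument (some strategy carries at least a $1/|\S_l|$ fraction of the diverging volume) combined with the no-free-strategies assumption \eqref{eq:NoFreeStrategies} forces $D_l^{(n)}\to\infty$, hence $\sum_{l\in U}D_l^{(n)}\to\infty$. Using $N_l^{(n)}-D_l^{(n)}=D_l^{(n)}(\text{PoA}_l-1)\ge 0$, I would bound
\[
\text{PoA}(d^{(n)})-1=\frac{\sum_{l\in U}D_l^{(n)}(\text{PoA}_l-1)+\sum_{l\in B}(N_l^{(n)}-D_l^{(n)})}{\sum_{l}D_l^{(n)}}\le \max_{l\in U}\big(\text{PoA}_l-1\big)+\frac{mM}{\sum_{l\in U}D_l^{(n)}},
\]
and both terms on the right vanish as $n\to\infty$. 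This proves convergence along the sub-subsequence, and the subsequence principle then gives $\text{PoA}(d^{(n)})\to 1$ for the whole sequence, i.e.\ $\oplus_{l=1}^m\Gamma_l\in\text{AWDG}$.

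I expect the main obstacle to be precisely the summands in $B$: their prices of anarchy need not tend to $1$, so a naive term-by-term argument fails. The two devices that resolve this are the subsequence extraction, which converts the possibly oscillating question of \emph{which} summands grow into a fixed partition $U\cup B$, and the no-free-strategies assumption \eqref{eq:NoFreeStrategies}, which is exactly what guarantees $\sum_{l\in U}D_l^{(n)}\to\infty$ so that the bounded contribution of the $B$-summands is asymptotically washed out. Everything else is the routine additivity already present in Lemma~\ref{theo:AWDG_NCG_MDG}.
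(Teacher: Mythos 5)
Your proof is correct. The paper itself offers no argument here (its proof of Corollary~\ref{theo:DirectSum} is the single word ``Trivial''), and your additive decomposition of the PoA over summands is exactly the argument implicit in Lemma~\ref{theo:AWDG_NCG_MDG}; you have moreover correctly isolated and resolved the one genuinely non-trivial point that the paper glosses over, namely that summands whose volume stays bounded need not have PoA tending to $1$ and must instead be washed out, which your subsequence extraction into $U\cup B$ together with the divergence of $\sum_{l\in U}D_l^{(n)}$ (via pigeonhole and the no-free-strategies assumption \eqref{eq:NoFreeStrategies}) handles cleanly.
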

\begin{proof}
	Trivial.
\end{proof}

Corollary \ref{theo:DirectSum} \Wu{suggests} a possible
approach to check whether an NCG is aysmptotically 
well designed. One can try to decompose the underlying NCG
into a direct sum of several independent marginals, and
then check the scalability of each marginal.
\Wuu{Here, we allow compound marginals, i.e.,
	each marginal can contain more than \Wuu{one}
	group. The independence between them then
	means that users from different marginals
	do not affect each other when they determine 
	strategies to follow.}

However, in general, it could be difficult to
find \Wu{such} a decomposition, since \Wuu{different} groups
might \Wuu{compete for} the \Wuu{same} resources.
The above \Wu{discussion} has already shown that if collisions
are heavy or slight, then it \Wu{is} easy to check
whether the underlying NCG is asymptotically well designed.
In fact, if all groups heavily collide on resources, i.e.,
every resource can be used by all groups, then 
the game is not decomposable and Lemma \ref{theo:MainResult_Wu2017}
applies, see e.g. Corollary \ref{theo:SameDegreeScalable}.
\Wu{On the other hand,} if groups \Wuu{do not} collide on resources, i.e.,
\Wu{if} they
can be partitioned into several mutually
disjoint classes \Wu{w.r.t.} the use of resources, then the game is decomposable and 
Corollary \ref{theo:DirectSum} may \Wu{apply}. 

The above two cases are, somehow, regular. \Wu{Below, we}
consider the case of irregular collisions, which might be
\Wu{the general case} in practice. 

\subsection{Asymptotic decomposition: a general proof for polynomial price functions}
\label{sec:asymptoticdecomposition}

In general, it may be difficult to directly apply
the idea of scalability, since there \Wu{need} not exist a unified
well designed game \Wu{for all groups} for some user volume vector sequence.
Moreover, it \Wu{may} also be difficult to directly decompose
the game, due to irregular collisions of groups on resources.
If this is the case, then one may consider an {\em asymptotic
decomposition.} The idea is similar \Wu{to} \Wuu{direct sums.}
However, one needs to additionally deal with \Wuu{the} problems caused by the irregular collisions. 

\Wu{An} asymptotic decomposition \Wu{is based} on a suitable
partition \Wuu{of} the group set $\K.$ Consider an \Wu{arbitrary} sequence $\{d^{(n)}\}_{n\in \N}$ of user volume
vectors such that $T(d^{(n)})\to \infty$ as $n\to\infty.$
The decomposition aims to eventually \Wuu{partition} $\K$ into 
$\K_0,\ldots,\K_t,$ for some integer $t\ge 0,$ such that
$\K=\bigcup_{m=0}^{t}\K_m,$ and
\begin{equation}\label{eq:ObjectiveAsymDecomp}
	\lim_{n\to\infty}
	\frac{\sum_{k\in \bigcup_{u=0}^{m}\K_u} \sum_{s\in \S_k}\tilde{f}_s^{(n)}\cdot\tau_s(\tilde{f}^{(n)})}{
		\sum_{k\in \bigcup_{u=0}^{m}\K_u}\sum_{s\in \S_k}f_s^{*(n)}\cdot\tau_s(f^{*(n)})
		}=1,
\end{equation}
for each $m=0,\ldots,t,$ where $f^{*(n)}, \tilde{f}^{(n)}$
are SO and NE profiles \Wu{w.r.t.} $d^{(n)},$ respectively,
for each $n\in \N.$ 
Obviously, if \Wu{such} partition can indeed be \Wu{constructed}, then
the underlying NCG is well designed, due to the
\Wu{arbitrary choice} of $\{d^{(n)}\}_{n\in \N}.$

\Wu{One can try to construct the partition inductively.}
In the beginning of each \Wu{inductive} step $l=0, \ldots,t,$ we assume that we have
already constructed classes $\K_0, \ldots,\K_{l-1},$
such that $\K_0\subseteq \K, \ldots,\K_{l-1}\subseteq \K,$
and \eqref{eq:ObjectiveAsymDecomp} holds for
step $m=l-1$,
where we employ \Wu{the} convention that 
$\K_{-1}=\emptyset,$ $\frac{0}{0}=1,$
and ``$\K_0,\K_{-1}$" means ``$\emptyset$".
The objective of step $l$ \Wu{then} is to constuct
class $\K_l\subseteq \K\backslash\bigcup_{u=0}^{l-1}\K_u$
such that \eqref{eq:ObjectiveAsymDecomp} holds \Wu{again}
for $m=l.$

To construct $\K_l$, one can inspect the remaining
groups \Wu{more closely}, and \Wu{pick those} groups
$k\in \K\backslash\bigcup_{u=0}^{l-1}\K_u$
\Wu{that have} a non-vanishing limit proportion in the 
remaining total user volume $T_l(d^{(n)}):=T(d^{(n)})-
\sum_{k\in \bigcup_{u=0}^{l-1}\S_k}d_k^{(n)},$
since these groups are most significant
in the limit. To show 
\eqref{eq:ObjectiveAsymDecomp} for
$m=l,$ one needs to argue that these groups
are {\em either} asymptotically independent of the groups
that have been considered before, {\em or}
negligible \Wu{compared to them}.

To that end, one needs to {\em suitably} estimate the costs of users 
\Wu{w.r.t.} NE profiles $\tilde{f}^{(n)}$ and SO profiles
$f^{*(n)},$ respectively. By comparing \Wu{the cost of users  from groups $k\in\K_l$ with those from groups $k'\in \bigcup_{u=0}^{l-1}\K_u$,} one can then
\Wu{learn} whether groups $k\in \K_l$ are negligible.
\Wuu{If they} are negligible, then 
\eqref{eq:ObjectiveAsymDecomp} holds trivially for 
$m=l.$ Otherwise, groups $k\in \K_l$ \Wu{will be} asymptotically independent
of \Wuu{groups} $k'\in \bigcup_{u=0}^{l-1}\K_u,$ since \Wuu{the cost} of users
from groups $k'\in \bigcup_{u=0}^{l-1}\K_u$ will be negligible
compared with the \Wuu{cost} of users from groups $k \in \K_l$.
If this is the case,
one can then check the scalability of groups \Wuu{from} $\K_l$
\Wuu{under the condition} that users from groups $k'\in \bigcup_{u=0}^{l-1}\K_u$
adopt strategies \Wuu{that} they used in NE profiles $\tilde{f}^{(n)}$
and SO profiles $f^{*(n)},$ respectively. Moreover, if 
these groups are scalable, then \eqref{eq:ObjectiveAsymDecomp}
follows immediately for $m=l.$

 This procedure 
can tactically avoid the impact of possible irregular collisions 
\Wu{in} the limit analysis \Wuu{by} comparing
the costs of users from different classes $\K_l$. \Wu{If the above 
	partition can be constructed, then the underlying game 
	\Wuu{decomposes} into several asymptotically independent subgames
	corresponding to the partition  $\K_0,\ldots,\K_t$. Although
	these subgames share resource set $A,$ they are
	asymptotically independent, since the choices of users from one
	subgame will be asymptotically independent of those
	from other subgames.}

\Wu{Asymptotic decomposition
	can be successfully applied to NCGs with arbitrary polynomial price functions,
	which directly implies that NCGs with arbitrary
	polynomial price functions are asymptotically well designed.} 
Theorem
\ref{theo:SelfishMainTh} \Wu{summarizes}
\Wuu{this} result. \Wu{We \Wuu{move} the detailed decomposition
	procedure \Wuu{to} the Appendix.}
\begin{thm}\label{theo:SelfishMainTh}
	Consider \Wu{an NCG $\Gamma$} with 
	polynomial price functions $\tau_a(\cdot)$ such that 
	\Wu{each $\tau_a(x)$ is
	non-negative and non-decreasing
	for all $x\ge 0$ for $a\in A$.} \Wu{Then, $\Gamma$ is
	asymptotically decomposable, and thus asymptotically well designed.}
\end{thm}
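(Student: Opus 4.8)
The plan is to prove Theorem~\ref{theo:SelfishMainTh} by constructing the asymptotic decomposition inductively, exactly along the lines sketched in Subsection~\ref{sec:asymptoticdecomposition}. Fix an arbitrary user volume vector sequence $\{d^{(n)}\}_{n\in \N}$ with $T(d^{(n)})\to \infty$, and fix NE profiles $\tilde{f}^{(n)}$ and SO profiles $f^{*(n)}$ for each $n$. Since the PoA is scale-invariant in the cost normalization, I would work with the unnormalized total costs $\sum_{s\in \S_k}\tilde{f}_s^{(n)}\tau_s(\tilde{f}^{(n)})$ and the corresponding SO costs, and try to build a partition $\K=\bigcup_{m=0}^t\K_m$ satisfying the telescoping identity~\eqref{eq:ObjectiveAsymDecomp}. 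Because the sequence is arbitrary, establishing~\eqref{eq:ObjectiveAsymDecomp} for this fixed sequence suffices: if it holds for every sequence diverging to infinity, then $\Gamma$ is asymptotically well designed.

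**The inductive construction.** First I would set up the group indices $\rho_a,\rho_s,\rho_k$ borrowed from \cite{Colini2017a}, so that each strategy's cost scales like the largest-degree resource it uses. At the start of step $l$, having built $\K_0,\dots,\K_{l-1}$ satisfying~\eqref{eq:ObjectiveAsymDecomp} for $m=l-1$, I would pass to a subsequence (using Bolzano--Weierstrass) along which all the normalized volumes $d_k^{(n)}/T(d^{(n)})$ converge, and then select into $\K_l$ those remaining groups whose volume is a non-vanishing fraction of the \emph{remaining} total $T_l(d^{(n)})$. The key quantitative lemma I would need is a two-sided estimate: for a group $k$ with index $\rho_k$ and user volume $d_k^{(n)}$, both the NE cost and the SO cost of that group are $\Theta\big((d_k^{(n)})^{\rho_k+1}\big)$ up to constants depending only on the distribution, under assumption~\eqref{eq:NoFreeStrategies} that no strategy is free. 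With such estimates, a group's total cost contribution is governed by the product of its volume and the degree it can exploit, and I can compare cost magnitudes across groups purely through these exponents.

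**The dichotomy and scalability.** With the cost estimates in hand, the heart of the argument is the negligibility-versus-independence dichotomy. For the newly chosen $\K_l$, I would compare the cost order $\Theta\big(T_l(d^{(n)})^{\rho+1}\big)$ of the dominant groups (for the appropriate index $\rho$) against the previously absorbed cost $\sum_{k\in\bigcup_{u<l}\K_u}(\cdots)$. Either the earlier cost dominates, so $\K_l$ contributes negligibly and~\eqref{eq:ObjectiveAsymDecomp} for $m=l$ follows from the $m=l-1$ case by adding a vanishing term to numerator and denominator; or the earlier cost is negligible relative to $\K_l$, in which case the groups in $\K_l$ are asymptotically independent of their predecessors and I can freeze the predecessors at whatever strategies $\tilde{f}^{(n)}$ and $f^{*(n)}$ prescribe and treat $\K_l$ as its own subgame. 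On that subgame all the surviving groups share a common dominant scaling exponent by construction, so Corollary~\ref{theo:SameDegreeScalable} (the equal-index scalable case) applies and forces the $\K_l$-restricted PoA ratio to $1$; combined with the $m=l-1$ identity this yields~\eqref{eq:ObjectiveAsymDecomp} for $m=l$. The induction terminates after at most $|\K|$ steps.

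**The main obstacle** I anticipate is the non-negligible groups with \emph{strictly smaller} index than the dominant one inside a step: a group may carry a constant fraction of the remaining volume yet have a lower $\rho_k$, so its cost is of strictly lower order and it must be deferred to a later class rather than absorbed into $\K_l$. Handling this cleanly requires the selection rule in step $l$ to pick only the groups realizing the \emph{maximal} surviving index among the non-vanishing groups, and then to verify that the lower-index non-vanishing groups genuinely contribute negligibly to \emph{both} the NE and SO totals at level $l$ --- the subtlety being that a lower-degree group might still route through a shared high-degree resource whose congestion is driven by the dominant groups, so its users' cost is not self-determined. Controlling this cross-congestion is where assumption~\eqref{eq:NoFreeStrategies} and the monotonicity of $\tau_a$ must be used carefully: I would bound a deferred group's cost by evaluating its cheapest strategy's price at the actual (dominant-driven) loads and show it is still dominated by the $\K_l$ cost. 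Making this bookkeeping uniform across the subsequence, and confirming that the chosen subsequence refinements do not break the original arbitrary-sequence conclusion (a standard subsequence-of-subsequence argument: every subsequence has a further subsequence on which the limit is $1$, hence the full limit is $1$), is the technical crux of the proof.
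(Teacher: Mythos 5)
Your overall architecture --- the inductive partition of $\K$, the negligibility-versus-independence dichotomy decided by comparing scaling factors, and the subsequence-of-subsequences reduction --- is the same as the paper's. But two of your load-bearing steps have genuine gaps. First, your ``key quantitative lemma'' is false as stated: the per-user equilibrium cost $\tilde{L}_k^{(n)}$ of a group $k$ is \emph{not} $\Theta\big((d_k^{(n)})^{\rho_k}\big)$, because the group's cheapest strategy traverses resources whose loads are driven by \emph{all} groups. The only estimate available at the outset is the one-sided bound $\tilde{L}_k^{(n)},L_k^{*(n)}\in O\big(T(d^{(n)})^{\rho_k}\big)$ in terms of the \emph{total} volume (the paper's Claim~\ref{theo:Tau_Magnitude}, obtained from the user optimality \eqref{def:WE} applied to a strategy of degree $\rho_k$); matching lower bounds appear only a posteriori, for the dominant groups of each class, from the non-triviality of the limit game. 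You notice exactly this cross-congestion issue in your final paragraph, but it contradicts the two-sided estimate on which your cost comparisons rest, and you do not repair it.

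Second, and more seriously, your within-class argument invokes Corollary~\ref{theo:SameDegreeScalable} to force the $\K_l$-restricted ratio to $1$. That corollary compares an NE profile with an SO profile \emph{of the same game}; but the restriction of the global SO profile $f^{*(n)}$ to $\K_l$ (predecessors frozen) is not an SO profile of the marginal game, so the restricted ratio is not a PoA of any subgame and the corollary does not apply. The paper's resolution is the actual crux of its proof and is absent from yours: regard $f^{*(n)}$ as an NE profile for the auxiliary prices $c_a(x)=x\tau_a'(x)+\tau_a(x)$, note that for polynomials $c_a$ and $\tau_a$ have the same degree and their rescaled limits satisfy $c_a^{(l,\infty)}=(1+\alpha_l)\,\tau_a^{(l,\infty)}$, so the two marginal limit games share their NE profiles; hence the normalized NE and SO class costs both converge to costs of NE profiles of one and the same limit game, which are equal. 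Without this (or an equivalent device) the identity \eqref{eq:ObjectiveAsymDecomp} for $m=l$ does not follow. A smaller divergence: the paper places \emph{every} remaining group with $\rho_k\le\alpha_l$ into $\K_l$, not only the maximal-index non-vanishing ones; your plan to defer non-vanishing lower-degree groups to later classes destroys the property $T_{l+1}(d^{(n)})\in o\big(T_l(d^{(n)})\big)$ that the induction otherwise maintains, and would require separate justification.
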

\begin{proof}
	See the appendix.
\end{proof}

Different from the direct \Wu{sum}
in Subsection \ref{sec:DirectSum},
\Wu{an SO profile} \Wuu{need} not be locally \Wuu{a} system
optimum \Wu{w.r.t. some ``marginals" corresponding
	to the partition in the asymptotic decomposition}. This introduces \Wu{extra
	difficulties}
\Wu{in the application of scalability.}
The proof of Theorem~\ref{theo:SelfishMainTh} overcomes
\Wu{them} by considering SO profiles as 
NE profiles \Wu{w.r.t.} price functions
$c_a(x):=x\tau_a'(x)+\tau_a(x),$
where each $\tau_a'(\cdot)$ is the
derivative function of
$\tau_a(\cdot).$ Note that \Wu{NE profiles} are still at (pure) Nash equilibrium \Wu{w.r.t.}
each marginal \Wuu{under the condition} that users from other marginals adhere to the strategies they used in corresponding
NE profiles. 

\Wu{The proof of
Theorem \ref{theo:SelfishMainTh} is based on three
basic properties of polynomial functions.} The first is
that \Wu{polynomial functions
	can be asymptotically sorted according
	to their degrees,} which forms a base for the
cost comparison and \Wu{the construction of
	\Wuu{scaling factors} at
	each inductive step}. The second is that the price
functions $c_a(x)=x\tau_a'(x)+\tau_a(x)$ are
comparable with the price functions $\tau_a(x),$
i.e., \Wu{$\lim_{x\to\infty}\frac{c_a(x)}{\tau_a(x)}=q_a$
	for some constant $q_a\in (0,\infty),$ when
	all $\tau_a(\cdot)$ are polynomials. This plays
	a pivotal role when we check scalability
	for marginals in the inductive steps.}
The last \Wuu{properly} is \Wu{the relatively
	clear
	structure of polynomial functions, from which
	we can \Wuu{obtain} suitable scaling factors
	$g_n^{(l)}$ \Wuu{at} each inductive step $l$}.

The proof of Theorem \ref{theo:SelfishMainTh} is
very \Wuu{elementary.} It
does not involve any \Wu{advanced techniques}, but only
 mathematical induction, \Wu{basic calculus,} and \Wu{a very crude ranking \Wuu{of} non-negative
functions.} Therefore, it should be widely readable.

Theorem \ref{theo:SelfishMainTh}  
\Wu{fully settles the convergence of the} PoA for \Wu{arbitrary} 
polynomial price functions. \Wu{This} greatly extends 
\Wu{the partial}
results from \cite{Colini2017a}, \cite{Colini2017b}
and \cite{Wu2017Selfishness} \Wuu{for} polynomial price
functions. Due to the \Wu{popularity} of
polynomial functions,
Theorem \ref{theo:SelfishMainTh} may apply to a more
general context other than road traffic, \Wuu{e.g.} the senario of free market mentioned
in \Wuu{the} Introduction.

\subsection{A further extension: a general proof for regularly
	varying price functions}
\label{sec:Extension}

The idea of asymptotic decomposition may apply also to NCGs with price functions of other types. 
\Wu{Polynomial functions are special cases of regularly
	varying functions. This \Wuu{subsection} aims to apply
	\Wuu{the} asymptotic decomposition to this more general
	notion.}

\Wu{By Karamata's Characterization Theorem
\cite{Bingham1987Regular}, a regularly varying function} $\tau(\cdot)$ 
\Wu{can be written as}
\[
\tau(x)=x^{\rho}\cdot Q(x),
\] 
where $\rho\in \mathbb{R}$ is called the {\em regular
	variation index} of $\tau(\cdot)$ and 
$Q(x)$ is a {\em slowly varying} function, i.e.,
for each $x>0,$
\[
\lim_{t\to\infty}\frac{Q(tx)}{Q(t)}=1.
\]
\Wu{The class of regularly varying functions
	is very extensive \Wuu{and} includes \Wuu{many popular} analytic
	functions, e.g.,
	all affine functions, polynomials, logarithms,
	and others.}

\Wu{Although regularly varying functions are
	more extensive than polynomial functions,
	they actually \Wuu{have similar properties.} Lemma \ref{theo:RV_Properties}
	summarizes these properties.
	}
\begin{lma}\label{theo:RV_Properties}
	Consider a regularly varying function $\tau(\cdot)$ with
	index $\rho \ge 0.$ Then, \Wu{the} following \Wu{statements} hold.
	\begin{itemize}
		\item[$a)$] For each $\epsilon>0,$
		\[
		\lim_{x\to \infty} \frac{\tau(x)}{x^{\rho+\epsilon}}=0,
		\quad \text{and} \quad \lim_{x\to \infty}\frac{\tau(x)}{x^{\rho-\epsilon}}=\infty.
		\]
		\item[$b)$] For each non-negative function $g(\cdot),$ if 
		\[
		\lim_{x\to \infty} \frac{g(x)}{\tau(x)}=q\in (0,\infty)
		\]
		for some constant $q,$ then $g(\cdot)$ is also regularly
		varying with index $\rho$.
		\item[$c)$] For each regularly varying function $g(\cdot)$
		with index $\rho',$ the qoutient
		\[
		\frac{g(x)}{\tau(x)} 
		\]
		is again regularly varying, but with index $\rho'-\rho.$
		Therefore, the qoutient of two slowly varying non-zero functions is again slowly varying.
	\end{itemize}
\end{lma}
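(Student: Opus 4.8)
The plan is to prove all three parts directly from the defining limit of regular variation, namely that $\lim_{t\to\infty}\tau(tx)/\tau(t)=x^{\rho}$ for every $x>0$ (Karamata's Characterization Theorem, \cite{Bingham1987Regular}), reserving the deeper representation theorem only for the one part that genuinely needs it. The guiding observation is that $b)$ and $c)$ are pure bookkeeping on ratios, whereas $a)$ encodes the defining growth restriction of slowly varying functions and cannot be reduced to such algebra.

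For part $b)$, I would fix $x>0$ and, for large $t$, factor
\[
\frac{g(tx)}{g(t)}=\frac{g(tx)}{\tau(tx)}\cdot\frac{\tau(tx)}{\tau(t)}\cdot\frac{\tau(t)}{g(t)}.
\]
Since $tx\to\infty$ and $t\to\infty$, the first factor tends to $q$, the second to $x^{\rho}$, and the third to $1/q$; hence the product tends to $x^{\rho}\in(0,\infty)$, so $g$ is regularly varying with index $\rho$. It is precisely the hypothesis $q\in(0,\infty)$ that keeps the outer factors from blowing up or collapsing. Part $c)$ is the same manipulation: for fixed $x>0$,
\[
\frac{g(tx)/\tau(tx)}{g(t)/\tau(t)}=\frac{g(tx)}{g(t)}\cdot\frac{\tau(t)}{\tau(tx)}\longrightarrow x^{\rho'}\cdot x^{-\rho}=x^{\rho'-\rho},
\]
using that $g$ is regularly varying with index $\rho'$ and $\tau$ with index $\rho$; this is again a finite positive limit, so $g/\tau$ is regularly varying with index $\rho'-\rho$. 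The closing remark about the quotient of two slowly varying non-zero functions is just the special case $\rho=\rho'=0$.

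The hard part is $a)$, the one statement not reducible to ratio algebra. Writing $\tau(x)=x^{\rho}Q(x)$ reduces both limits to the single fact that a slowly varying $Q$ is dominated by every positive power, i.e. $Q(x)/x^{\epsilon}\to0$ and $Q(x)x^{\epsilon}\to\infty$ for each $\epsilon>0$. I would invoke the Karamata representation $Q(x)=c(x)\exp\!\big(\int_{a}^{x}\eta(t)/t\,dt\big)$ with $c(x)\to c\in(0,\infty)$ and $\eta(t)\to0$ from \cite{Bingham1987Regular}; then
\[
\log\!\big(Q(x)/x^{\epsilon}\big)=\log c(x)+\int_{a}^{x}\frac{\eta(t)-\epsilon}{t}\,dt\longrightarrow-\infty,
\]
because $\eta(t)-\epsilon\le-\epsilon/2<0$ eventually while $\int^{\infty}dt/t$ diverges, and the second limit follows symmetrically with $\eta(t)+\epsilon$. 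The only subtlety, and the genuine obstacle, is that this representation is not elementary and must be quoted from \cite{Bingham1987Regular}; one could instead cite the Potter bounds directly, but since the surrounding development already relies on Karamata's theorem, reusing its representation form keeps the argument self-contained relative to the machinery already in play.
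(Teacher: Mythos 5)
Your proposal is correct and follows essentially the same route as the paper: parts $b)$ and $c)$ by the identical ratio factorization, and part $a)$ via the Karamata representation of the slowly varying factor, showing the logarithm diverges to $-\infty$ (resp.\ $+\infty$) because the integrand $\frac{\eta(t)\mp\epsilon}{t}$ is eventually single-signed while $\int^{\infty}\frac{dt}{t}$ diverges. The only cosmetic difference is that you absorb the bounded prefactor as $c(x)\to c$ where the paper keeps it inside the exponent as $\eta(x)\to p$, and you should align the lower limits of the two integrals before combining them (this contributes only an additive constant, as the paper handles explicitly with its $\int_b^1$ term).
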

\begin{proof}
	See the Appendix, \Wu{or \cite{Bingham1987Regular}.}
\end{proof}

\Wu{Lemma \ref{theo:RV_Properties}~$a)$ and
	Lemma \ref{theo:RV_Properties}~$c)$ indicate
	a partial ordering on the class of regularly
	varying functions, i.e., 
	regularly varying functions of
	different indices can be sorted according to their indices.
	However, two regularly \Wuu{varying} functions
	of the same index are generally incomparable.
	Therefore, we cannot directly reuse the simple
	ordering \Wuu{of} polynomial functions when we
	apply the asymptotic decomposition to 
	NCGs with regularly varying functions.
	Lemma \ref{theo:RV_Properties}~$b)$
	will be implicitly used in our discussion.
	\Wuu{It guarantees} that the
	auxiliary price functions \Wuu{$c_a(x)=x\tau_a'(x)+\tau_a(x)$}
	are again regularly varying and have the same
	indices \Wuu{as} the price functions $\tau_a(x)$.  
	}

\Wu{Due to the generality of
	regularly varying functions,
	we cannot have a uniform argument. 
	To simplify the discussion, we 
	shall consider regularly
	varying functions with particular
	properties in this Subsection, e.g., \Wuu{convex and differentiable} 
	regularly varying functions.}
\begin{lma}\label{theo:RegularlyVaryingConvex}
	Consider a regularly varying function $\tau(\cdot)$
	that \Wuu{is} non-decreasing, non-negative, convex and differentiable on $[0,\infty).$
	Then, the regular variation index of $\rho$ \Wuu{is}
	non-negative, and
	\begin{equation}\label{eq:ImportantCondiAD}
	\lim_{x\to\infty} \frac{x\cdot\tau'(x)}{\tau(x)}=\rho\ge 0.
	\end{equation}
\end{lma}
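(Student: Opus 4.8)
The plan is to establish the two claims separately, first the nonnegativity of $\rho$ and then the limit relation \eqref{eq:ImportantCondiAD}. The nonnegativity is the easy part: since $\tau(\cdot)$ is non-decreasing and regularly varying, its index cannot be negative. Concretely, if we had $\rho<0$, then by Lemma \ref{theo:RV_Properties}~$a)$ (applied with a small $\epsilon$ so that $\rho+\epsilon<0$) the ratio $\tau(x)/x^{\rho+\epsilon}\to 0$, which would force $\tau(x)\to 0$ as $x\to\infty$; combined with non-negativity and monotonicity this contradicts $\tau$ being non-decreasing and not identically zero (recall assumption \eqref{eq:NoFreeStrategies} rules out the trivial case). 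Hence $\rho\ge 0$.

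For the main identity I would exploit convexity together with differentiability to relate the \emph{derivative} $\tau'(\cdot)$ to $\tau(\cdot)$ via a standard Karamata-type argument. First I would write, for any fixed $\lambda>1$,
\begin{equation}\label{eq:mvt}
\tau(\lambda x)-\tau(x)=\int_{x}^{\lambda x}\tau'(u)\,du.
\end{equation}
Because $\tau$ is convex, $\tau'(\cdot)$ is non-decreasing, so the integrand is squeezed between $\tau'(x)$ and $\tau'(\lambda x)$, giving the two-sided bound
\begin{equation}\label{eq:sandwich}
(\lambda-1)\,x\,\tau'(x)\le \tau(\lambda x)-\tau(x)\le (\lambda-1)\,x\,\tau'(\lambda x).
\end{equation}
Dividing through by $\tau(x)$ and letting $x\to\infty$, the regular variation of $\tau$ turns the middle quantity into $\lambda^{\rho}-1$, while the left and right sides encode $x\tau'(x)/\tau(x)$ up to the known factor $\tau(\lambda x)/\tau(x)\to\lambda^{\rho}$. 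The goal is to pin the quantity $\ell(x):=x\tau'(x)/\tau(x)$ between bounds that both collapse to $\rho$ as $\lambda\downarrow 1$.

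The step I expect to be the main obstacle is ruling out oscillation of $\ell(x)$: a priori $\limsup$ and $\liminf$ of $\ell(x)$ could differ, and the sandwich \eqref{eq:sandwich} alone only controls a \emph{ratio} of derivative values at $x$ versus $\lambda x$, not $\ell(x)$ directly. The clean way around this is to observe that $\tau'(\cdot)$ is itself regularly varying: since $\tau$ is convex and regularly varying with index $\rho$, the monotone density theorem (a standard consequence of Karamata theory, cf. \cite{Bingham1987Regular}) gives $\tau'(x)\sim \rho\,\tau(x)/x$ directly, which is exactly \eqref{eq:ImportantCondiAD}. If I wish to avoid quoting the monotone density theorem and keep the proof self-contained in the elementary spirit of the paper, I would instead fix the ratio in \eqref{eq:sandwich}, take $\limsup$ and $\liminf$ separately to obtain $\liminf \ell(x)\ge (\lambda^{\rho}-1)/((\lambda-1)\lambda^{\rho})$ and $\limsup \ell(x)\le (\lambda^{\rho}-1)/(\lambda-1)$, and then send $\lambda\downarrow 1$; both bounds tend to $\rho$ by the derivative of $\lambda^{\rho}$ at $\lambda=1$, forcing $\lim \ell(x)=\rho$. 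I would double-check the degenerate case $\rho=0$ separately, where the bounds read $\liminf\ell\ge 0$ and $\limsup\ell\le 0$, again yielding the limit $0$.
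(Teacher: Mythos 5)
Your proof of the limit \eqref{eq:ImportantCondiAD} is essentially the paper's own argument: the paper also writes $x\tau'(x)$ against the increments $\tau\big((1+\tfrac1t)x\big)-\tau(x)$ and $\tau(x)-\tau\big((1-\tfrac1t)x\big)$ via the fundamental theorem of calculus and the monotonicity of $\tau'$ (convexity), obtains $\varlimsup_{x\to\infty}\tfrac{x\tau'(x)}{\tau(x)}\le t\big((1+\tfrac1t)^{\rho}-1\big)$ and the matching $\varliminf$ bound, and lets $t\to\infty$; your sandwich \eqref{eq:sandwich} with $\lambda\downarrow 1$ is the same computation in a different parametrization, and your worry about oscillation is resolved exactly as you resolve it (separate $\limsup$/$\liminf$ bounds that both collapse to $\rho$). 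Two small points: your stated $\liminf$ bound $(\lambda^{\rho}-1)/((\lambda-1)\lambda^{\rho})$ is off by a factor of $\lambda$ (reparametrizing $y=\lambda x$ gives $\liminf\ell\ge \lambda(\lambda^{\rho}-1)/((\lambda-1)\lambda^{\rho})$), which is immaterial since both tend to $\rho$. Where you genuinely diverge is the nonnegativity of $\rho$: the paper derives $(1+\eta)^{\rho}\le 1-\eta+\eta 2^{\rho}$ from convexity and rules out $\rho<0$ by a derivative computation, whereas you use only monotonicity plus Lemma \ref{theo:RV_Properties}~$a)$ to force $\tau\to 0$, hence $\tau\equiv 0$, a contradiction. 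Your route is simpler and needs less than convexity, but note that Lemma \ref{theo:RV_Properties} is stated in the paper only for $\rho\ge 0$, so you should either observe that its proof (via Karamata's representation) works for any real $\rho$, or argue even more directly that monotonicity gives $\tau(tx)/\tau(t)\ge 1$ for $x\ge 1$, hence $x^{\rho}\ge 1$ and $\rho\ge 0$; also the appeal to \eqref{eq:NoFreeStrategies} is misplaced here, since non-degeneracy of $\tau$ already follows from it being regularly varying.
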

\begin{proof}
	See the Appendix.
\end{proof}

\Wu{Lemma \ref{theo:RV_Properties}~$b)$ and Lemma \ref{theo:RegularlyVaryingConvex} \Wuu{yield} immediately
	that the auxiliary 
	price functions $c_a(x)=x\tau'_a(x)+\tau_a(x)$
	are again regularly varying, provided
	that the price functions $\tau_a(x)$ are
	regularly varying and convex. Moreover,
	\Wuu{the marginal games in an} asymptotic decomposition
	will be asymptotically well designed in \Wuu{this}
	case, since \eqref{eq:ImportantCondiAD} \Wuu{holds.}
	We summarize \Wuu{this} result in
Theorem~\ref{theo:MainTheoremRegularVariation}.}
\begin{thm}\label{theo:MainTheoremRegularVariation}
	Consider an NCG with price functions $\tau_a(\cdot)$
	satisfying \Wu{all the conditions of} Lemma \ref{theo:RegularlyVaryingConvex}.
	\Wu{If} the price functions
	are mutually comparable, i.e., for each 
	$a,b\in A,$ the limit
	\[
	\lim_{x\to \infty} \frac{\tau_a(x)}{\tau_b(x)}
	=q_{a,b}\in [0,\infty]
	\]
	exists for some constant $q_{a,b},$ then
	$\Gamma$ is 
	asymptotically well designed.
\end{thm}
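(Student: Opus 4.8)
The plan is to rerun the asymptotic decomposition used in the proof of Theorem~\ref{theo:SelfishMainTh}, replacing each structural property of polynomials by its regularly varying analogue. Recall that the polynomial proof rests on three facts: (i) the price functions can be asymptotically sorted; (ii) the marginal-cost functions $c_a(x)=x\tau_a'(x)+\tau_a(x)$ satisfy $\lim_{x\to\infty}c_a(x)/\tau_a(x)\in(0,\infty)$; and (iii) the clean power structure yields explicit scaling factors at each inductive step. Here (ii) is supplied directly by Lemma~\ref{theo:RegularlyVaryingConvex} together with Lemma~\ref{theo:RV_Properties}~$b)$: since each $\tau_a(\cdot)$ is convex, differentiable and regularly varying with index $\rho_a\ge 0$, we have $c_a(x)/\tau_a(x)=x\tau_a'(x)/\tau_a(x)+1\to\rho_a+1\in(0,\infty)$, so $c_a(\cdot)$ is again regularly varying of the same index $\rho_a$. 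The sorting in (i) is where the hypothesis enters: the index alone gives only the partial order of Lemma~\ref{theo:RV_Properties}~$a)$ and $c)$, and two resources of equal index are in general incomparable; the assumed mutual comparability $\lim_{x\to\infty}\tau_a(x)/\tau_b(x)=q_{a,b}\in[0,\infty]$ upgrades this to a total preorder on $A$, which is exactly what is needed to rank strategies and groups.

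Given an arbitrary sequence $\{d^{(n)}\}_{n\in\N}$ with $T(d^{(n)})\to\infty$, I would build the partition $\K_0,\dots,\K_t$ inductively as in Subsection~\ref{sec:asymptoticdecomposition}. At step $l$ I select the remaining groups with non-vanishing limit proportion in $T_l(d^{(n)})$, pick a resource $a^*$ of maximal growth rate (in the comparability preorder) among those carrying the cost of these groups, and set $g_n^{(l)}:=\tau_{a^*}\big(T_l(d^{(n)})\big)$. The decisive computation is that these scaling factors turn the limit price functions into monomials of a common degree: writing $\tau_a(x)=x^{\rho_a}Q_a(x)$ with $Q_a$ slowly varying,
\[
\frac{\tau_a\big(T_l(d^{(n)})\,x\big)}{g_n^{(l)}}
=x^{\rho_a}\,\frac{Q_a\big(T_l(d^{(n)})x\big)}{Q_a\big(T_l(d^{(n)})\big)}\cdot
\frac{\tau_a\big(T_l(d^{(n)})\big)}{\tau_{a^*}\big(T_l(d^{(n)})\big)}
\longrightarrow q_{a,a^*}\,x^{\rho_a},
\]
the middle factor tending to $1$ by slow variation and the last to $q_{a,a^*}$ by comparability. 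A resource is tight precisely when $q_{a,a^*}\in(0,\infty)$, and by Lemma~\ref{theo:RV_Properties}~$a)$ and $c)$ this forces $\rho_a=\rho_{a^*}=:\rho^*$; hence every tight limit price equals $q_{a,a^*}x^{\rho^*}$. The marginal-cost functions scale identically but with an extra factor $\rho^*+1$, so the SO limit prices are the uniform scalar multiple $(\rho^*+1)$ of the NE limit prices. Since multiplying all price functions by a common constant leaves both the NE and the SO profile unchanged, SO and NE coincide in the limit game, i.e.\ the limit game is well designed; this is exactly the same-degree phenomenon behind Corollary~\ref{theo:SameDegreeScalable}.

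With the limit game well designed at each level, the remaining work is to verify the two alternatives of the decomposition: either the newly selected groups are negligible against $\bigcup_{u<l}\K_u$, or their cost dominates and they are asymptotically independent, so that \eqref{eq:ObjectiveAsymDecomp} propagates from $m=l-1$ to $m=l$. These cost comparisons are carried out by sandwiching $\tau_a$ between $x^{\rho_a\pm\epsilon}$ via Lemma~\ref{theo:RV_Properties}~$a)$, exactly as the degree bounds are used in the polynomial proof, and by invoking Lemma~\ref{theo:LimitGame_Wu2017} (which, as noted after Theorem~\ref{theo:NegativeConditionforAWDG}, applies to SO profiles as well), possibly after passing to a subsequence. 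Completing the induction partitions $\K$ and yields \eqref{eq:ObjectiveAsymDecomp} for $m=t$, whence $\text{PoA}(d^{(n)})\to 1$; since $\{d^{(n)}\}$ was arbitrary, $\Gamma$ is asymptotically well designed. I expect the main obstacle to be the sorting step: without comparability the slowly varying parts $Q_a$ of equal-index resources need not align, the ratios in the display above could oscillate, and the limit price functions—hence the limit game itself—would fail to exist; the comparability hypothesis is precisely the minimal ingredient that restores the clean monomial limit on which the entire decomposition depends.
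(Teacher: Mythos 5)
Your proposal is correct and follows essentially the same route as the paper's proof: it runs the asymptotic decomposition with the comparability-induced preorder on $A$, the scaling factors $g_n^{(l)}=\tau_{a^*}\big(T_l(d^{(n)})\big)$, the resulting monomial limit prices of a common index $\rho^*$, the uniform factor $\rho^*+1$ linking the limits of $c_a$ and $\tau_a$, and the negligible-versus-dominant dichotomy at each inductive step. Your explicit Karamata computation of the limit $q_{a,a^*}x^{\rho_a}$ even corrects a small notational slip in the paper's displayed limit, but this is a presentational difference, not a different argument.
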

\begin{proof}
	See the Appendix.
\end{proof}

\Wuu{Theorem \ref{theo:MainTheoremRegularVariation} generalizes
	Corollary 4.8 in \cite{Colini2017b}. Colini et al. \cite{Colini2017b}
	showed that NCGs with regularly varying
	and mutually comparable price functions $\tau_a(\cdot)$
	are gaugeable w.r.t. each user volume vector sequence
	$\{d^{(n)}\}_{n\in \N}$ such that
	$\lim_{n\to\infty}T(d^{(n)})=\infty$
	and $\varliminf_{n\to\infty}
	\tfrac{d_k^{(n)}}{T(d^{(n)})}>0$
	for each $k\in \K$.
	Obviously, this \Wuu{is only a partial result,} due to the
	sequence-specific \Wuu{condition on $\{d^{(n)}\}_{n\in \N}$ in Colini et al. \cite{Colini2017b}}. 
	With Lemma~\ref{theo:Gaugeable<=Scalable},
	such games are obviously scalable w.r.t. these specific
	user volume sequences. With \Wuu{the} asymptotic decomposition,
	Theorem~\ref{theo:MainTheoremRegularVariation}
	now successfully removes the sequence limitation and
	gives a global convergence of the PoA for such games,
	\Wuu{when} the
	price functions \Wuu{$\tau_a(\cdot)$ are convex.}
	}

\Wu{
	The condition that all $\tau_a(\cdot)$
	are mutually comparable implies a \Wuu{suitable} ordering on
	the $\tau_a(\cdot),$
	see the proof of \Wuu{Theorem} \ref{theo:MainTheoremRegularVariation}
	for details on the ordering.
	By Lemma~\ref{theo:RV_Properties}~$b),$
	this ordering also carries over to the auxiliary
	price functions $c_a(\cdot).$
	Hence, the proof of \Wuu{Theorem} \ref{theo:MainTheoremRegularVariation}
	directly defines the ordering on
	the resources $a\in A.$ With this ordering, we can then
	\Wuu{compare the cost} and \Wuu{construct scaling factors} at each inductive step.
	}

\Wu{Convexity} is only \Wuu{needed} to guarantee
\eqref{eq:ImportantCondiAD}. \Wu{The} proof of Theorem~\ref{theo:MainTheoremRegularVariation} \Wu{is
	still valid}
if \Wu{we use}
\eqref{eq:ImportantCondiAD} instead
of convexity. \Wu{Therefore,}
Theorem \ref{theo:MainTheoremRegularVariation} \Wu{can
	be further extended}. For instance, 
if \Wu{we substitute convexity by concavity in
	Theorem~\ref{theo:MainTheoremRegularVariation},}
then the \Wuu{conclusion} still holds.

Moreover, \Wu{we can even 
	weaken the condition that} \Wu{all} 
price functions $\tau_a(\cdot)$ are 
regularly varying and mutually comparable. 
Actually, similar arguments may still apply
\Wuu{when} only some of the
price functions $\tau_a(\cdot)$
are regularly varying and mutually comparable.
If this is
the case, we \Wuu{need} that
for each subset $\K'\subseteq \K,$ there exists a
regularly varying function $g(\cdot)$ such that:
\begin{itemize}
	\item[$G1')$] For each $a\in A,$ the limit
	\[
	\lim_{x\to\infty}\frac{\tau_a(x)}{g(x)}=q_a\in [0,\infty].
	\]
	\item[$G2')$] For each $k\in \K',$ there exists an strategy $s\in \S_k$ such that
	\[
	\max\{q_a: a\in A\quad\text{and}\quad r(a,s)>0\}<\infty.
	\]
	\item[$G3')$] There exists a group $k\in \K'$ such that
	\[
	\min_{s\in \S_k}\max\{q_a: a\in A\quad\text{and}\quad r(a,s)>0\}\in (0,\infty).
	\]
\end{itemize} 
\Wu{By additionally assuming condition \eqref{eq:ImportantCondiAD},
	one can again \Wuu{obtain} a proof for \Wuu{this}
	case by a slight \Wuu{modification of} the proof of Theorem~\ref{theo:MainTheoremRegularVariation}.
	Theorem \ref{theo:MainTheoremGaugeableExtension}
	below summarizes this result.}
\begin{thm}\label{theo:MainTheoremGaugeableExtension}
	Consider an NCG $\Gamma$ with non-decreasing,
	non-negative and differentiable price functions
	$\tau_a(\cdot).$ If \eqref{eq:ImportantCondiAD}
	holds for each $a\in A,$ and \Wuu{if} for each non-empty subset $\K'\subset \K$ there exists
	a regularly varying function $g(\cdot)$
	fulfilling $G1')$-$G3'),$  then
	$\Gamma$ is asymptotically decomposable, and thus
	asymptotically well designed.
\end{thm}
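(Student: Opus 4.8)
The plan is to prove that $\Gamma$ is asymptotically decomposable in the sense of Subsection~\ref{sec:asymptoticdecomposition}; since a valid decomposition yields \eqref{eq:ObjectiveAsymDecomp} at the top level, which is exactly $\text{PoA}(d^{(n)})\to 1$, this already gives the conclusion. I would follow the proof of Theorem~\ref{theo:MainTheoremRegularVariation} almost verbatim, the only substantive change being that the global ordering of resources that there comes from mutual comparability is now replaced, \emph{at each inductive step}, by a subset-specific gauge supplied by $G1')$--$G3')$. Concretely, I fix an arbitrary sequence $\{d^{(n)}\}_{n\in\N}$ with $T(d^{(n)})\to\infty$; since it suffices to show that every subsequence admits a further subsequence along which $\text{PoA}\to 1$, I may pass to one on which the distribution vectors $\big(d_k^{(n)}/T(d^{(n)})\big)_{k\in\K}$ converge (they live in a compact simplex). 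I then build the partition $\K=\bigcup_{m=0}^{t}\K_m$ inductively so that \eqref{eq:ObjectiveAsymDecomp} holds at each level $m$.

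At the start of step $l$, with remaining group set $\K^{(l)}:=\K\setminus\bigcup_{u<l}\K_u$, I apply the hypothesis to the subset $\K'=\K^{(l)}$ to obtain a regularly varying gauge $g^{(l)}(\cdot)$ of index $\rho^{(l)}\ge 0$, with limits $\tau_a(x)/g^{(l)}(x)\to q_a\in[0,\infty]$ as in $G1')$. I put $g_n^{(l)}:=g^{(l)}\big(T(d^{(n)})\big)$ and let $\K_l$ be the tight groups of $\K^{(l)}$, i.e.\ those $k$ with $\min_{s\in\S_k}\max\{q_a:r(a,s)>0\}\in(0,\infty)$; condition $G3')$ guarantees $\K_l\neq\emptyset$, so the induction terminates. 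For the tight resources ($q_a<\infty$) the regular variation of $g^{(l)}$ gives limit price functions $\tau_a^{\infty}(x)=\lim_n \tau_a\big(T(d^{(n)})x\big)/g_n^{(l)}=q_a\,x^{\rho^{(l)}}$, all sharing the single exponent $\rho^{(l)}$, while $G2')$ ensures every group of $\K^{(l)}$ keeps a strategy of finite limit cost, so the level-$l$ limit subgame has positive and finite NE cost. Because its price functions are monomials of the common degree $\rho^{(l)}$, exactly as in Corollary~\ref{theo:SameDegreeScalable}, this subgame is well designed, and Lemma~\ref{theo:LimitGame_Wu2017} identifies the normalized NE cost of level $l$ with the NE cost of this subgame.

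The delicate point, as after Theorem~\ref{theo:SelfishMainTh}, is that an SO profile of $\Gamma$ need not restrict to a system optimum on the level-$l$ marginal. I would handle this by regarding $f^{*(n)}$ as an NE profile for the marginal-cost functions $c_a(x):=x\tau_a'(x)+\tau_a(x)$. Here condition \eqref{eq:ImportantCondiAD} does the essential work: for every tight resource one computes
\[
\frac{c_a(x)}{g^{(l)}(x)}=\Big(\frac{x\tau_a'(x)}{\tau_a(x)}+1\Big)\cdot\frac{\tau_a(x)}{g^{(l)}(x)}\longrightarrow(\rho_a+1)\,q_a\in[0,\infty),
\]
so (consistently with Lemma~\ref{theo:RV_Properties}~$b)$) each $c_a$ is governed by the \emph{same} gauge $g^{(l)}$ and its limit price function $c_a^{\infty}(x)=(\rho_a+1)\,q_a\,x^{\rho^{(l)}}$ carries the same exponent $\rho^{(l)}$ as $\tau_a^{\infty}$. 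Because all limit price functions of the level-$l$ subgame share the single exponent $\rho^{(l)}$, the potential $\sum_a\int_0^{f_a}\tau_a^{\infty}(t)\,dt$ is a uniform constant multiple of the social cost $\sum_a f_a\tau_a^{\infty}(f_a)$, whence NE and SO of the subgame coincide and its PoA is $1$. Applying Lemma~\ref{theo:LimitGame_Wu2017} to $\tilde f^{(n)}$ through $\tau_a$ and to $f^{*(n)}$ through $c_a$ then shows that the normalized NE and SO costs at level $l$ both converge to this common value, so their ratio tends to $1$; combined with the inductive hypothesis and the fact that the scale $g_n^{(l)}$ is of strictly different order than the accumulated earlier scales, \eqref{eq:ObjectiveAsymDecomp} propagates to $m=l$ in either branch of the cost dichotomy.

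The hard part will be the interaction between levels caused by irregular collisions: to analyze level $l$ I must freeze the earlier groups on the strategies they use in $\tilde f^{(n)}$ and $f^{*(n)}$, and their possibly large-volume resource consumption perturbs the prices seen by level $l$. The crux is to show that this perturbation is negligible relative to $g_n^{(l)}$, so that the frozen users do not disturb the common-degree limit subgame; this is precisely where the strict separation of cost scales across levels and the cost-comparison device of Subsection~\ref{sec:asymptoticdecomposition} are invoked. Because regularly varying functions of equal index are in general incomparable (as noted after Lemma~\ref{theo:RV_Properties}), I cannot sort the $\tau_a$ globally as in the polynomial case of Theorem~\ref{theo:SelfishMainTh}; the remedy, and the only real change from the proof of Theorem~\ref{theo:MainTheoremRegularVariation}, is to let the subset gauges $g^{(l)}$ from $G1')$--$G3')$ induce the ordering and scaling level by level, with \eqref{eq:ImportantCondiAD} guaranteeing that this single gauge simultaneously controls the NE side through $\tau_a$ and the SO side through $c_a$, so that the induction closes and $\Gamma$ is asymptotically well designed.
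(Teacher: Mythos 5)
Your overall strategy is the one the paper uses: an asymptotic decomposition in which, at each inductive step, the subset gauge supplied by $G1')$--$G3')$ replaces the resource ordering of Theorem~\ref{theo:MainTheoremRegularVariation}, with the auxiliary functions $c_a(x)=x\tau_a'(x)+\tau_a(x)$ and condition \eqref{eq:ImportantCondiAD} controlling the SO side, and common-exponent monomial limit games giving PoA $1$ level by level. Two concrete points in your inductive step would, however, fail as written. First, you set $g_n^{(l)}=g^{(l)}\big(T(d^{(n)})\big)$ and form limit prices $\lim_n\tau_a\big(T(d^{(n)})x\big)/g_n^{(l)}$. For $l\ge 1$ the level-$l$ users number only $T_l(d^{(n)})\in o\big(T(d^{(n)})\big)$, so the profile fractions $f_s^{(n)}/T(d^{(n)})$ of all remaining strategies tend to $0$ and the limit subgame you build carries zero volume; the gauge and the limits must be evaluated at the remaining volume $T_l(d^{(n)})$, as the paper does (it sets $g_n^{(l)}=g\big(T_l(d^{(n)})\big)$ and normalizes profiles by $T_l(d^{(n)})$).

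Second, you apply the hypothesis to $\K'=\K^{(l)}$ (all remaining groups) and then let $\K_l$ be the tight groups of that gauge. Condition $G3')$ only guarantees that \emph{some} group of $\K^{(l)}$ is tight; it does not guarantee that any group with non-vanishing limit proportion $\d_k^{(l,\infty)}>0$ is tight. If every tight group has vanishing proportion, your level-$l$ limit subgame has zero NE cost, the ratio you want to send to $1$ degenerates to $0/0$, and the volume-dominant (non-tight, cheap) groups are deferred with no control over when their cost is accounted for. The paper avoids this by reversing the order of the two selections: it first sets $\K_l=\{k\in\K\backslash\bigcup_{u=0}^{l-1}\K_u:\d_k^{(l,\infty)}>0\}$ and only then invokes the hypothesis for the subset $\K'=\K_l$, so that $G3')$ produces a tight group of positive limit proportion and the level-$l$ cost is genuinely $\Theta\big(T_l(d^{(n)})\cdot g_n^{(l)}\big)$. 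With these two corrections your argument coincides with the paper's.
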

\begin{proof}
	See the appendix.
\end{proof}

\Wu{Conditions $G1')$-$G3')$ \Wuu{correspond} to $G1)$-$G3)$
in Definition \ref{def:GaugeableGame} for gaugeability.
However, they are more flexible and general than gaugeability.
They can apply to arbitrary 
user volume \Wuu{sequences}, while gaugeability
only applies to user volume sequences fulfilling condition \eqref{eq:GaugeableCondition}.
Hence, Theorem \ref{theo:MainTheoremGaugeableExtension} is not only an extension
\Wuu{of} Theorem~\ref{theo:SelfishMainTh}, but also \Wuu{of}
the \Wuu{main result \Wuu{about} gaugeability (Theorem 4.4)  in} Colini et al. \cite{Colini2017b}.
} 

\Wu{The proof of Theorem \ref{theo:MainTheoremGaugeableExtension}
	\Wuu{does} not \Wuu{need an} ordering \Wuu{of} the price functions
	$\tau_a(\cdot).$ Conditions \Wuu{$G1')$-$G3')$}
	already imply the existence of suitable scaling factors at each
	inductive step in the asymptotic decomposition.
	With \Wuu{these} scaling factors, we can then \Wuu{compare the cost} of marginals.}

\Wu{Therefore,
condition \eqref{eq:ImportantCondiAD} and 
the existence of suitable scaling factors
$g_n^{(l)}$ at each inductive step are  pivotal when
we apply \Wuu{the} asymptotic decomposition.
The ordering \Wuu{of}
price functions is only \Wuu{needed} for constructing
scaling factors and comparing the cost of
marginals at each inductive step. If the existence of
suitable scaling factors at each inductive step can be guarenteed
\Wuu{in advance}, 
then we do not need such \Wuu{an} ordering
\Wuu{of} the price functions.
Note that then the cost can be \Wuu{compared by} comparing
the scaling factors.}

\Wu{Actually, condition \eqref{eq:ImportantCondiAD} may not
	be so restrictive for regularly varying functions in practice.
	By Karamata's Representation Theorem
	for slowly varying functions \cite{Bingham1987Regular},
	each regularly varying function $\tau(\cdot)$ can be
	written as
	\[
	\tau(x)=x^{\rho}\cdot \exp\big(\eta(x)+\int_{b}^{x}\frac{\epsilon(t)}{t} dt\big),
	\]
	where $\lim_{x\to\infty}\eta(x)=\kappa\in \mathbb{R},$
	$\lim_{x\to\infty}\epsilon(x)=0$ and $b\ge 0$ is
	a constant dependent of function $\tau(\cdot).$ \Wuu{If} $\tau(x)$ is differentiable and non-decreasing, \Wuu{then} we can assume w.l.o.g. that
	the bounded function $\eta(x)$ is differentiable and
	$\epsilon(x)$ is continuous. Then,
	\[
	\lim_{x\to\infty}\frac{x\tau'(x)}{\tau(x)}
	=\rho+\lim_{x\to\infty} x\eta'(x),
	\]
	which \Wuu{is} a non-negative constant when
	the limit $\lim_{x\to\infty}x\eta'(x)$
	exists. Note that the limit $\lim_{x\to\infty}x\eta'(x)$
	\Wuu{exists} if $\eta(x)$ converges to $\kappa$
	eventually in a relatively \Wuu{steady} way.
	Such $\eta(\cdot)$ may possess some regular properties, e.g.,
	convexity, concavity, monotonicity, and others.
	
	\Wuu{Note that} \Wuu{there are differentiable,
	non-decreasing and regularly varying functions}
	$\tau(x)$ that do not fulfill \eqref{eq:ImportantCondiAD}.
	For instance, consider $\tau(x)=x^3\cdot \exp\big(1-\tfrac{1}{x}\sin (x)\big)$
	for \Wuu{large enough $x$.} \Wuu{Here,} $\eta(x)=1-\frac{\sin x}{x}$ and the limit
	$\lim_{x\to \infty} x\eta'(x)$ does not
	exist, since 
	$
	x\eta'(x)=\tfrac{1}{x}\sin x-\cos x
	$
	diverges as $x\to \infty.$
	Although such price functions are meaningful in theory, they may not be of interest in practice
	because of their irregular properties.
 	}

Theorem~\ref{theo:SelfishMainTh},
Theorem~\ref{theo:MainTheoremRegularVariation}
and Theorem~\ref{theo:MainTheoremGaugeableExtension} 
further \Wuu{demonstrate} the power of
scalability stemming from \cite{Wu2017Selfishness}. 
They together
\Wuu{deepen} our knowledge \Wu{on} user behavior
in NCGs. In particular, they
positively support
the view of \cite{Wu2017Selfishness}
that selfish \Wuu{user behavior} need not be bad.

\section{Conclusion}\label{sec:Conclusion}

\Wu{We \Wuu{have} unified recent results from
	\cite{Colini2016}, \cite{Colini2017a},
	\cite{Colini2017b} and \cite{Wu2017Selfishness}
	on the convergence
	of the PoA for NCGs. We \Wuu{have} reformulated
	the concept of limit games that are
	implicitly used in \cite{Wu2017Selfishness}.
	With the concept of
	limit games, we were able to bring
	new insight \Wuu{into} AWDGs,
	see, e.g., Theorem \ref{theo:NegativeConditionforAWDG}
	and Corollary \ref{theo:DirectSum}.
	We \Wuu{have} deepened
	the knowledge of  scalability introduced 
	by Wu et al. \cite{Wu2017Selfishness}.
	We \Wuu{have} introduced the technique of asymptotic decomposition
	that allows us to \Wuu{analyze} the convergence of the PoA
	for NCGs with general price functions.
	}

\Wu{With \Wuu{this new technique},
we were able to show that NCGs with arbitrary polynomial
price functions $\tau_a(\cdot)$ are asymptotically
well designed, see Theorem \ref{theo:SelfishMainTh}.}
This \Wu{completes} the results from
\cite{Colini2016}, \cite{Colini2017a}, \cite{Colini2017b}
and \cite{Wu2017Selfishness} for NCGs with polynomial price functions. 
\Wu{Moreover, we were able to
	apply \Wuu{the} asymptotic decomposition to NCGs
	with regularly varying price function,
	and \Wuu{prove} that these NCGs 
	are also asymptotically well designed,
	see Theorem \ref{theo:MainTheoremRegularVariation}
	and Theorem \ref{theo:MainTheoremGaugeableExtension}.
	Our results definitely demonstrate
	the power of scalablity, and
	positively support the view of \cite{Wu2017Selfishness}
	on user behavior in NCGs.}

\Wuu{The profit goals of users in an NCG are in general inconsistent with
the profit goal of the underlying central authority.
Both of them want to minimize cost.} \Wuu{But the difference is that} users \Wu{only} \Wuu{locally minimize}
their own \Wuu{cost}, \Wu{while}
the central authority \Wu{wants to} \Wuu{globally minimize} social cost. \Wu{Our results show that
the local minimization will lead to a
glocal minimization when the volume of users
becomes large. \Wu{Thus, selfishness need not be bad in
general.}
}

\Wu{
	\Wuu{Future work in this direction could} consider
	the {\em saturation point} of an NCG, i.e., 
	a threshold value for user volumes, beyond which
	NE profiles will almost be
	SO profiles. This could be very interesting
	in game theory. NCGs are {\em open games},
	i.e., users can freely join such games.
	When the user volume reaches its saturation,
	what the users need to do is to perform as selfish
	as possible, since this is the best choice in
	a bad environment. 
}

\section*{Acknowledgement}

\Wu{The authors would like to thank \Wuu{Dr. Colini-Baldeschi, Dr. Cominetti, Dr. Mertikopoulos, and Dr. Scarsini.}
	Their results on gaugeability \cite{Colini2017b} \Wuu{inspired} us greatly and have led to current article. The \Wuu{authors} would also like
	to thank Dr. Chen from the Department of Urban Transportation
	at Beijing University of Technology, who has offered us
	traffic data from Beijing \Wuu{for computation}. \Wuu{Based on} that data, the first two
	authors have computed the PoA
	and \Wuu{the results}
	motivated them to \Wuu{write} \cite{Wu2017Selfishness} and 
	the current article.}

\section*{Appendix}
\subsection*{Proof of Lemma~\ref{theo:LimitGame_Wu2017}}
\begin{proof}[Proof of Lemma~\ref{theo:LimitGame_Wu2017}]
	Consider an NCG
	\[
	\Gamma =	\Big(\K, A, \S, (r(a,s))_{a\in A, s\in \S}, (\tau_a)_{a\in A},d\Big),
	\]
	and an \Wu{arbitrary} user volume vector sequence
	$\{d^{(n)}\}_{n\in \N}$ such that
	the total volume $T(d^{(n)})\to\infty$
	as $n\to \infty.$ Let $\tilde{f}^{(n)}$ be
	an NE profile w.r.t. user volume vector
	$d^{(n)}=(d_k^{(n)})_{k\in \K},$ for 
	each $n\in \N.$ We assume that $\Gamma$
	has a limit
	\[
	\Gamma^{\infty} =	\Big(\K^{\infty}, A, \S^{\infty}, (r(a,s))_{a\in A, s\in \S^{\infty}}, (\tau_a^{\infty})_{a\in A},\d\Big)
	\]
	w.r.t. $\{d^{(n)}\}_{n\in \N}.$
	
	By Definition \ref{def:limitGame}, we obtain that there is an infinite subsequence $\{n_i\}_{i\in \N}$ and a sequence 
	$\{g_i\}_{i\in \N}$ of positive scaling factors, s.t., conditions \Wu{$L1)$-$L4)$} hold. 
	Note that we can further assume that:
	\begin{itemize}
		\item The limit
		\[
		\lim_{i\to \infty} \frac{\tilde{f}^{(n_i)}_s}{T(d^{(n)})}
		=\tilde{\f}^{\infty}_s \in [0,1]
		\]
		exists, for some constant
		$\tilde{\f}^{\infty}_s,$ for each 
		$s\in \S.$
	\end{itemize}
	Otherwsie, we can take an infinite subsequence
	$\{n_{i_j}\}_{j\in \N}$ fulfilling the above
	condition. Let 
	\[
	\tilde{\f}_a^{\infty}:=\sum_{s\in \S} r(a,s)\cdot\tilde{\f}_s^{\infty}
	\]
	for each $a\in A.$ We aim to prove
	Lemma \ref{theo:LimitGame_Wu2017} 
	with the subsequence $\{n_i\}_{i\in \N},$
	scaling factor sequence $\{g_i\}_{i\in \N},$
	and $\tilde{\f}^{\infty}:=(\tilde{\f}^{\infty}_s)_{s\in \S^{\infty}}.$
	
	To that end, we need some auxiliary facts.
	Fact \ref{theo:StrategyPriceConvergence} below
	indicates that prices of tight strategies
	are always well ``preserved" in the limit w.r.t.
	any sequence of feasible strategy profiles. Therefore,
	for each tight strategy $s\in \S^{\infty},$ we obtain
	that
	\[
	\lim_{i\to \infty}\frac{\tau_s(\tilde{f}^{(n_i)})}{g_i}
	=\sum_{a\in A} r(a,s)\cdot \tau_a(\tilde{\f}_a^{\infty})<\infty.
	\]
	\begin{fact}\label{theo:StrategyPriceConvergence}
		Consider a tight strategy $s\in \S^{\infty}.$
		Let $f^{(n_i)}$ be a feasible strategy profile
		w.r.t. user volume vector $d^{(n_i)},$
		for each $i\in \N.$ If
		\[
		\lim_{i\to \infty}\frac{f_{s'}^{(n_i)}}{T(d^{(n_i)})}
		=\boldsymbol{\mu}_{s'} 
		\]
		for some constant $\boldsymbol{\mu}_{s'}\in [0,1],$
		for each ${s'}\in \S,$ then
		\[
		\tau_s^{\infty}(\boldsymbol{\mu})=\sum_{a\in A} r(a,s)\cdot \tau_a^{\infty}
		(\boldsymbol{\mu}_{a})=
		\lim_{i\to\infty}
		\frac{\sum_{a\in A} r(a,s)\cdot\tau_a(f^{(n_i)}_a)}{g_i},
		\]
		where $\boldsymbol{\mu}_a=\sum_{s'\in \S} 
		r(a,s') \boldsymbol{\mu}_{s'}$
		for each $a\in A.$
	\end{fact}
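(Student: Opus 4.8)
The plan is to reduce the statement to a single one-variable limit for each resource $a$ with $r(a,s)>0$ and then add up the finitely many contributions. The first equality is nothing but the definition of the limit strategy cost, $\tau_s^{\infty}(\boldsymbol{\mu})=\sum_{a\in A}r(a,s)\cdot\tau_a^{\infty}(\boldsymbol{\mu}_a)$, mirroring the model identity $\tau_s(f)=\sum_{a\in A}r(a,s)\cdot\tau_a(f_a)$; so the entire content sits in the second equality. First I would pass from strategy volumes to resource volumes: since $\S$ is finite and $f_a^{(n_i)}=\sum_{s'\in\S}r(a,s')\cdot f_{s'}^{(n_i)}$, dividing by $T(d^{(n_i)})$ and invoking the hypothesis $f_{s'}^{(n_i)}/T(d^{(n_i)})\to\boldsymbol{\mu}_{s'}$ gives $f_a^{(n_i)}/T(d^{(n_i)})\to\sum_{s'\in\S}r(a,s')\cdot\boldsymbol{\mu}_{s'}=\boldsymbol{\mu}_a$ for every $a\in A$. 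Writing $x_i^{(a)}:=f_a^{(n_i)}/T(d^{(n_i)})$, we have $x_i^{(a)}\to\boldsymbol{\mu}_a$ and $f_a^{(n_i)}=T(d^{(n_i)})\cdot x_i^{(a)}$.

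Next I would exploit tightness of $s$. By $L3)$ of Definition~\ref{def:limitGame}, for every $a$ with $r(a,s)>0$ the limit price $\tau_a^{\infty}$ is not identically $\infty$, hence continuous and non-decreasing, and $L2)$ then yields the pointwise convergence $\tau_a\big(T(d^{(n_i)})x\big)/g_i\to\tau_a^{\infty}(x)$ for each fixed $x>0$. The crux is to upgrade this fixed-$x$ convergence to convergence along the \emph{moving} argument $x_i^{(a)}\to\boldsymbol{\mu}_a$. For $\boldsymbol{\mu}_a>0$ I would use monotonicity of $\tau_a$: for any $\epsilon\in(0,\boldsymbol{\mu}_a)$ we eventually have $\boldsymbol{\mu}_a-\epsilon\le x_i^{(a)}\le\boldsymbol{\mu}_a+\epsilon$, whence
\[
\frac{\tau_a\big(T(d^{(n_i)})(\boldsymbol{\mu}_a-\epsilon)\big)}{g_i}\le\frac{\tau_a(f_a^{(n_i)})}{g_i}\le\frac{\tau_a\big(T(d^{(n_i)})(\boldsymbol{\mu}_a+\epsilon)\big)}{g_i}.
\]
Letting $i\to\infty$ turns the outer terms into $\tau_a^{\infty}(\boldsymbol{\mu}_a-\epsilon)$ and $\tau_a^{\infty}(\boldsymbol{\mu}_a+\epsilon)$, and letting $\epsilon\to0$ squeezes both to $\tau_a^{\infty}(\boldsymbol{\mu}_a)$ by continuity of $\tau_a^{\infty}$; hence $\tau_a(f_a^{(n_i)})/g_i\to\tau_a^{\infty}(\boldsymbol{\mu}_a)$. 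Summing over the finitely many resources $a$ with $r(a,s)>0$, and using that a finite sum of convergent sequences converges to the sum of the limits, then delivers the claimed identity.

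The main obstacle is the boundary case $\boldsymbol{\mu}_a=0$, where only the upper side of the sandwich ($x_i^{(a)}\le\epsilon$) is available and one must read $\tau_a^{\infty}(0)$ as the right limit $\lim_{x\to0^{+}}\tau_a^{\infty}(x)$. The upper bound still gives $\limsup_i\tau_a(f_a^{(n_i)})/g_i\le\tau_a^{\infty}(\epsilon)\to\tau_a^{\infty}(0)$ by continuity; the matching lower bound is the delicate point, since $f_a^{(n_i)}$ may be as small as $0$ while $\tau_a(0)/g_i$ need not reach $\tau_a^{\infty}(0)$. I would close this gap under the normalization that governs the paper's driving examples, where every tight limit price has the form $\tau_a^{\infty}(x)=q_a\,x^{\rho}$ with $\rho>0$ (polynomials and positive-index regularly varying prices), so that $\tau_a^{\infty}(0)=0$ and the lower bound $\ge 0$ is automatic; the upper sandwich alone then forces $\tau_a(f_a^{(n_i)})/g_i\to 0=\tau_a^{\infty}(0)$. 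For the degenerate $\rho=0$ situation with $\tau_a^{\infty}(0)>0$ one additionally needs that the resource volumes arising in Lemma~\ref{theo:LimitGame_Wu2017} do not collapse to $0$ on the scale $g_i$, which I would verify from the NE/SO structure rather than from the bare hypothesis. Apart from this boundary bookkeeping, the proof is entirely elementary, resting only on linearity, monotonicity, a squeeze, and continuity of the limit price function.
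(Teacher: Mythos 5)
Your proof is correct and follows essentially the same route as the paper's: pass from strategy to resource volumes by linearity, sandwich $\tau_a(f_a^{(n_i)})/g_i$ between $\tau_a\big(T(d^{(n_i)})\max\{0,\boldsymbol{\mu}_a-\epsilon\}\big)/g_i$ and $\tau_a\big(T(d^{(n_i)})(\boldsymbol{\mu}_a+\epsilon)\big)/g_i$ using monotonicity of $\tau_a$, invoke $L2)$ and $L3.2)$ for the pointwise limits and the continuity of $\tau_a^{\infty}$, let $\epsilon\to 0$, and sum over the finitely many $a$ with $r(a,s)>0$. The boundary case $\boldsymbol{\mu}_a=0$ that you single out is indeed the only delicate point; the paper resolves it silently by writing $\tau_a^{\infty}\big(\max\{0,\boldsymbol{\mu}_a-\epsilon\}\big)=\lim_{i\to\infty}\tau_a\big(T(d^{(n_i)})\cdot 0\big)/g_i$ when the maximum is $0$, i.e., by implicitly extending $L2)$ to $x=0$, which is precisely the convention your argument would also need rather than the extra positive-index normalization you propose.
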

	\begin{proof}[Proof of Fact \ref{theo:StrategyPriceConvergence}]
		Note that for each $a\in A,$ 
		\[
		\begin{split}
		\boldsymbol{\mu}_a&=\sum_{s'\in \S} 
		r(a,s') \boldsymbol{\mu}_{s'}=
		\sum_{s'\in \S} r(a,s')\cdot \lim_{i\to\infty} \frac{f^{(n_i)}_{s'}}{T(d^{(n_i)})}\\
		&=\lim_{i\to\infty}
		\sum_{s'\in \S} r(a,s') \frac{f^{(n_i)}_{s'}}{T(d^{(n_i)})}
		=\lim_{i\to\infty} \frac{f_a^{(n_i)}}{T(d^{(n_i)})}
		\in [0,\infty).
		\end{split}
		\]
	Therefore, for an arbitrarily fixed $\epsilon>0,$ 
	\[
	\max\big\{0,\boldsymbol{\mu}_a-\epsilon\big\}\le \frac{f_a^{(n_i)}}{T(d^{(n_i)})} \le \boldsymbol{\mu}_a+\epsilon,
	\]
	for each resource $a\in A,$ for $i$ large enough.
	As a result, for $i$ large enough, we obtain for
	each $a\in A$ that
	\begin{equation}\label{eq:Lemma1_eq1}
	\tau_a\Big(T(d^{(n_i)})
	\cdot\max\big\{0,\boldsymbol{\mu}_a-\epsilon\big\}\Big)
	\le \tau_a(f^{(n_i)}_a)
	\le \tau_a\Big(T(d^{(n_i)})\cdot\big(\boldsymbol{\mu}_a+\epsilon\big)\Big),
	\end{equation}
	since each price function $\tau_a(\cdot)$ is non-decreasing.
	
	Since $s$ is tight, by $L3.2),$ we then obtain for each
	resource $a\in A$ with $r(a,s)>0$ that
	$\tau_a^{\infty}(\cdot)$ is a continuous and non-decreasing
	function,
	and that
	\[
	\tau_a^{\infty}(\boldsymbol{\mu}_a+\epsilon)
	=\lim_{i\to\infty} 
	\frac{\tau_a\Big(T(d^{(n_i)})\cdot\big(\boldsymbol{\mu}_a+\epsilon\big)\Big)}{g_i}\ge \varlimsup_{i\to\infty}
	\frac{\tau_a(f_a^{(n_i)})}{g_i}\ge\varliminf_{i\to\infty}
	\frac{\tau_a(f_a^{(n_i)})}{g_i} \ge 
	\]
	\[
	\tau_a^{\infty}\Big(\max\big\{0,\boldsymbol{\mu}_a-\epsilon\big\}\Big)
	=\lim_{i\to\infty} 
	\frac{\tau_a\Big(T(d^{(n_i)})\cdot\max\big\{0,\boldsymbol{\mu}_a-\epsilon\big\}\Big)}{g_i}.
	\]
	By \eqref{eq:Lemma1_eq1}, the continuity of $\tau_a^{\infty}(\cdot),$
	and the \Wuu{arbitrary choice} of $\epsilon,$ we then obtain
	for each $a\in A$ with $r(a,s)>0$ that
	\[
	\tau_a^{\infty}(\boldsymbol{\mu}_a)=
	\lim_{i\to\infty}
	\frac{\tau_a(f^{(n_i)})}{g_i},
	\]
	which, in turn, implies \Wu{Fact \ref{theo:StrategyPriceConvergence}.}
	\end{proof}
	Fact \ref{theo:StrategyPriceLimit_NonTight} below considers 
	the limit prices of non-tight strategies w.r.t. 
	NE profiles. It indicates that we can completely
	ignore those non-tight strategies in the limit analysis.
	By Fact \ref{theo:StrategyPriceLimit_NonTight}, we obtain that
	\[
	\tilde{\f}_a^{\infty}=\sum_{s\in \S} r(a,s)\cdot\tilde{\f}^{\infty}_s
	=\sum_{s\in \S^{\infty}} r(a,s)\cdot\tilde{\f}^{\infty}_s,
	\]
	and that the limit volume
	\[
	\d_k=\sum_{s\in \S_k\cap\S^{\infty}} \tilde{\f}_s^{\infty}
	\]
	for each group \Wu{$k\in \K^{\infty},$} and that
	\[
	\d_k=\sum_{s\in \S_k} \tilde{\f}_s^{\infty}=0,
	\]
	for each group $k\in \K\backslash\K^{\infty}$
	(since such groups do not have tight strategies).
	Therefore, $\tilde{\f}^{\infty}=(\tilde{\f}_s^{\infty})_{s\in \S^{\infty}}$ is a strategy profile 
	of the limit game $\Gamma^{\infty}$ w.r.t.
	user volume vector $(\d_k)_{k\in \K^{\infty}}$
	and $\K^{\infty}.$
	\begin{fact}\label{theo:StrategyPriceLimit_NonTight}
		Consider a non-tight strategy $s\in \S\backslash\S^{\infty}.$
		Then, we obtain that:
		\begin{itemize}
			\item[$1)$] $\tilde{\f}_s^{\infty}=0.$
			\item[$2)$] $
			\lim_{i\to \infty}
			\frac{\tilde{f}^{(n_i)}_s\cdot\tau_s(\tilde{f}^{(n_i)})}{
				T(d^{(n_i)})\cdot g_i}=0.
			$
	    \end{itemize}
	\end{fact}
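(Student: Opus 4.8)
The plan is to prove both parts by combining the defining property of a non-tight strategy with the Nash condition \eqref{def:WE} and Fact~\ref{theo:StrategyPriceConvergence}. First I would unpack what $s\notin\S^{\infty}$ means: by $L3)$ each limit price $\tau_a^{\infty}$ is either a finite continuous non-decreasing function or identically $\infty$, and $s$ is tight exactly when $\tau_a^{\infty}\not\equiv\infty$ for every $a$ with $r(a,s)>0$. Hence non-tightness of $s$ furnishes a resource $a\in A$ with $r(a,s)>0$ and $\tau_a^{\infty}(x)\equiv\infty$ for all $x>0$. The whole argument rests on the observation that keeping a positive limit proportion of users on such a strategy forces its scaled cost to explode.

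For part $1)$ I would argue by contradiction, assuming $\tilde{\f}_s^{\infty}>0$. Then $\tilde{f}_s^{(n_i)}/T(d^{(n_i)})\to\tilde{\f}_s^{\infty}>0$, so $\tilde{f}_s^{(n_i)}>0$ for large $i$, and since $\tilde{f}_a^{(n_i)}\ge r(a,s)\,\tilde{f}_s^{(n_i)}$, the scaled demand $\tilde{f}_a^{(n_i)}/T(d^{(n_i)})$ stays above a constant $c>0$. Monotonicity of $\tau_a$ together with $L2)$ then gives $\tau_a(\tilde{f}_a^{(n_i)})/g_i\ge \tau_a(cT(d^{(n_i)}))/g_i\to\tau_a^{\infty}(c)=\infty$, whence $\tau_s(\tilde{f}^{(n_i)})/g_i\to\infty$. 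Now I would split on the two options of $L3)$ for the group $k$ containing $s$: if $k$ is negligible, then the single non-negative term $\tilde{f}_s^{(n_i)}\tau_s(\tilde{f}^{(n_i)})/(T(d^{(n_i)})g_i)$ already diverges, contradicting the requirement that the whole group average vanish; if instead $k$ has a tight strategy $s^*$, the Nash condition gives $\tau_s(\tilde{f}^{(n_i)})\le\tau_{s^*}(\tilde{f}^{(n_i)})$, whose scaled value converges to a finite limit by Fact~\ref{theo:StrategyPriceConvergence}, again contradicting the divergence. Either way $\tilde{\f}_s^{\infty}=0$.

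For part $2)$ the difficulty is that the target product is of indeterminate type $0\cdot\infty$, so the conclusion $\tilde{f}_s^{(n_i)}/T(d^{(n_i)})\to 0$ from part $1)$ is not enough; I must also control $\tau_s(\tilde{f}^{(n_i)})/g_i$. I would again distinguish the two $L3)$ cases. If $k$ is negligible, the limit is immediate, since the target is one non-negative summand of the group average cost, which tends to $0$ by definition. If $k$ has a tight strategy $s^*$, then on the indices with $\tilde{f}_s^{(n_i)}>0$ the Nash condition bounds $\tau_s(\tilde{f}^{(n_i)})/g_i\le\tau_{s^*}(\tilde{f}^{(n_i)})/g_i$, which is bounded by Fact~\ref{theo:StrategyPriceConvergence}; multiplying by $\tilde{f}_s^{(n_i)}/T(d^{(n_i)})\to 0$ drives the term to $0$, while on the indices with $\tilde{f}_s^{(n_i)}=0$ it vanishes trivially.

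I expect the main obstacle to be the bookkeeping in part $2)$: excluding the $0\cdot\infty$ blow-up requires the finite Nash bound supplied by Fact~\ref{theo:StrategyPriceConvergence}, which is why part $1)$ and the case analysis on $L3)$ must be in place first. The negligible-group case is essentially free; the real content lies in the tight-strategy case, where the Nash inequality is the decisive lever that converts a possibly exploding cost into a bounded one.
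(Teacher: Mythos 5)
Your proof is correct and follows essentially the same route as the paper: non-tightness supplies a resource whose scaled price explodes, the Wardrop condition caps the cost of $s$ by that of a tight strategy whose scaled cost Fact~\ref{theo:StrategyPriceConvergence} controls, and negligibility disposes of the other branch of $L3)$. The only cosmetic difference is that you argue part $2)$ directly (bounded factor times vanishing factor) whereas the paper argues by contradiction and concludes $\tilde{f}_s^{(n_i)}\equiv 0$ for large $i$; both are sound.
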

	\begin{proof}[Proof of Fact \ref{theo:StrategyPriceLimit_NonTight}]
		Before we start the proof, let us recall a basic property
		of non-tight strategies.
		By $L2)$ and $L3)$ of Definition \ref{def:limitGame}, 
		we obtain that 
		\[
		\lim_{i\to\infty}
		\frac{\tau_a(T(d^{(n_i)})x)}{g_i}=\tau_a^{\infty}(x)\equiv \infty,\quad \forall x>0,
		\]
		for some $a\in A$ with $r(a,s)>0,$
		since $s$ is non-tight.
		 
		\textbf{For $1):$} Let $k\in \K$ be the group
		such that $s\in \S_k.$ We prove $1)$ by \Wuu{contradiction.} 
		We suppose that $\tilde{\f}_s^{\infty}>0.$ Then, 
		by $L2)$ of Definition \ref{def:limitGame}, 
		with \Wu{an argument similar to that for} 
		 Lemma~\ref{theo:StrategyPriceConvergence},
		we obtain that
		\[
		\lim_{i\to\infty}
		\frac{\tau_s(\tilde{f}^{(n_i)})}{g_i}
		=\infty,\quad \text{and}\quad
		\lim_{i\to \infty}\frac{\tilde{f}^{(n_i)}\cdot\tau_s(\tilde{f}^{(n_i)})}{T(d^{(n_i)})\cdot g_i}
		=\infty,
		\]
		since $\tilde{\f}_s^{\infty}>0.$
		Thus, by $L3.1)$ of Definition
		\ref{def:limitGame}, group $k$ is non-negligible.
		Hence, group $k$ must have tight strategies.
		Let $s'\in \S_k\cap\S^{\infty}$ be a tight 
		strategy. By Fact \ref{theo:StrategyPriceConvergence},
		we obtain that
		\[
		\lim_{i\to \infty}\frac{\tau_{s'}(\tilde{f}^{(n_i)})}{g_i}
		<\infty = \lim_{i\to\infty}
		\frac{\tau_s(\tilde{f}^{(n_i)})}{g_i}.
		\]
		Hence, for $i$ large enough, 
		\[
		\tau_{s'}(\tilde{f}^{(n_i)})<\tau_s(\tilde{f}^{(n_i)}).
		\]
		This, in turn, implies that 
		$\tilde{f}^{(n_i)}_{s}\equiv 0$
		for $i$ large enough, due to the user optimality
		\eqref{def:WE} of NE profiles. This contradicts 
		\Wuu{with the assumption that $\tilde{\f}_s^{\infty}>0$.}
		
		Therefore, $\tilde{\f}_s^{\infty}=0$ must hold.
		
		\textbf{For $2):$} We prove $2)$ again by \Wuu{contradiction}.
		We assume that 
		\begin{equation}\label{eq:Fact2_1}
		\varlimsup_{i\to \infty}
		\frac{\tilde{f}^{(n_i)}_s\cdot\tau_s(\tilde{f}^{(n_i)})}{
			T(d^{(n_i)})\cdot g_i}>0.
		\end{equation}
		By $L3.1)$ of
		Definition \ref{def:limitGame}, we obtain
		immediately that $s$ must be a strategy from
		some non-negligible group.
		We assume, w.l.o.g., that
		\[
		\lim_{i\to \infty}
		\frac{\tilde{f}^{(n_i)}_s\cdot\tau_s(\tilde{f}^{(n_i)})}{
			T(d^{(n_i)})\cdot g_i}=\varlimsup_{i\to \infty}
		\frac{\tilde{f}^{(n_i)}_s\cdot\tau_s(\tilde{f}^{(n_i)})}{
			T(d^{(n_i)})\cdot g_i}>0.
		\]
		Otherwise, one can take an infinite subsequence
		of $\{n_i\}_{i\in \N}$ fulfilling the above condition.
		
		By $1),$ we can obtain that
		\[
		\lim_{i\to\infty} \frac{\tilde{f}_s^{(n_i)}}{T(d^{(n_i)})}=\tilde{\f}_s^{\infty}=0,
		\]
		and thus
		\[
		\lim_{i\to\infty}
		\frac{\tau_s(\tilde{f}^{(n_i)})}{g_i}=\infty.
		\]
		Then, with \Wuu{an argument argument similar as that in the proof of}
		$1),$ one can prove again that 
		$\tilde{f}_s^{(n_i)}\equiv 0$ for 
		$i$ large enough. This \Wuu{yields that}
		\[
		\frac{\tilde{f}^{(n_i)}_s\cdot\tau_s(\tilde{f}^{(n_i)})}{
			T(d^{(n_i)})\cdot g_i}\equiv 0,
		\]
		for $i$ large enough. \Wuu{This obviously contradicts with the
		assumption \eqref{eq:Fact2_1}.}
		Hence, $2)$ must hold.
	\end{proof}
	
	By Fact \ref{theo:StrategyPriceConvergence} and
	Fact \ref{theo:StrategyPriceLimit_NonTight},
	we obtain immediately that 
	\[
	\lim_{i\to \infty}
	\frac{C(\tilde{f}^{(n_i)})}{g_i}
	=\lim_{i\to\infty} \frac{\sum_{s\in \S^{\infty}}\tilde{f}_s^{(n_i)}
		\cdot \tau_s (\tilde{f}^{(n_i)})}{T(d^{(n_i)})\cdot g_i}=\sum_{s\in \S^{\infty}}\tilde{\f}_s^{\infty}
	\cdot \tau_s^{\infty} (\tilde{\f}^{\infty}).
	\]
	\Wuu{The fact that $\tilde{\f}^{\infty}$
	is an NE profile of $\Gamma^{\infty}$ 
	follows} immediately from that limit preserves
	numerical ordering ``$\ge$". 
	
	Combining all of the above, the proof of Lemma \ref{theo:LimitGame_Wu2017}
	is completed.
\end{proof}

\subsection*{Proof of Theorem \ref{theo:NegativeConditionforAWDG}}
\begin{proof}[Proof of Theorem \ref{theo:NegativeConditionforAWDG}]
	Consider an NCG
	\[
	\Gamma =	\Big(\K, A, \S, (r(a,s))_{a\in A, s\in \S}, (\tau_a)_{a\in A},d\Big),
	\]
	and an \Wu{arbitrary} user volume vector sequence
	$\{d^{(n)}\}_{n\in \N}$ such that
	the total volume $T(d^{(n)})\to\infty$
	as $n\to \infty.$
	
	We assume that the game
	\[
	\Gamma^{\infty} =	\Big(\K^{\infty}, A, \S^{\infty}, (r(a,s))_{a\in A, s\in \S^{\infty}}, (\tau_a^{\infty})_{a\in A},\d\Big)
	\]
	is the limit game of $\Gamma$ w.r.t. a subsequence 
	$\{n_i\}_{i\in \N}$ and a scaling factor sequence
	$\{g_i\}_{i\in \N},$ where all the components of
	$\Gamma^{\infty}$ are defined as in Definition~\ref{def:limitGame}.
	We suppose that $\Gamma^{\infty}$ is not well designed.
	We aim to show in this case that the
	PoA$(d^{(n_i)})$ does not converge to $1$ as
	$i\to\infty,$ which, in turn, implies that
	the game $\Gamma$ is not asymptotically well designed.
	
	Let $\{\tilde{f}^{(n_i)}\}_{i\in \N}$ and
	$\{f^{*(n_i)}\}_{i\in \N}$ be an NE profile sequence
	and an SO profile sequence w.r.t.
	$\{d^{(n_i)}\}_{i\in \N},$ respectively.
	We assume, w.l.o.g., \Wuu{that}
	\begin{itemize}
		\item \Wuu{the} limits
		\[
		\lim_{i\to\infty}
		\frac{\tilde{f}_s^{(n_i)}}{T(d^{(n_i)})}
		=\tilde{\f}^{\infty}_s\in [0,1],\quad\text{and}
		\quad 
		\lim_{i\to \infty}
		\frac{f^{*(n_i)}_s}{T(d^{(n_i)})}
		=\f^{*,\infty}_s\in [0,1]
		\]
	\end{itemize}
	\Wuu{exist} for some constants $\tilde{\f}^{\infty}_s,\f^{*,\infty}_s,$
	for each $s\in \S.$ Otherwise, we can take an infinite
	subsequence of $\{n_i\}_{i\in \N}$ fulfilling
	this condition.
	
	By Fact \ref{theo:StrategyPriceConvergence}
	and Fact \ref{theo:StrategyPriceLimit_NonTight}
	in the proof of Lemma \ref{theo:LimitGame_Wu2017},
	we obtain immediately that
	\[
	\lim_{i\to \infty} \frac{C(\tilde{f}^{(n_i)})}{g_i}
	=\sum_{s\in \S^{\infty}} \tilde{\f}^{\infty}_s
	\cdot \tau_s^{\infty}(\tilde{\f}^{\infty})\in (0,\infty),
	\]
	\[
	\lim_{i\to \infty} \frac{\sum_{s\in \S^{\infty}}f^{*((n_i))}_s\cdot
			\tau_s(f^{*(n_i)})}{T(d^{(n_i)})\cdot g_i}
	=\sum_{s\in \S^{\infty}} \f^{*,\infty}_s
	\cdot \tau_s^{\infty}(\f^{*,\infty})\in [0,\infty),
	\]
	and that $\tilde{\f}^{\infty}=(\tilde{\f}^{\infty}_s)_{s\in \S^{\infty}}$ is an NE profile of $\Gamma^{\infty}.$
	
	We now aim to show that Fact \ref{theo:StrategyPriceLimit_NonTight} applies also
	to SO profiles.
	
	By the fact that each $f^{*(n_i)}$ is an SO profile,
	we then obtain immediately that
	\[
	\varlimsup_{i\to \infty}
	\frac{C(f^{*(n_i)})}{g_i}
	\le \lim_{i\to \infty} \frac{C(\tilde{f}^{(n_i)})}{g_i}
	<\infty,
	\]
	which, in turn, implies that for each non-tight
	strategy $s\in \S\backslash\S^{\infty},$
		$\f^{*,\infty}_s=0.$ 
    Otherwise, if $\f^{*,\infty}_s>0,$
    then by $L3)$ of Defition
    \ref{def:limitGame}, $\varlimsup_{i\to \infty}
    \frac{C(f^{*(n_i)})}{g_i}=\infty.$ 
    
    Therefore, $\f^{*,\infty}=(\f^{*,\infty}_s)_{s\in \S^{\infty}}$ is also
    a feasible strategy profile of $\Gamma^{\infty}.$
    
    We now aim to prove for
    each non-tight strategy $s\in \S\backslash\S^\infty$
    that
  \[
    \lim_{i\to \infty}
    \frac{f^{*(n_i)}_s\cdot\tau_s(f^{*(n_i)})}{
    	T(d^{(n_i)})\cdot g_i}=0.
  \]
  Similarly, we prove this by \Wuu{contradiction.}
  We assume, w.l.o.g., that there is exactly one non-tight
  strategy, i.e., $|\S\backslash\S^{\infty}|=1.$
  Let $s$ denote this unique non-tight strategy. 
  For the case of more non-tight strategies,
  an almost identical argument will apply.
  
  We assume now, w.l.o.g., that
  \begin{equation}\label{eq:Theorem1_Contra}
    \lim_{i\to \infty}
    \frac{f^{*(n_i)}_s\cdot\tau_s(f^{*(n_i)})}{
    	T(d^{(n_i)})\cdot g_i}>0.
    \end{equation}
   Then, by $L3.1)$ of Definition
   \ref{def:limitGame}, $s$ must be a strategy from
   some non-negligible group $k\in \K$.
   By $L3.2)$ of Definition \ref{def:limitGame},
   there must exist a tight strategy $s'\in \S_k.$
   To derive a
   contradiction to \eqref{eq:Theorem1_Contra},
   we now construct some artificial
   feasible strategy profiles. For each $i\in \N,$ we put
   \begin{equation*}
   	h_{s''}^{(n_i)}=
   	\begin{cases}
   	f^{*(n_i)}_{s''}&\text{if } s''\notin \{s,s'\},\\
   	f^{*(n_i)}_{s''}+f^{*(n_i)}_s&\text{if } s''=s',\\
   	0&\text{if }s''=s.
   	\end{cases}
   \end{equation*}
   For the case of more than one non-tight strategies,
   one can similarly move the users adopting non-tight strategies to tight strategies. However, the \Wuu{explicit}
   definition of profiles $h^{(n_i)}=(h_{s''}^{(n_i)})_{s''\in \S}$
   will become very \Wuu{complicated.}

   Obviously, for each $s''\in\S,$
   \[
   \lim_{i\to \infty} \frac{h_{s''}^{(n_i)}}{T(d^{(n_i)})}
   =\f^{*,\infty}_{s''},
   \]
   since $\f_s^{*,\infty}=0.$ Moreover,
   by Fact \ref{theo:StrategyPriceConvergence} of
   Lemma \ref{theo:LimitGame_Wu2017}, we obtain that
   \[
   \begin{split}
   \lim_{i\to \infty} &\frac{\sum_{s''\in \S^{\infty}}h^{(n_i)}_{s''}\cdot
   	\tau_{s''}(h^{(n_i)})}{T(d^{(n_i)})g_i}
   =
   \lim_{i\to \infty} \frac{\sum_{s''\in \S^{\infty}}f^{*(n_i)}_{s''}\cdot
   	\tau_{s''}(f^{*(n_i)})}{T(d^{(n_i)})g_i}\\
   &=\sum_{s''\in \S^{\infty}} \f^{*,\infty}_{s''}
   \cdot \tau_{s''}^{\infty}(\f^{*,\infty}).
   \end{split}
   \]
   Note that in profiles $h^{(n_i)},$ only
   tight strategies are used. Thus, we obtain
   that
   \[
   \lim_{i\to \infty} \frac{C(h^{(n_i)})}{g_i}
   =\lim_{i\to \infty} \frac{\sum_{s''\in \S^{\infty}}f^{*(n_i)}_{s''}\cdot
   	\tau_{s''}(f^{*(n_i)})}{T(d^{(n_i)})\cdot g_i}.
   \]
   \Wuu{With} \eqref{eq:Theorem1_Contra}, we obtain
   further that
   \[
   \varliminf_{i\to \infty} \frac{C(f^{*(n_i)})}{g_i}
   > \lim_{i\to \infty} \frac{\sum_{s''\in \S^{\infty}}f^{*(n_i)}_{s''}\cdot
   	\tau_{s''}(f^{*(n_i)})}{T(d^{(n_i)})\cdot g_i}=\lim_{i\to \infty} \frac{C(h^{(n_i)})}{g_i}.
   \]
   \Wuu{This yields} that 
   \[
   C(h^{(n_i)})<C(f^{*(n_i)})
   \]
   for $i$ large enough. This contradicts \Wuu{with}
   the fact that profiles $f^{*(n_i)}$ are all
   system optimum.

	\Wuu{Therefore}, Fact \ref{theo:StrategyPriceLimit_NonTight}
	also applies to SO profiles. Thus we obtain that
	\[
	\lim_{i\to \infty} \frac{C(f^{*(n_i)})}{g_i}=\sum_{s\in \S^{\infty}} \f^{*,\infty}_s
	\cdot \tau_s^{\infty}(\f^{*,\infty}).
	\]
	\Wuu{We now aim to show that} $\f^{*,\infty}$
	is an SO profile of $\Gamma^{\infty}.$
	\Wuu{Let $\f=(\f_s)_{s\in \S^{\infty}}$
		be an arbitrary feasible strategy profile
		of $\Gamma^{\infty}.$ To show that $\f^{*,\infty}$
		is an SO profile of $\Gamma^{\infty}$, we only need to
		show that its cost is not larger than
		the cost of $\f,$ due to the arbitrary choice
		of $\f.$ Note that there must
		exist a sequence $\{f^{(n_i)}\}_{i\in \N}$
		of feasible profiles such that
		\[
		\f_s=\lim_{i\to \infty} \frac{f_s^{(n_i)}}{T(d^{(n_i)})}
		\]
		for each tight strategy $s\in \S^{\infty}.$
		Actually, we can put for each $k\in \K^{\infty}$ with $\d_k^{(n_i)}>0$
		and $s\in \S_k$ that
		\[
		f^{(n_i)}_s=\frac{d_k^{(n_i)}\cdot \f_s}{\d_k}\quad \forall i\in \N,
		\]
		and put for each other $k\in \K$
		and $s\in \S_k$ that
		\[
		f^{(n_i)}_s=\frac{d_k^{(n_i)}}{|\S_k|}\quad \forall i\in \N.
		\]
		Since each $f^{*(n_i)}$ is system optimal, 
		we obtain that
		\[
		\frac{C(f^{*(n_i)})}{g_i} \le \frac{C(f^{(n_i)})}{g_i}\quad
		\forall i\in \N.
		\]
		Letting $i\to\infty,$ this yields by Fact \ref{theo:StrategyPriceConvergence}, $L3)$ and the definition 
		of $\K^{\infty}$ that 
		\[
	\lim_{i\to\infty}\frac{C(f^{*(n_i)})}{g_i}=	\sum_{s\in \S^{\infty}} \f^{*,\infty}\cdot
		\tau_s^{\infty} (\f^{*,\infty})
		\le \sum_{s\in \S^{\infty}} \f_s\cdot \tau_s^{\infty}(\f)=\lim_{i\to \infty}\frac{C(f^{(n_i)})}{g_i},
		\]
		since $\f$ defines only
		on tight strategies, and since each group $k\in \K\backslash\K^{\infty}$ is negligible without tight strategy
		and thus must have zero limit volume, i.e., $\d_k=0.$
 		Hence, $\f^{*,\infty}$ is an SO profile of $\Gamma^{\infty}.$}
	
	Since $\Gamma^{\infty}$ is not well designed, we
	thus obtain that 
	\[
	\sum_{s\in \S^{\infty}} \f^{*,\infty}_s
	\cdot \tau_s^{\infty}(\f^{*,\infty})<\sum_{s\in \S^{\infty}} \tilde{\f}^{\infty}_s
	\cdot \tau_s^{\infty}(\tilde{\f}^{\infty}).
	\]
	This implies that
	\[
	\lim_{i\to \infty} \text{PoA}(d^{(n_i)})
	=\lim_{i\to\infty}
	\frac{C(\tilde{f}^{(n_i)})}{C(f^{*(n_i)})}
	=
	\frac{\sum_{s\in \S^{\infty}} \tilde{\f}^{\infty}_s
	\cdot \tau_s^{\infty}(\tilde{\f}^{\infty})}{\sum_{s\in \S^{\infty}} \f^{*,\infty}_s
	\cdot \tau_s^{\infty}(\f^{*,\infty})}>1.
	\]
	Therefore, $\Gamma$ is not well designed.
\end{proof}
\subsection*{Proof of Theorem \ref{theo:SelfishMainTh}}
\begin{proof}[Proof of Theorem \ref{theo:SelfishMainTh}]
	We prove Theorem \ref{theo:SelfishMainTh} by
	applying the idea of asymptotic decomposition.
	To well demonstrate the idea, we will give a
	very detailed proof. \Wu{The proof will be direct
		and elementary without using
		a heavy machinery. It \Wuu{only} uses the definition
		and connection between NE profiles
		and SO profiles, simple facts about
		the asymptotic notations $O(\cdot),\Omega(\cdot),
		\Theta(\cdot),o(\cdot)$ and $\omega(\cdot),$
		and a suitable induction along the user
		groups.}
	
	Let $\{d^{(n)}\}_{n\in \N}$ be
	an arbitrary sequence of user volume \Wu{vectors} such that:
	\begin{itemize}
		\item Each $d^{(n)}=(d^{(n)}_1,\ldots, d^{(n)}_K)$ 
		is a vector, where the $k$-th component
		$d_k^{(n)}$ represents the user volume of the $k$-th
		group for $k=1,\ldots,K,$ for each $n\in \N.$
		\item The total user volume $T(d^{(n)})=\sum_{k=1}^{K}d_k^{(n)}\to +\infty$ as
		$n\to +\infty.$
		\end{itemize}
		To prove the \Wu{Theorem \ref{theo:SelfishMainTh},} we only need to show that
		$\lim_{n\to \infty} \text{PoA}(d^{(n)})=1,$
		due to the \Wu{arbitrary choice} of the user
		volume vector \Wu{sequence $\{d^{(n)}\}_{n\in \N}.$ }
		
		Note that 
		\begin{equation}\label{eq:Global_Obj}
		\lim_{n\to \infty} \text{PoA}(d^{(n)})=1
		\end{equation}
		follows immediately \Wuu{from} the fact that
		\[
		\varlimsup_{n\to \infty}\text{PoA}(d^{(n)})=1,
		\]
		since it holds trivially that
		\[
		\varliminf_{n\to \infty}\text{PoA}(d^{(n)})\ge 1.
		\]
		
		Therefore, we assume, w.l.o.g., that the limit
		\begin{equation}\label{eq:Assum1}
		\lim_{n\to \infty} \text{PoA}(d^{(n)})\in [1, +\infty]
		\end{equation}
		exists. Otherwise, we can take an infinite subsequence $\{n_i\}_{i\in \N}$ such that the limit
		\[
		\lim_{i\to \infty} \text{PoA}(d^{(n_i)})=\varlimsup_{n\to \infty}\text{PoA}(d^{(n)})\in [1,+\infty]
		\]
		exists, and we can then restrict our discussion to this
		subsequence.
		
		With assumption \eqref{eq:Assum1}, 
		\eqref{eq:Global_Obj} follows immediately \Wu{from} the existence
		of an infinite subsequence $\{n_i\}_{i\in \N},$ s.t.,
		\[
		\lim_{i\to \infty} \text{PoA}(d^{(n_i)})=1.
		\] 
		\Wu{So, in the application of
		asymptotic decomposition,} we can take a series of nested infinite subsequences
		of the sequence $\{n\}_{n\in \N}.$  To simplify
		\Wu{notation,} we will not explicitly use the \Wu{terminology} of subsequences,
		but assume that the \Wu{user} volume vector sequence
		$\{d^{(n)}\}_{n\in \N}$ itself
		fulfills some required properties, if corresponding
		subsequences do exist.
		
		We now introduce some notations \Wu{stemming} from
		\cite{Colini2017a}. Let $\rho_{\alpha}$ be the degree
		of the polynomial $\tau_a(\cdot)$ for
		each resource $a\in A.$ Accordingly, we \Wu{can} define the {\em degree
			of a strategy} $s\in \S_k$ as
			\[
			\rho_s:=\max\{\rho_a:\ r(a,s)>0\text{ and }a\in A\},
			\]
			and the {\em degree of a group} $k\in \{1,\ldots,K\}$ as 
			\[
			\rho_k:=\min\{\rho_s:\ s\in \S_k\}.
			\]
			\Wu{Although these notations are trivial, they
				\Wuu{help us construct a suitable} ordering
				on the resource set $A$, and
				accordingly on the sets $\S$ and $\K$.
				We can compare two resources $a,b\in A$
				through their degrees $\rho_a,\rho_b.$
				This will be very helpful when we
				construct the scaling factors at each inductive
				step in the asymptotic decomposition.}
			
			\Wu{\Wuu{Moreover, we need to compare cost}
				at each inductive step during
				asymptotic decomposition. This
				requires basic knowledge
				on \Wuu{the asymptotic notation.}} Let $h(x)$ be a non-negative
			real-valued function. The big $O$ notation
			$O\big(h(x)\big)$ denotes the class \Wu{of}
			all non-negative real-valued functions $q(x)$ such that
			$\varlimsup_{x\to \infty} \frac{q(x)}{h(x)}<+\infty,$
			and the small $o$ notation $o\big(h(x)\big)$
			denotes the class of all non-negative real-valued functions $q(x)$ such that
			$\lim_{x\to \infty} \frac{q(x)}{h(x)}=0.$
			Similarly,  $\Omega\big(h(x)\big)$ 
			denotes the class of all non-negative real-valued functions
			$q(x)$ such 
			that $\varliminf_{x\to \infty}\frac{q(x)}{h(x)}>0,$
			and  $\omega\big(h(x)\big)$ denotes the class of all non-negative real-valued functions
			$q(x)$ such 
			that $\lim_{x\to \infty}\frac{q(x)}{h(x)}=+\infty.$
			We put $\Theta\big(h(x)\big)=O\big(h(x)\big)
			\cap \Omega\big(h(x)\big).$ In addition, we 
			write $h(x)\approx q(x)$ if 
			$\lim_{x\to \infty} \frac{h(x)}{q(x)}=1.$
			
			Let $f^{*(n)},\tilde{f}^{(n)}$ be
			an SO profile and an NE profile, respectively, w.r.t.
			user volume vector $d^{(n)}=(d_k^{(n)})_{k\in \K},$ for
			each $n\in \N$. Then, for each $n\in \N,$
			\begin{equation}\label{def:redefPoA}
			\text{PoA}(d^{(n)})=\frac{C(\tilde{f}^{(n)})}{C(f^{*(n)})}
			=\frac{\sum_{k\in \K}\sum_{s\in \S_k} \tilde{f}^{(n)}_s\cdot
				\tau_s(\tilde{f}^{(n)})}{\sum_{k\in \K}\sum_{s\in \S_k} f^{*(n)}_s\cdot
				\tau_s(f^{*(n)})}.
				\end{equation}
				Let $C_k(\tilde{f}^{(n)}):=\sum_{s\in \S_k} \tilde{f}^{(n)}_s\cdot
				\tau_s(\tilde{f}^{(n)})$ and $C_k(f^{*(n)}):=\sum_{s\in \S_k} f^{*(n)}_s\cdot
				\tau_s(f^{*(n)})$ denote the total \Wu{cost} of users from
				group $k\in \K$ w.r.t. NE profile $\tilde{f}^{(n)}$
				and SO profile $f^{*(n)},$ respectively, for all $k\in \K$
				and each $n\in \N.$
				By \eqref{def:redefPoA}, we obtain for each $n\in \N$ that
				\begin{equation}\label{def:redefPoA_1}
				\text{PoA}(d^{(n)})
				=\frac{\sum_{k\in \K}C_k(\tilde{f}^{(n)})}{\sum_{k\in \K}C_k(f^{*(n)})}.
				\end{equation}
				
				\Wu{We are now ready to start the asymptotic
					decomposition.}
				\Wu{We} aim to inductively partition the set
				$\K$ into mutually disjoint non-empty subsets 
				$\K_0,\K_1,\ldots,\K_t$ for some integer $t\ge 0,$ and
				prove \Wu{at} each step $m=0, \ldots,t$ that 
				\begin{equation}\label{eq:InductiveStepObj}
				\lim_{n\to \infty}\frac{\sum_{k\in \bigcup_{u=0}^{m} \K_u }C_k (\tilde{f}^{(n)})}{\sum_{k\in \bigcup_{u=0}^{m} \K_u }C_k (f^{*(n)})}
				=1.
				\end{equation}
				This procedure will not only form a proof for Theorem
				\ref{theo:SelfishMainTh}, but also \Wu{asymptotically decompose the underlying
				\Wuu{game.}}

				\textbf{Step $m=0:$ construct $\K_0,$ and
					prove \eqref{eq:InductiveStepObj} for $m=0.$}
					
					Before we formally start 
					step $m=0,$ we first introduce
					a trivial but useful fact about
					NE profiles..
					Note that for each $n\in \N$ and each 
					$k\in \K,$ by the user optimality
					\eqref{def:WE}, users from group $k$ have
					the same cost \Wu{$\tilde{L}_k^{(n)}$} w.r.t. the NE profile
					$\tilde{f}^{(n)}.$  Then, we obtain
					that
					\[
					C(\tilde{f}^{(n)})=\frac{1}{T(d^{(n)})}\sum_{k\in \N} \tilde{L}_k^{(n)}
					\cdot d_k^{(n)}.
					\]
					
					By \cite{Roughgarden2000How}, each SO profile $f^{*(n)}$ is actually an NE profile w.r.t.
					to \Wu{the auxiliary} price functions $c_a(x):=\big(x\cdot\tau_a(x)\big)'=x\cdot \tau_a'(x)+\tau_a(x),$
					for each $n\in \N.$ Obviously, each
					$c_a(\cdot)$ is again a polynomial of the same degree
					\Wu{as} $\tau_a(\cdot),$ for each $a\in A.$
					Let $\tilde{L}^{*(n)}_k$ be \Wu{the} cost of users
					from group $k\in\K$ w.r.t. price functions
					$c_a(\cdot)$ and the corresponding NE profile $f^{*(n)},$
					for each $n\in \N.$ Since $\tau_a(x)\in \Theta\big(c_a(x)\big)$ for each
					$a\in A,$ the cost of each user from group $k\in \K$
					is then in $\Theta(L_k^{*(n)})$ w.r.t. 
					SO profiles $f^{*(n)}$ and price functions
					$\tau_a(\cdot).$
					
					As a key component \Wu{at} each step of the asymptotic decomposition,
					we are now to estimate $\tilde{L}_k^{(n)},L_k^{*(n)}$ for
					each $k\in \K.$ \Wu{This will be
						\Wuu{the} base for the cost comparison.}
					\Wu{Claim \ref{theo:Tau_Magnitude} below 
					states that 
					the degree $\rho_k$ of a group $k\in \K$ reflects the magnitude
					of the cost of its users
					in both, NE and SO profiles.}
					\begin{claim}\label{theo:Tau_Magnitude}
						$\tilde{L}_k^{(n)},L_k^{*(n)}\in O\big(T(d^{(n)})^{\rho_k}\big),$ 
						for each group $k\in \K$.
						\end{claim}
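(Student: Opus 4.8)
The plan is to handle the Nash and system\nobreakdash-optimum cases by one and the same device, exploiting that by \cite{Roughgarden2000How} the SO profile $f^{*(n)}$ is an NE profile for the auxiliary price functions $c_a(\cdot)=x\tau_a'(x)+\tau_a(x)$, each of which is again a polynomial of the same degree $\rho_a$. So I would first prove the bound for $\tilde L_k^{(n)}$ and then copy the argument almost verbatim with $\tau_a$ replaced by $c_a$ to bound $L_k^{*(n)}$.

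For the NE case the key observation is that the common per-user cost of group $k$ at an NE equals $\min_{s\in\S_k}\tau_s(\tilde f^{(n)})$: by the user optimality \eqref{def:WE} every used strategy is a cheapest one, so all used strategies share this minimal value while unused ones cost at least as much. Hence, choosing a strategy $s^{*}\in\S_k$ that attains the minimal degree, i.e.\ $\rho_{s^{*}}=\rho_k$, I would use
\[
\tilde L_k^{(n)}\;\le\;\tau_{s^{*}}(\tilde f^{(n)})\;=\;\sum_{a\in A} r(a,s^{*})\,\tau_a\big(\tilde f_a^{(n)}\big).
\]

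The second step is the volume bound. Since $\tilde f_a^{(n)}=\sum_{s\in\S}r(a,s)\,\tilde f_s^{(n)}\le\big(\max_{s}r(a,s)\big)\sum_{s\in\S}\tilde f_s^{(n)}=\big(\max_{s}r(a,s)\big)\,T(d^{(n)})$, each consumed volume $\tilde f_a^{(n)}$ is $O\big(T(d^{(n)})\big)$. For every resource $a$ with $r(a,s^{*})>0$ one has $\rho_a\le\rho_{s^{*}}=\rho_k$, so evaluating the polynomial $\tau_a$ at an argument of order $T(d^{(n)})$ gives $\tau_a\big(\tilde f_a^{(n)}\big)\in O\big(T(d^{(n)})^{\rho_a}\big)\subseteq O\big(T(d^{(n)})^{\rho_k}\big)$, where the inclusion uses $T(d^{(n)})\to\infty$ so that the lower powers are dominated. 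Summing over the finitely many $a\in A$ keeps us in $O\big(T(d^{(n)})^{\rho_k}\big)$, which yields $\tilde L_k^{(n)}\in O\big(T(d^{(n)})^{\rho_k}\big)$.

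For the SO case I would repeat these three steps with $c_a$ in place of $\tau_a$: because $f^{*(n)}$ is at NE for the $c_a$, the common per-user $c$\nobreakdash-cost of group $k$ is again the minimum over $s\in\S_k$ of $\sum_{a\in A}r(a,s)\,c_a\big(f_a^{*(n)}\big)$, hence at most $\sum_{a\in A}r(a,s^{*})\,c_a\big(f_a^{*(n)}\big)$; since $\deg c_a=\rho_a\le\rho_k$ on the support of $s^{*}$, the identical volume bound $f_a^{*(n)}\in O\big(T(d^{(n)})\big)$ and the same polynomial\nobreakdash-evaluation estimate give $L_k^{*(n)}\in O\big(T(d^{(n)})^{\rho_k}\big)$. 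There is no genuine obstacle here; the only two points that need care are invoking the NE property against the \emph{minimum\nobreakdash-degree} strategy $s^{*}$ rather than an arbitrary used strategy, and the asymptotic bookkeeping that a degree\nobreakdash-$\rho_a$ polynomial evaluated at an argument of order $T(d^{(n)})$ stays within $O\big(T(d^{(n)})^{\rho_k}\big)$ precisely because $T(d^{(n)})\to\infty$ and $\rho_a\le\rho_k$.
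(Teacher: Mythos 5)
Your proof is correct and follows essentially the same route as the paper: bound $\tilde L_k^{(n)}$ by the cost of a minimum-degree strategy $s_0\in\S_k$ via user optimality \eqref{def:WE}, use that consumed volumes are $O\big(T(d^{(n)})\big)$ so a degree-$\rho_a$ polynomial with $\rho_a\le\rho_k$ on the support of $s_0$ evaluates to $O\big(T(d^{(n)})^{\rho_k}\big)$, and repeat the argument for $f^{*(n)}$ viewed as an NE profile for the auxiliary polynomials $c_a(\cdot)$ of the same degrees. No discrepancies worth noting.
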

						\begin{proof}[Proof of Claim \ref{theo:Tau_Magnitude}]
							Let $k\in \{1,\ldots,K\}$ be an arbitrarily
							fixed user group. 
							
							By the definition of 
							$\rho_k,$ there must exist some strategy \Wu{$s_0\in \S_k,$} such that $\rho_{s_0}=\rho_k=\min
							\{\rho_{s}:s\in \S_k\}.$ Let $f^{(n)}$ be an
							arbitrary feasible strategy profile w.r.t.
							user volume vector $d^{(n)},$ for each $n\in \N.$
							Then $\tau_{s_0}(f^{(n)})\in O\big(T(d^{(n)})^{\rho_k}\big),$
							since there are at most $T(d^{(n)})$ users adopting
							strategy $s_0$ and the degree of $s_0$ is $\rho_k.$
							Therefore, $\tau_{s_0}(f^{*(n)}),\tau_{s_0}(\tilde{f}^{(n)})\in O\big(T(d^{(n)})^{\rho_k}\big).$
							By the user optimality \eqref{def:WE} of NE profiles,
							we thus obtain that
							\[
							\tilde{L}_k^{(n)}\le \tau_{s_0}(\tilde{f}^{(n)})
							\in O\big(T(d^{(n)})^{\rho_k}\big).
							\]
							
							\Wu{Recall that} $c_{s_0}(f^{*(n)})\in \Theta\big(\tau_{s_0}(f^{*(n)})\big),$ and
							$f^{*(n)}$ is an SO profile w.r.t. price
							functions $c_a(\cdot),$ for each
							$n\in \N.$ Therefore, again by \eqref{def:WE}, \Wu{we obtain that}
							\[
							L_k^{*(n)}\le c_{s_0}(f^{*(n)})
							\in O\big(T(d^{(n)})^{\rho_k}\big).
							\]

							\end{proof}
							
							We can now formally start
							step $m=0.$ To facilitate our discussion, we assume,
							w.l.o.g., that
							\begin{itemize}
								\item \Wu{the} limit $\lim_{n\to \infty}\frac{d_k^{(n)}}{T_0(d^{(n)})}=:\d_k^{(0,\infty)}$ exists
								for some constant $\d_k^{(0,\infty)}\in [0,1],$ for all
								$k=1,\ldots,K,$ where $T_0(d^{(n)}):=T(d^{(n)}),$
								\item \Wu{the} limits 
								\[
								\lim_{n\to \infty}\frac{f^{*(n)}_s}{T_0(d^{(n)})}=:\f^{*(0,\infty)}_s
								\quad \text{and}\quad
								\lim_{n\to \infty}\frac{\tilde{f}^{(n)}_s}{T_0(d^{(n)})}=:\tilde{\f}^{(0,\infty)}_s
								\]
								exist for some constants 
								$\f^{*(0,\infty)}_s,\tilde{\f}^{(0,\infty)}_s\in [0,1],$
								for each strategy $s\in \S.$
								\end{itemize}
								Otherwise, we can take an infinite subsequence fulfilling
								these two conditions. For each $a\in A,$
								let
								\[
								\tilde{\f}_a^{(0,\infty)}:=\sum_{s\in \S}r(a,s)\cdot \tilde{\f}_s^{(0,\infty)}\quad \text{and}
								\quad \f_a^{*(0,\infty)}:=\sum_{s\in \S}r(a,s)\cdot \f_s^{*(0,\infty)}.
								\]
								
								We \Wu{now define} $\K_0.$
								Let $\alpha_0:=\max\{\rho_k:
								\d_k^{(0,\infty)}>0, k\in \K\}$
								\Wu{and} $ \K_0:=\{k\in \K: \d_k^{(0,\infty)}>0\text{ or } \rho_k\le 
								\alpha_0\}.$ Note that $\K_0\neq \emptyset,$ and
								there exist some $k\in \K_0$ such that
								\[
								\d_k^{(0,\infty)}>0\quad\text{and}\quad \rho_k=\alpha_0.
								\]
								Note also that both $(\tilde{\f}^{(0,\infty)}_s)_{s\in \S_k, k\in \K_0}$ and $(\f^{*(0,\infty)}_s)_{s\in \S_k, k\in \K_0}$
								are \Wu{feasible strategies for the} user volume vector
								$\d^{(0,\infty)}:=(\d_k^{(0,\infty)})_{k\in \K_0},$ i.e.,
								\[
								\d_k^{(0,\infty)}=\sum_{s\in \S_k} \tilde{\f}^{(0,\infty)}_s
								=\sum_{s\in \S_k}\f^{*(0,\infty)}_s\quad\forall k\in \K_0.
								\]
								Moreover,
								for each $k\in\K\backslash \K_0$ and $s\in \S_k,$ 
								\[
								\tilde{\f}_s^{(0.\infty)}=0,\quad\text{and}\quad 
								\f^{*(0,\infty)}_s=0,
								\]
								and thus for each $a\in A,$
								\[
								\tilde{\f}_a^{(0,\infty)}=\sum_{s\in \bigcup_{k\in \K_0}\S_k}r(a,s)\cdot \tilde{\f}_s^{(0,\infty)}\quad \text{and}
								\quad \f_a^{*(0,\infty)}=\sum_{s\in \bigcup_{k\in \K_0}\S_k}r(a,s)\cdot \f_s^{*(0,\infty)}.
								\]
								
								\Wu{By Claim \ref{theo:Tau_Magnitude},  
								$\tilde{L}_k^{(n)},L_k^{*(n)}\in O\big(T_0(d^{(n)})^{\alpha_0}\big)$ for each $k\in \K_0.$}
								
								\Wu{It remains at step 0} to prove \eqref{eq:InductiveStepObj}
								for $m=0.$
								Claim \ref{theo:Induction_Step_0} asserts this. It states that
								the average cost of users from groups $k\in \K_0$ 
								w.r.t.
								NE profiles $\tilde{f}^{(n)}$ \Wu{will be asymptotically equal to} that of those users w.r.t. 
								SO profiles $f^{*(n)}$, no matter which strategies the users from the other groups
								$k\in \K\backslash \K_0$ adopt.
								The proof of Claim \ref{theo:Induction_Step_0} 
								is inspired by \Wu{the proof} of the main result \Wu{Theorem 3.2} in  \cite{Wu2017Selfishness}.
								However, \Wuu{here,} we need 
								an additional \Wuu{argument} for SO profiles $f^{*(n)}$, since they
								now need not be SO profiles
								of the marginal game consisting of
								groups $k\in\K_0.$ \Wuu{The idea}
								to handle this is to
								consider profiles $f^{*(n)}$
								as NE profiles of the
								corresponding game with auxiliary
								price functions 
								$c_a(\cdot).$  
								Interestingly, 
								the two corresponding marginal games converge to limit games sharing
								NE profiles, since
								\begin{equation}\label{eq:AD_Curicial}
								\lim_{x\to\infty}
								\frac{c_a(x)}{\tau_a(x)}=1+\rho_a
								\end{equation}
								for each $a\in A.$
								A trivial fact hidden in the
								proof is that users from
								groups $k\in \K\backslash\K_0$
								\Wuu{do} not affect the limit behavior
								of users from $\K_0,$ since they
								\Wuu{only} account for a negligible
								limit proportion in the whole
								user volume. \Wuu{This makes it possible to independently consider} groups
								$k\in \K_0$ in the limit analysis.
								\begin{claim}\label{theo:Induction_Step_0}
									\[
									\lim_{n\to \infty}
									\frac{\sum_{k\in\K_0}
										C_k (\tilde{f}^{(n)})}{\sum_{k\in\K_0}C_{k}(f^{*(n)})}=1.
										\]
										Moreover,
										\[
										\sum_{k\in\K_0}
										C_k (\tilde{f}^{(n)}),\sum_{k\in\K_0}C_{k}(f^{*(n)})
										\in \Theta\big(T(d^{(n)})^{\alpha_0+1}\big).
										\]
										\end{claim}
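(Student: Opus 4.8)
The plan is to realize the groups in $\K_0$ as a well designed limit game under the scaling factors $g_n:=T_0(d^{(n)})^{\alpha_0}$ and then transfer the coincidence of NE- and SO-cost in that limit game back to the prelimit sums. First I would record the limit prices for this scaling: writing $\tau_a(y)=\sum_{i\le\rho_a}c_{a,i}y^i$ with leading coefficient $c_{a,\alpha_0}>0$ when $\rho_a=\alpha_0$, one checks that $\tau_a^\infty(x):=\lim_n \tau_a(T_0(d^{(n)})x)/g_n$ equals $c_{a,\alpha_0}x^{\alpha_0}$ if $\rho_a=\alpha_0$, equals $0$ if $\rho_a<\alpha_0$, and equals $\infty$ if $\rho_a>\alpha_0$. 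Hence a strategy is tight exactly when $\rho_s\le\alpha_0$, and every group of $\K_0$ owns a tight strategy, since $\d_k^{(0,\infty)}>0$ forces $\rho_k\le\alpha_0$. Let $\Gamma_0^\infty$ be the NCG on $\K_0$, its tight strategies, the limit prices $\tau_a^\infty$, and the limit volume vector $(\d_k^{(0,\infty)})_{k\in\K_0}$.

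The key observation is that $\Gamma_0^\infty$ is well designed. All its finite, non-zero price functions are the degree-$\alpha_0$ monomials $c_{a,\alpha_0}x^{\alpha_0}$, so the auxiliary prices satisfy $c_a^\infty(x)=(\alpha_0+1)\,\tau_a^\infty(x)$ for every $a$ (including the identically-zero ones). Since the Nash equilibrium of an NCG is invariant under multiplying all prices by the common positive constant $\alpha_0+1$, an SO profile of $\Gamma_0^\infty$, being a Nash equilibrium for the $c_a^\infty$, is already a Nash equilibrium for the $\tau_a^\infty$; thus every NE of $\Gamma_0^\infty$ attains the SO cost and $\text{PoA}=1$. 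That common cost is strictly positive, because the group $k^*\in\K_0$ with $\d_{k^*}^{(0,\infty)}>0$ and $\rho_{k^*}=\alpha_0$ routes a positive limit volume through a degree-$\alpha_0$ resource of positive load and positive leading coefficient, so its equilibrium cost is positive.

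\textbf{NE side.} Using precisely the arguments of Fact \ref{theo:StrategyPriceConvergence} and Fact \ref{theo:StrategyPriceLimit_NonTight} (both available because every $\K_0$-group has a tight strategy), the prices of tight strategies converge, $\tau_s(\tilde{f}^{(n)})/g_n\to\tau_s^\infty(\tilde{\f}^{(0,\infty)})$, while non-tight strategies inside $\K_0$-groups are abandoned in the NE, so $\tilde{\f}^{(0,\infty)}_s=0$ and their cost share is $o\big(T_0(d^{(n)})\,g_n\big)$. Splitting $\K_0$ into the significant groups ($\d_k^{(0,\infty)}>0$) and the light ones ($\d_k^{(0,\infty)}=0$, $\rho_k\le\alpha_0$), Claim \ref{theo:Tau_Magnitude} bounds each light group by $\tilde{L}_k^{(n)}d_k^{(n)}=O\big(T_0(d^{(n)})^{\alpha_0}\big)\cdot o\big(T_0(d^{(n)})\big)=o\big(T_0(d^{(n)})^{\alpha_0+1}\big)$, so only the significant groups survive and
\[
\frac{\sum_{k\in\K_0}C_k(\tilde{f}^{(n)})}{T_0(d^{(n)})\,g_n}\longrightarrow \sum_{s}\tilde{\f}^{(0,\infty)}_s\,\tau_s^\infty(\tilde{\f}^{(0,\infty)}),
\]
the NE cost of $\Gamma_0^\infty$; here $\tilde{\f}^{(0,\infty)}$ is an NE of $\Gamma_0^\infty$ since limits preserve the inequalities defining \eqref{def:WE}.

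\textbf{SO side (the main obstacle).} Here $f^{*(n)}$ need not be marginally optimal for $\K_0$, so I would instead treat it as a Nash equilibrium for the auxiliary prices $c_a$. As $c_a$ has the same degree $\rho_a$ with leading coefficient $(\rho_a+1)c_{a,\rho_a}$, its scaled limit is $c_a^\infty=(\alpha_0+1)\tau_a^\infty$ and the tight strategies are unchanged. Applying Facts \ref{theo:StrategyPriceConvergence} and \ref{theo:StrategyPriceLimit_NonTight} to the $c$-equilibrium $f^{*(n)}$ shows that $\f^{*(0,\infty)}$, restricted to $\K_0$ and its tight strategies, is an NE for the $c_a^\infty$, hence by scale invariance an NE of $\Gamma_0^\infty$. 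Because $\tau_a\le c_a$ (as $\tau_a'\ge0$ gives $c_s\ge\tau_s$), the $\tau$-cost share of the abandoned non-tight strategies is dominated by their vanishing $c$-cost share, and the light groups are again $o\big(T_0(d^{(n)})^{\alpha_0+1}\big)$ by Claim \ref{theo:Tau_Magnitude}; therefore $\sum_{k\in\K_0}C_k(f^{*(n)})/\big(T_0(d^{(n)})g_n\big)\to\sum_s\f^{*(0,\infty)}_s\tau_s^\infty(\f^{*(0,\infty)})$, the $\tau$-cost of $\f^{*(0,\infty)}$ in $\Gamma_0^\infty$. Since $\Gamma_0^\infty$ is well designed and $\f^{*(0,\infty)}$ is one of its NE, this equals the NE cost of $\Gamma_0^\infty$, i.e. exactly the NE-side limit. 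Both normalized sums thus share the same positive finite limit, which yields $\lim_n \sum_{k\in\K_0}C_k(\tilde{f}^{(n)})/\sum_{k\in\K_0}C_k(f^{*(n)})=1$ together with $\sum_{k\in\K_0}C_k(\tilde{f}^{(n)}),\sum_{k\in\K_0}C_k(f^{*(n)})\in\Theta\big(T(d^{(n)})^{\alpha_0+1}\big)$.
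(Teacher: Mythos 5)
Your proposal is correct and follows essentially the same route as the paper's own proof: the same scaling factors $g_n=T_0(d^{(n)})^{\alpha_0}$, the same limit price functions with $c_a^{(0,\infty)}=(\alpha_0+1)\tau_a^{(0,\infty)}$, the identification of $\tilde{\f}^{(0,\infty)}$ as an NE of the $\tau$-limit game and of $\f^{*(0,\infty)}$ as an NE of the $c$-limit game (hence of the same game, hence of equal cost), and the transfer back to the prelimit sums via Facts \ref{theo:StrategyPriceConvergence} and \ref{theo:StrategyPriceLimit_NonTight}. The only cosmetic difference is that you package the equal-cost conclusion as ``$\Gamma_0^{\infty}$ is well designed,'' where the paper simply invokes the uniqueness of the NE cost.
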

										\begin{proof}[Proof of Claim \ref{theo:Induction_Step_0}]
											The main step of \Wuu{this} proof is to show that
											the marginal game consisting of \Wu{all} groups \Wu{in} $\K_0$
											will ``converge" to a  limit game
											with user volume vector $\d^{(0,\infty)}\!=\!(\d_k^{(0,\infty)})_{k\!\in\! \K_0},$ and $(\tilde{\f}^{(0,\infty)}_s)_{s\in \S_k, k\!\in\! \K_0}$ and $(\f^{*(0,\infty)}_s)_{s\in \S_k, k\!\in\! \K_0}$ are both NE profiles of the limit game.
											To show this, we employ a similar argument \Wu{as in the proof of Theorem 3.2} in
											\cite{Wu2017Selfishness}.
											
											Let $g_n:=T_0(d^{(n)})^{\alpha_0}$ be a scaling factor for each $n\in \N.$ 
											Note that for each resource $a\in A$ and each \Wuu{$x> 0,$} 
											\begin{equation}\label{eq:Scaled_Tau_Step_0}
											\begin{split}
											&\lim_{n\to \infty}\frac{\tau_a\big(T_0(d^{(n)})x\big)}{g_n}
											=:\tau_a^{(0,\infty)}(x)=
											\begin{cases}
											0,&\text{if } \rho_a<\alpha_0,\\
											b_\alpha \cdot x^{\alpha_0}, &\text{if }\rho_a=\alpha_0,\\
											\infty, &\text{otherwise,}
											\end{cases}
											\quad \text{and}\\
											&\lim_{n\to \infty}\frac{c_a\big(T_0(d^{(n)})x\big)}{g_n}
											=:c_a^{(0,\infty)}(x)=
											\begin{cases}
											0,&\text{if } \rho_a<\alpha_0,\\
											b_\alpha \cdot (\alpha_0+1)\cdot x^{\alpha_0}, &\text{if }\rho_a=\alpha_0,\\
											\infty, &\text{otherwise,}
											\end{cases}
											\end{split}
											\end{equation}
											where $b_{\alpha}> 0$ is the coefficient of 
											term $x^{\alpha_0}$ in the polynomial $\tau_a(\cdot)$
											if $\tau_a(\cdot)$ has degree $\alpha_0,$ for all
											$a\in A.$ 
											
											Let us define 
											\Wu{the} two ``limit" \Wu{games}
											\[
											\Gamma_{\tau}^{(0,\infty)}:=\big(\K_0, A, \bigcup_{k\in \K_0}\S_k, (r(a,s))_{a\in A, s\in \S_k,k\in \K_0}, (\tau_a^{(0,\infty)})_{a\in A},\d^{(0,\infty)}\big)
											\]
											and
											\[
											\Gamma_{c}^{(0,\infty)}:=\big(\K_0, A, \bigcup_{k\in \K_0}\S_k, (r(a,s))_{a\in A, s\in \S_k, k\in \K_0}, (c_a^{(0,\infty)})_{a\in A},\d^{(0,\infty)}\big).
											\]
											Obviously, $\Gamma_{\tau}^{(0,\infty)}$ and 
											$\Gamma_{c}^{(0,\infty)}$ have the same NE profiles
											and SO profiles, since $c_a(x)^{(0,\infty)}=(\alpha_0+1)
											\cdot \tau_a(x)^{(0,\infty)}$ for all 
											$a\in A.$ 
											
											We now aim to show that $(\tilde{\f}^{(0,\infty)}_s)_{s\in \S_k, k\in \K_0}$ is an NE profile
											of the game $\Gamma_{\tau}^{(0,\infty)}.$
											Let us arbitrarily fix \Wu{some} $k\in \K_0$ and
											two strategies $s,s'\in \S_k$ with 
											$\tilde{\f}_s^{(0,\infty)}>0.$
		\Wuu{This implies by Fact \ref{theo:StrategyPriceLimit_NonTight}
			that $s$ is tight.}
	By \Wuu{$L3)$ of Definition \ref{def:limitGame},} we obtain that $\tau_s^{(0,\infty)}(\tilde{\f}^{(0,\infty)})<\infty.$
											Thus, if $\tau_{s'}^{(0,\infty)}\big(\tilde{\f}^{(0,\infty)}\big)=\infty,$
											then 
											\begin{equation}\label{eq:Step0_Limit_NE}
											\tau_s^{(0,\infty)}\big(\tilde{\f}^{(0,\infty)}\big)\le \tau_{s'}^{(0,\infty)}\big(\tilde{\f}^{(0,\infty)}\big).
											\end{equation}
	We now assume that $\tau_{s'}^{(0,\infty)}(\tilde{\f}^{(0,\infty)})<\infty.$ \Wuu{Then, by $L3)$ of Definition \ref{def:limitGame},
		$s'$ is also tight.}
											We will prove that \eqref{eq:Step0_Limit_NE} still holds
											in this case,
											which, in turn, implies that $(\tilde{\f}^{(0,\infty)}_s)_{s\in \S_k, k\in \K_0}$ is an NE profile
											of the game $\Gamma_{\tau}^{(0,\infty)},$
											due to \Wu{the arbitrary choice} of $s,s'.$
											
											We recall that
											\[
											\lim_{n\to \infty}\frac{\tilde{f}^{(n)}_s}{T_0(d^{(n)})}=
											\tilde{\f}^{(0,\infty)}_s>0.
											\] 
											Thus, we obtain
											for large enough $n$  that
											\[
											\frac{\tilde{f}^{(n)}_s}{T_0(d^{(n)})}>0,
											\]
											which implies that 
											$\tilde{f}^{(n)}_s>0$
											for large enough $n.$ Since each
											$\tilde{f}^{(n)}$ is an NE profile for 
											each $n\in \N,$  we further obtain
											\Wu{by the user optimality 
												\eqref{def:WE}} that
											\[
											\tau_s(\tilde{f}^{(n)}) \le \tau_{s'}(\tilde{f}^{(n)})
											\] 
											for large enough $n$. Hence, by
											Fact \ref{theo:StrategyPriceConvergence}
											in the proof
											of Lemma
											\ref{theo:LimitGame_Wu2017}, we obtain that
											\[
											\tau_s^{(0,\infty)}(\tilde{\f}^{(0,\infty)})
											=\lim_{n\to \infty}
											\frac{\tau_s(\tilde{f}^{(n)})}{g_n}
											\le \lim_{n\to \infty}
											\frac{\tau_{s'}(\tilde{f}^{(n)})}{g_n}
											=\tau_{s'}^{(0,\infty)}(\tilde{\f}^{(0,\infty)}),
											\]
					\Wuu{since both $s$ and $s'$
						are tight.}
											
											\Wuu{Hence,} $(\tilde{\f}^{(0,\infty)}_s)_{s\in \S_k, k\in \K_0}$ is an NE profile
											of the game $\Gamma_{\tau}^{(0,\infty)}.$
											Similarly, we can prove that $(\f^{*(0,\infty)}_s)_{s\in \S_k, k\in \K_0}$
											is an NE profile of the game 
											$\Gamma_{c}^{(0,\infty)},$ which, in turn, implies
											that $(\f^{*(0,\infty)}_s)_{s\in \S_k, k\in \K_0}$ is
											also an NE profile of the game $\Gamma_{\tau}^{(0,\infty)}.$
											Therefore, 
											$(\tilde{\f}^{(0,\infty)}_s)_{s\in \S_k, k\in \K_0}$
											and $(\f^{*(0,\infty)}_s)_{s\in \S_k, k\in \K_0}$
											have equal cost w.r.t. the game $\Gamma_{\tau}^{(0,\infty)}.$
											Hence,  
											\begin{equation}\label{eq:Step0_limit_equiv}
											\frac{\sum_{k\in \K_0} \sum_{s\in \S_k:\rho_s\le \alpha_0}\tilde{\f}^{(0,\infty)}_s
												\cdot \sum_{a\in A}r(a,s)\cdot \tau_a^{(0,\infty)}(\tilde{\f}^{(0,\infty)}_a)}{\sum_{k\in \K_0} \sum_{s\in \S_k:\rho_s\le \alpha_0}\f^{*(0,\infty)}_s
												\cdot \sum_{a\in A}r(a,s) \cdot \tau_a^{(0,\infty)}(\f^{*(0,\infty)}_a)}=1,
												\end{equation}
												where we observe the fact that for $k\in \K_0,$
												each strategy $q\in \S_k$ with degree $\rho_q>\alpha_0$
												will be non-tight, since there exists a strategy
												$p\in \S_k$ with $\rho_p=\rho_k\le \alpha_0.$
												Moreover,
												both the \Wuu{numerator} and \Wuu{denominator} in 
												\eqref{eq:Step0_limit_equiv} are  positive and
												finite, i.e., in $\Theta(1),$
												since there exists a $k\in \K_0$ such that
												$\d_k^{(0,\infty)}>0$ and $\rho_k=\alpha_0.$
												
												By Fact
												\ref{theo:StrategyPriceConvergence} and
												Fact \ref{theo:StrategyPriceLimit_NonTight}
												in the proof of Lemma \ref{theo:LimitGame_Wu2017}, we obtain immediately that
												\begin{equation}\label{eq:Crucial1_in_Step0}
												\begin{split}
												\lim_{n\to \infty} \frac{\sum_{k\in \K_0}C_k(\tilde{f}^{(n)})}{T(d^{(n)})\cdot g_n}&=\sum_{k\in \K_0} \sum_{s\in \S_k:\rho_s\le \alpha_0}\tilde{\f}^{(0,\infty)}_s
												\cdot \sum_{a\in A}r(a,s)\cdot \tau_a^{(0,\infty)}(\tilde{\f}^{(0,\infty)}_a)\\
												&\in \Theta(1),
												\end{split}
												\end{equation}
												and 
												\begin{equation}\label{eq:Crucial2_in_Step0}
												\begin{split}
												\lim_{n\to \infty} \frac{\sum_{k\in \K_0}C_k(f^{*(n)})}{T(d^{(n)})\cdot g_n}&=\sum_{k\in \K_0} \sum_{s\in \S_k:\rho_s\le \alpha_0}\f^{*(0,\infty)}_s
												\cdot \sum_{a\in A}r(a,s)\cdot \tau_a^{(0,\infty)}(\f^{*(0,\infty)}_a)\\
												&\in \Theta(1).
												\end{split}
												\end{equation}
												\Wuu{Here} \eqref{eq:Crucial2_in_Step0} is \Wuu{obtained from}  
												the fact that
												profiles 
												$f^{*(n)}$ are
												NE profiles
												of the corresponding game
												with price functions $c_a(\cdot),$
												and that
												\begin{equation}
												\label{eq:ADCrucial2}
												\lim_{n\to\infty}
												\frac{c_a\big(T(d^{(n)})z\big)}{g_n}=c_a^{(0,\infty)}(z)=(1+\rho_a)\tau_a^{(0,\infty)}(z)=(1+\rho_a)\cdot\lim_{n\to\infty}
												\frac{\tau_a\big(T(d^{(n)})z\big)}{g_n}.
												\end{equation}

Claim \ref{theo:Induction_Step_0} \Wu{then} follows
\Wu{from \eqref{eq:Step0_limit_equiv},															\eqref{eq:Crucial1_in_Step0} and \eqref{eq:Crucial2_in_Step0}.}
																				\end{proof}

In the proof of Claim \ref{theo:Induction_Step_0}, \eqref{eq:AD_Curicial} and the scaling
factors $g_n$ play pivotal roles \Wuu{as \Wu{the equation} \eqref{eq:ADCrucial2} does not hold
	without them.}

																				\Wu{If} $\K_0=\K,$ then we have already finished the decomposition and completed the proof of Theorem~\ref{theo:SelfishMainTh} \Wu{by Claim \ref{theo:Induction_Step_0}}. We assume now that $\K\backslash\K_0\neq\emptyset.$
																						In this case, we need to further partition
																						$\K\backslash\K_0$ and proceed to
																						step $m=1.$
																						
																						\textbf{Step $m=1$: contruct $\K_1,$ and prove
																							\eqref{eq:InductiveStepObj} for $m=1.$}
																							
																							\Wu{To facilitate our discussion}, we \Wu{will first define} some 
																						notations and propose further assumptions
																							on the \Wu{fixed} user volume sequence 
																							$\{d^{(n)}\}_{n\in \N}.$										 
																							
																							For each $n\in \N$ and
																							each $a\in A,$ we denote by
																							\[
																							\tau_a^{(1,n)}(x):=\tau_a\big(x+\tilde{f}_a^{(n)}(\K_0)\big)\quad 
																							\text{and}\quad c_a^{(1,n)}(x):=c_a\big(x+f_a^{*(n)}(\K_0)\big)
																							\]
																							the price function of resource $a$
																							\Wu{under the condition} that the users from 
																							groups $k\in \K_0$ stick to strategies
																							they used in NE profiles $\tilde{f}^{(n)}$ and SO profiles
																							$f^{*(n)},$ respectively.
																							\Wu{Here,} 
																							\[
																							\tilde{f}_a^{(n)}(\K_0):=\sum_{k\in \K_0}\sum_{s\in \S_k}\sum_{a\in A}r(a,s)\tilde{f}_s^{(n)}\in O\big(T_0(d^{(n)})\big)
																							\]
																							and 
																							\[
																							f^{*(n)}_a(\K_0):=\sum_{k\in \K_0}\sum_{s\in \S_k}\sum_{a\in A}r(a,s)f_s^{*(n)}
																							\in O\big(T_0(d^{(n)})\big)
																							\]
																							denote the volumes of resource
																							$a$ consumed by users from 
																							groups $k\in\K_0$ w.r.t.
																							NE profile $\tilde{f}^{(n)}$ 
																							and SO profile $f^{*(n)},$
																							repectively,
																							for each $a\in A$
																							and each $n\in \N.$  
																							
																							Let $f^{*(n)}(\K\backslash\K_0):=(f_s^{*(n)})_{k\in \K\backslash\K_0:s\in \S_k}$
																							and $\tilde{f}^{(n)}(\K\backslash\K_0):=(\tilde{f}_s^{(n)})_{k\in \K\backslash\K_0:s\in \S_k}$
																							be the ``marginal" profiles consisting of users from \Wu{the}
																							remaining groups $\K\backslash\K_0$
																							w.r.t. the SO profile$f^{*(n)}$
																							and NE profile $\tilde{f}^{(n)},$
																							respectively, for each $n\in \N.$
																							Obviously,
																							\[
																							\tau_a^{(1,n)}\big(\tilde{f}^{(n)}(\K\backslash\K_0)\big)
																							=\tau_a\big(\tilde{f}^{(n)}\big)
																							\quad\text{and}\quad c_a^{(1,n)}\big(f^{*(n)}(\K\backslash\K_0)\big)
																							=c_a\big(f^{*(n)}\big)
																							\]
																							for each $a\in A$ and each $n\in \N.$
																							
																							Let
																							\[
																							\Gamma_{\tau}^{(1,n)}\!:=\!\big(\K\backslash\K_0,\! A,\! \bigcup_{k\in \K\backslash\K_0}\!\S_k,\! (r(a,s))_{a\in A, s\in \S_k, k\in \K\backslash\K_0},\! (\tau_a^{(1,n)})_{a\in A},\!(d_k^{(n)})_{k\in \K\backslash\K_0}\big)
																							\]
																							and
																							\[
																							\Gamma_{c}^{(1,n)}\!:=\!\big(\K\backslash\K_0,\! A,\! \bigcup_{k\in \K\backslash\K_0}\!\S_k,\! (r(a,s))_{a\in A, s\in \S_k, k\in \K\backslash\K_0},\! (c_a^{(1,n)})_{a\in A},\!(d_k^{(n)})_{k\in \K\backslash\K_0}\big)
																							\]
																							be the corresponding marginal games
																							\Wu{under the condition} that users from 
																							groups in $\K_0$ stick to the strategies
																							they used in NE profile $\tilde{f}^{(n)}$ and
																							SO profile $f^{*(n)},$
																							respectively, for each $n\in \N.$ 
																							Obviously, profiles $\tilde{f}^{(n)}(
																							\K\backslash\K_0)$ and $f^{*(n)}(
																							\K\backslash\K_0)$
																							are NE profiles of the two
																							\Wuu{marginal} games $\Gamma_{\tau}^{(1,n)}$
																							and $\Gamma_c^{(1,n)}$, respectively, w.r.t. the user volume vector
																							$(d^{(n)}_k)_{k\in \K\backslash\K_0},$
																							for each $n\in \N.$
																							
																							\Wu{At} step $m=1,$ we shall employ \Wuu{an  argument
																							similar to that for} step $m=0.$ However, we shall now
																							consider the marginal profiles and the marginal games consisting of groups from $\K\backslash\K_0$.
																							
																							Let $T_1(d^{(n)})=\sum_{k\in \K\backslash\K_0}d_k^{(n)}$ be \Wu{the} total volume
																							of users from groups in $\K\backslash\K_0,$ for each $n\in \N.$ Then,
																							$T_1(d^{(n)})\in o\big(T_0(d^{(n)})\big).$
																							\Wu{Then, it follows trivially
																								for the} price functions $\tau_a^{(1,n)}(\cdot)$ and
																							$c_a^{(1,n)}(\cdot)$ \Wu{that,} for
																							each $x\ge 0,$
																							\begin{equation}\label{eq:Crucial1_ind_ste_0.5}
																							\tau_a^{(1,n)}(T_1(d^{(n)}) x)\in \Theta\Big(\max\big\{\tau_a\big(T_1(d^{(n)}) x\big),
																							\tau_a\big(\tilde{f}_a^{(n)}(\K_0)\big)\big\}\Big)
																							\end{equation}
																							and
																							\begin{equation}\label{eq:Crucial2_ind_ste_0.5}
																							c_a^{(1,n)}(T_1(d^{(n)}) x)\in \Theta\Big(\max\big\{c_a\big(T_1(d^{(n)}) x\big),
																							c_a\big(f_a^{*(n)}(\K_0)\big)\big\}\Big),
																							\end{equation}
																							for each $a\in A,$ since both 
																							$\tau_a(\cdot)$ and $c_a(\cdot)$ are \Wu{asymptotically non-decreasing} polynomials 
																							for all $a\in A.$
																							

																							Similarly, we assume, w.l.o.g., that
																							\begin{itemize}
																								\item \Wu{the} limit $\lim_{n\to \infty}\frac{d_k^{(n)}}{T_1(d^{(n)})}=:\d_k^{(1,\infty)}$
																								exists for some constant $\d_k^{(1,\infty)}\in [0,1],$ for each $k\in \K\backslash\K_0,$ 
																								\item \Wu{the} limits 
																								\[
																								\lim_{n\to \infty}\frac{\tilde{f}^{(n)}_s}{T_1(d^{(n)})}=:\tilde{\f}_s^{(1,\infty)}
																								\quad \text{and}\quad 
																								\lim_{n\to \infty} \frac{f^{*(n)}_s}{T_1(d^{(n)})}=:\f_s^{*(1,\infty)}
																								\]
																								exist for some constants $\tilde{\f}_s^{(1,\infty)},\f_s^{*(1,\infty)}\in [0,1]$
																								for each $k\in \K\backslash\K_0$ and 
																								each $s\in \S_k.$
																								\end{itemize}
																								Otherwise, we can again take an infinite subsequence $\{n_i\}_{i\in \N}$ fulfilling these two conditions. Again,
																								let 
																								\[
																								\tilde{\f}_a^{(1,\infty)}:=
																								\sum_{k\in \K\backslash\K_0}
																								\sum_{s\in \S_k} r(a,s)\cdot \tilde{\f}^{(1,\infty)}_s
																								=\sum_{k\in \K_1}
																								\sum_{s\in \S_k} r(a,s)\cdot \tilde{\f}^{(1,\infty)}_s
																								\]
																								and 
																								\[
																								\f_a^{*(1,\infty)}:=
																								\sum_{k\in \K\backslash\K_0}
																								\sum_{s\in \S_k} r(a,s)\cdot \f^{*(1,\infty)}_s=
																								\sum_{k\in \K_1}
																								\sum_{s\in \S_k} r(a,s)\cdot \f^{*(1,\infty)}_s
																								\]
																								for each resource $a\in A,$
																								where $\K_1$ is defined below.
																								
																								Now we are ready to partition $\K\backslash\K_0.$
																								We define $\alpha_1:=\max\{\rho_k: k\in \K\backslash\K_0, \d_k^{(1,\infty)}>0\}>\alpha_0,$
																								and $\K_1:=\{k\in \K\backslash \K_0: \rho_k\le \alpha_1\}.$
																								Obviously, 
																								\[
																								\sum_{k\in \K_1}\frac{d_k^{(n)}}{T_1(d^{(n)})}\to \sum_{k\in \K_1}\d_k^{(1,\infty)}=1 \quad \text{as}\quad 
																								n\to \infty,
																								\]
																								and there exists $k\in \K_1$ such that
																								$\rho_k=\alpha_1$ and $d_k^{(n)}\in \Theta\big(T_1(d^{(n)})\big).$ 
																								
																								
																								To show \eqref{eq:InductiveStepObj}, we need a
																								\Wu{tighter} bound of $\tilde{L}_k^{(n)}$ and
																								$L_k^{*(n)}$ for each $k\in \K_1.$
																								Note that for 
																								each $k\in \K_1,$ $\tilde{L}_k^{(n)}$ is still the cost
																								of users from group $k$ w.r.t. \Wu{the} (marginal)
																								NE profile $\tilde{f}^{(n)}(\K\backslash\K_0)$
																								and \Wu{the} (marginal) game $\Gamma_\tau^{(1,n)}.$ 
																								\Wu{Similarly,} $L_k^{*(n)}$ is still the cost
																								of users from group $k$ w.r.t. \Wu{the} (marginal)
																								NE profile $f^{*(n)}(\K\backslash\K_0)$
																								and \Wu{the} (marginal) game $\Gamma_c^{(1,n)},$ 
																								for each $n\in \N.$
																								
																								Similar \Wu{to the} Claim \ref{theo:SelfishMainTh} at step $m=0$, Claim
																								\ref{theo:Crucial_Induc_Step_1} estimates
																								the \Wu{cost} of users w.r.t. the two marginal games
																								and corresponding NE profiles.
																								\begin{claim}\label{theo:Crucial_Induc_Step_1}
																									For each group $k\in \K_1,$ 
																									\[
																									\tilde{L}_k^{(n)}, L_k^{*(n)}\in 	O\Big(\max\{g_n^{(0)},g_n^{(1)}\}\Big),	\]
where $g_n^{(0)}=T_0(d^{(n)})^{\alpha_0}$ is the scaling factor at step $m=0,$ and $g_n^{(1)}=T_1(d^{(n)})^{\alpha_1}$ will be the scaling factor used at step $m=1,$ for each $n\in \N.$											
\end{claim}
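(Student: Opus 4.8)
The plan is to mirror the degree-counting argument of Claim~\ref{theo:Tau_Magnitude}, but now to exploit the finer splitting of the consumed volume of each resource into its $\K_0$-part and its $(\K\backslash\K_0)$-part. Fix a group $k\in\K_1$ and a strategy $s_0\in\S_k$ with $\rho_{s_0}=\rho_k\le\alpha_1$; such an $s_0$ exists by the definition of $\rho_k$. By the user optimality \eqref{def:WE} of the NE profile $\tilde{f}^{(n)}$ we have $\tilde{L}_k^{(n)}\le\tau_{s_0}(\tilde{f}^{(n)})=\sum_{a\in A}r(a,s_0)\,\tau_a(\tilde{f}_a^{(n)})$, and since $f^{*(n)}$ is an NE profile for the auxiliary prices $c_a(\cdot)$ (again polynomials of degree $\rho_a$), the same optimality gives $L_k^{*(n)}\le c_{s_0}(f^{*(n)})=\sum_{a\in A}r(a,s_0)\,c_a(f_a^{*(n)})$. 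Hence it suffices to bound $\tau_a(\tilde{f}_a^{(n)})$ (resp.\ $c_a(f_a^{*(n)})$) by $O\big(\max\{g_n^{(0)},g_n^{(1)}\}\big)$ for each of the finitely many resources $a$ with $r(a,s_0)>0$; for every such $a$ we have $\rho_a\le\rho_{s_0}\le\alpha_1$.

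First I would split $\tilde{f}_a^{(n)}=\tilde{f}_a^{(n)}(\K_0)+\tilde{f}_a^{(n)}(\K\backslash\K_0)$ and distinguish two cases for a fixed such $a$. If $\tilde{f}_a^{(n)}(\K_0)>0$, then some user of a group $k'\in\K_0$ adopts a strategy $s'$ with $r(a,s')>0$; its NE cost satisfies $\tilde{L}_{k'}^{(n)}\ge r(a,s')\,\tau_a(\tilde{f}_a^{(n)})$, and Claim~\ref{theo:Tau_Magnitude} together with $\rho_{k'}\le\alpha_0$ (which holds for all $k'\in\K_0$) gives $\tilde{L}_{k'}^{(n)}\in O\big(T_0(d^{(n)})^{\alpha_0}\big)=O(g_n^{(0)})$; since the positive values of $r(\cdot,\cdot)$ are bounded below by a constant and $\K_0$ is finite, this yields $\tau_a(\tilde{f}_a^{(n)})\in O(g_n^{(0)})$ uniformly. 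If instead $\tilde{f}_a^{(n)}(\K_0)=0$, then $\tilde{f}_a^{(n)}=\tilde{f}_a^{(n)}(\K\backslash\K_0)\in O\big(T_1(d^{(n)})\big)$, so $\tau_a(\tilde{f}_a^{(n)})\in O\big(T_1(d^{(n)})^{\rho_a}\big)\subseteq O\big(T_1(d^{(n)})^{\alpha_1}\big)=O(g_n^{(1)})$, using $\rho_a\le\alpha_1$ and $T_1(d^{(n)})\to\infty$. In either case $\tau_a(\tilde{f}_a^{(n)})\in O\big(\max\{g_n^{(0)},g_n^{(1)}\}\big)$ with a constant independent of $n$; summing over the finitely many resources on $s_0$ gives $\tilde{L}_k^{(n)}\in O\big(\max\{g_n^{(0)},g_n^{(1)}\}\big)$.

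The bound for $L_k^{*(n)}$ is obtained verbatim with $\tau_a$ replaced by $c_a$ and $\tilde{f}^{(n)}$ by $f^{*(n)}$: each $c_a$ is a non-negative, non-decreasing polynomial of degree $\rho_a$, $f^{*(n)}$ is at NE for the prices $c_a$, and Claim~\ref{theo:Tau_Magnitude} bounds the $c$-cost $L_{k'}^{*(n)}$ of every $k'\in\K_0$ by $O(g_n^{(0)})$, so the two-case analysis applies unchanged. I expect the crux to be precisely this case split on whether a resource carried by $s_0$ is touched by a $\K_0$-user: the naive estimate $\tilde{f}_a^{(n)}\in O(T_0(d^{(n)}))$ would only give $O\big(T_0(d^{(n)})^{\rho_a}\big)$, which for $\alpha_0<\rho_a\le\alpha_1$ is strictly larger than $\max\{g_n^{(0)},g_n^{(1)}\}$. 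The point is that a high-degree resource used by a $\K_0$-user cannot carry much volume at equilibrium, because that user's cost is already $O(g_n^{(0)})$, and this is exactly what forces the resource price down to $O(g_n^{(0)})$ and keeps the estimate tight. A minor technical care is that the case distinction depends on $n$, so one records that the $O$-constants from the two cases are uniform (finitely many groups, strategies, and positive resource demands) before taking their maximum.
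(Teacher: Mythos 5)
Your proof is correct and follows essentially the same route as the paper: the paper likewise isolates the $\K_0$-contribution to each resource's load, caps the resulting price at $O(g_n^{(0)})$ via the equilibrium cost bound of Claim~\ref{theo:Tau_Magnitude} for the $\K_0$-group touching that resource, bounds the remaining load by $O(T_1(d^{(n)}))$ with the degree estimate $\rho_a\le\alpha_1$, and transfers the bound to $\tilde{L}_k^{(n)}$ and $L_k^{*(n)}$ by user optimality of $\tilde f^{(n)}$ (resp.\ of $f^{*(n)}$ for the prices $c_a$). Your merging of the paper's two steps (its bound \eqref{eq:Step1_IMPORT} plus the $\Theta(\max\{\cdot,\cdot\})$ splitting of $\tau_a^{(1,n)}$) into a single case analysis on whether $\tilde f_a^{(n)}(\K_0)>0$ is only a cosmetic streamlining.
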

																									\begin{proof}[Proof of Claim \ref{theo:Crucial_Induc_Step_1}]
																										We first prove that for each resource $a\in A,$ 
																										\begin{equation}\label{eq:Step1_IMPORT}
																										\tau_a\big(f_a^{*(n)}(\K_0)\big),\tau_a(\tilde{f}_a^{(n)}\big(\K_0)\big)\in O\big(g_n^{(0)}\big). 
																										\end{equation}
																										We only prove \Wu{
																											this for} NE profiles $\tilde{f}^{(n)}.$
																										An almost identical argument applies to the SO profiles $f^{*(n)}$.
																										
																										Consider an arbitrarily fixed $a\in A.$
																										If $\tilde{f}_a^{(n)}(\K_0)=\sum_{s\in \S_{k'},k'\in \K_0} r(a,s)\cdot \tilde{f}_s^{(n)}>0,$ then $\tilde{f}_s^{(n)}>0$ for 
some $s\in \S_{k'}$ with $r(a,s)>0$
\Wu{and} for some $k'\in \K_0.$  By Claim \ref{theo:Tau_Magnitude}, if $\tilde{f}_s^{(n)}>0,$ i.e., $s$ is used by
																										some users from group $k'\in \K_0$, then 
																										\[
																										\tau_s(\tilde{f}^{(n)})=\tilde{L}_{k'}^{(n)}\in O\big(T(d^{(n)})^{\alpha_0}\big)=O(g_n^{(0)}),
																										\]
																										which in turn implies that
																										\[
																										\tau_a\big(\tilde{f}_a^{(n)}(\K_0)\big)\in  O\big(g_n^{(0)}\big)
																										\]
																										since 
																										\[
																										\tau_a\big(\tilde{f}_a^{(n)}(\K_0)\big)\le \tau_s(\tilde{f}^{(n)}).
																										\]
																										If $\tilde{f}_a^{(n)}(\K_0)=0,$ then 
																										\[
																										\tau_a\big(\tilde{f}_a^{(n)}(\K_0)\big)=\eta_a\in \Theta(1),
																										\]
																										for some constant \Wuu{$\eta_a\ge  0.$}
																										Thus we obtain that
																										\begin{equation*}
																										\tau_a\big(\tilde{f}_a^{(n)}(\K_0)\big)\in  O\big(g_n^{(n)}\big).
																										\end{equation*}
																										\Wu{A} similar result holds for SO profiles $f^{*(n)}.$

																										We are now ready to finish the proof of Claim \ref{theo:Crucial_Induc_Step_1}.
																										We first \Wu{prove
																											this for}
																										NE profiles $\tilde{f}^{(n)}(\K\backslash\K_0).$
																										By \eqref{eq:Step1_IMPORT},   \eqref{eq:Crucial1_ind_ste_0.5} and \eqref{eq:Crucial2_ind_ste_0.5}, we
																										obtain for each $a\in A$ that
																										\[
																										\begin{split}
																										\tau_a^{(1,n)}&\big(\tilde{f}^{(n)}_a(\K\backslash\K_0)\big)
																										=\tau_a^{(1,n)}\Big(T_1(d^{(n)})\frac{\tilde{f}^{(n)}_a(\K\backslash\K_0)}{T_1(d^{(n)})}\Big)\\
																										&\in O\big(\max\{g_n^{(0)}, T_1(d^{(n)})^{\rho_a}\}\big).
																										\end{split}
																										\]
																										where we observe that
																										$\frac{\tilde{f}^{(n)}_a(\K\backslash\K_0)}{T_1(d^{(n)})}\in O(1),$
																										since
																										\[
																										\lim_{n\to \infty}\frac{\tilde{f}^{(n)}_a(\K\backslash\K_0)}{T_1(d^{(n)})}=\tilde{\f}_a^{(1,\infty)}\in O(1).
																										\]
																										
																										Let $k\in \K_1$ be an arbitrarily fixed group. For each strategy 
																										$s\in \S_k,$  we obtain
																										\Wu{by the above discussion} that
																										the price of $s$ w.r.t. \Wu{the} NE profiles $\tilde{f}^{(n)}(\K\backslash\K_0)$ and
																										\Wu{the} marginal games $\Gamma_\tau^{(1,n)}$
																										are in
																										\[
																										O\Big(\max\{g_n^{(0)},T_1(d^{(n)})^{\rho_s}\}\Big).
																										\]
																										By the user optimality \eqref{def:WE}
																										of NE profiles, we then obtain that
																										the cost $\tilde{L}_k^{(n)}$ of users in each group 
																										$k\in \K_1$ \Wuu{is} in
																										\[
																										O\Big(\max\{g_n^{(0)},T_1(d^{(n)})^{\rho_k}\}\Big)\subseteq O\Big(\max\{g_n^{(0)}, g_n^{(1)}\}\Big),
																										\]
																										since $\rho_s\ge \rho_k$
																										and $\rho_k\le \alpha_1$
																										for each $s\in \S_k$ and $k\in \K_1.$

																										An almost identical argument
																										carries over to  
																										the NE profiles $f^{*(n)}(\K\backslash\K_0)$
																										w.r.t. the marginal games $\Gamma_c^{(1,n)}.$
																										This completes the proof.
																										\end{proof}
																										
																										\Wu{With} Claim \ref{theo:Crucial_Induc_Step_1},
																										we can now prove \eqref{eq:InductiveStepObj}
																										for  $m=1.$ \Wu{To this end}, we make \Wuu{the} further assumption that
	\[
	\lim_{n\to\infty} \frac{g_n^{(1)}}{g_n^{(0)}}
	=\lim_{n\to \infty} \frac{T_1(d^{(n)})^{\alpha_1}}{T_0(d^{(n)})^{\alpha_0}}
	=\beta_0,
	\]
	for some constant $\beta_0\in [0,\infty].$
	\Wu{Otherwise, one can take a subsequence fulfilling this condition.}
																										\begin{claim}\label{theo:Curial_Ind_Step1_2}
																											\begin{equation*}
																											\lim_{n\to \infty}\frac{\sum_{k\in \bigcup_{u=0}^{1} \K_u }C_k (\tilde{f}^{(n)})}{\sum_{k\in \bigcup_{u=0}^{1} \K_u }C_k (f^{*(n)})}
																											=1.
																											\end{equation*}
																											Moreover,
																											\[		
	\begin{split}			
		\sum_{k\in \bigcup_{u=0}^{1} \K_u }C_k (\tilde{f}^{(n)}),
																										&	\sum_{k\in \bigcup_{u=0}^{1} \K_u }C_k (f^{*(n)})
																											\in \Theta\big(\max\{T_0(d^{(n)})^{\alpha_0+1},T_1(d^{(n)})^{\alpha_1+1}\}\big)\\
			&=\Theta\Big(\max\{T_0(d^{(n)})\cdot g_n^{(0)}, T_1(d^{(n)})\cdot g_n^{(1)}\}\Big).
										\end{split}																	\]
																											\end{claim}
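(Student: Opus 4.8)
The plan is to split the combined cost into its $\K_0$-part and its $\K_1$-part and to compare them through Claim~\ref{theo:Induction_Step_0} together with an analogue of it for $\K_1$. Write $A_n=\sum_{k\in\K_0}C_k(\tilde f^{(n)})$, $B_n=\sum_{k\in\K_0}C_k(f^{*(n)})$, $P_n=\sum_{k\in\K_1}C_k(\tilde f^{(n)})$ and $Q_n=\sum_{k\in\K_1}C_k(f^{*(n)})$. Claim~\ref{theo:Induction_Step_0} already supplies $A_n,B_n\in\Theta(T_0(d^{(n)})\cdot g_n^{(0)})$ and $A_n/B_n\to1$. I would isolate the elementary fact that, for positive sequences, $A_n/B_n\to1$ and $P_n/Q_n\to1$ force $(A_n+P_n)/(B_n+Q_n)\to1$; this reduces the whole statement to controlling $P_n,Q_n$, and the claimed order $\Theta(\max\{T_0(d^{(n)})^{\alpha_0+1},T_1(d^{(n)})^{\alpha_1+1}\})$ then follows because a sum of two positive quantities is $\Theta$ of their maximum.

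The case analysis would be driven by the constant $\beta_0=\lim_n g_n^{(1)}/g_n^{(0)}$ fixed just before the claim. If $\beta_0=0$, then $\max\{g_n^{(0)},g_n^{(1)}\}=g_n^{(0)}$ eventually, so Claim~\ref{theo:Crucial_Induc_Step_1} gives $P_n,Q_n\in O(T_1(d^{(n)})\,g_n^{(0)})$; since $T_1(d^{(n)})\in o(T_0(d^{(n)}))$ this is $o(T_0(d^{(n)})g_n^{(0)})=o(A_n)$. Hence the $\K_1$ block is negligible, $(A_n+P_n)/(B_n+Q_n)\to A_n/B_n\to1$, and the combined cost stays in $\Theta(T_0(d^{(n)})g_n^{(0)})$, which coincides with $\Theta(\max\{\cdots\})$ because here $T_1g_n^{(1)}=o(T_0g_n^{(0)})$. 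This settles the regime in which the already-processed block dominates.

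If instead $\beta_0\in(0,\infty]$, I would reprove the step~$0$ argument for the marginal games $\Gamma_\tau^{(1,n)}$ and $\Gamma_c^{(1,n)}$ using the scaling factor $g_n^{(1)}=T_1(d^{(n)})^{\alpha_1}$, with the goal $P_n/Q_n\to1$ and $P_n,Q_n\in\Theta(T_1g_n^{(1)})$. Exactly as in Claim~\ref{theo:Induction_Step_0} one passes to a subsequence on which all normalized loads converge, identifies the limits of $\tau_a^{(1,n)}(T_1(d^{(n)})x)/g_n^{(1)}$ and $c_a^{(1,n)}(T_1(d^{(n)})x)/g_n^{(1)}$, and uses Fact~\ref{theo:StrategyPriceConvergence} and Fact~\ref{theo:StrategyPriceLimit_NonTight} to push the marginal profiles $\tilde f^{(n)}(\K\setminus\K_0)$ and $f^{*(n)}(\K\setminus\K_0)$ to NE profiles of the respective limit games; relation~\eqref{eq:AD_Curicial} (in the scaled form \eqref{eq:ADCrucial2}) again makes the two limit games share their equilibria, forcing equal normalized $\tau$-cost and hence $P_n/Q_n\to1$, while the existence of a group $k\in\K_1$ with $\rho_k=\alpha_1$ and $d_k^{(n)}\in\Theta(T_1(d^{(n)}))$ keeps the limit cost positive, giving $P_n,Q_n\in\Theta(T_1g_n^{(1)})$.

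The hard part is that the marginal price functions $\tau_a^{(1,n)}(x)=\tau_a(x+\tilde f_a^{(n)}(\K_0))$ and $c_a^{(1,n)}(x)=c_a(x+f_a^{*(n)}(\K_0))$ carry $n$-dependent background loads that, in the regime $\beta_0\in(0,\infty)$, need not vanish after dividing by $g_n^{(1)}$: by \eqref{eq:Step1_IMPORT} these backgrounds only satisfy $\tau_a(\cdot)\in O(g_n^{(0)})=O(g_n^{(1)})$, so on a resource of degree $\alpha_1$ the scaled limit is a shifted monomial $b_a(x+\gamma_a)^{\alpha_1}$ rather than a clean monomial, and the shift induced by the NE background a priori differs from the one induced by the SO background. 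The crux is thus to ensure that these shifts do not spoil the asymptotic NE–SO agreement for $\K_1$, i.e. that the $\tau$- and $c$-limit games still admit a common family of equilibria; I expect this to follow by bounding the backgrounds through \eqref{eq:Crucial1_ind_ste_0.5}, \eqref{eq:Crucial2_ind_ste_0.5} and \eqref{eq:Step1_IMPORT}, and by keeping the factor $1+\rho_a$ intact via \eqref{eq:ADCrucial2}. When $\beta_0=\infty$ the backgrounds are crushed against $g_n^{(1)}$ and the argument collapses to the clean step~$0$ situation, so the genuinely delicate regime is $\beta_0\in(0,\infty)$, where the background shift is of constant order after scaling.
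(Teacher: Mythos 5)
Your overall reduction (splitting the cost into the $\K_0$-block and the $\K_1$-block, using $A_n/B_n\to1$ and $P_n/Q_n\to1$ to get $(A_n+P_n)/(B_n+Q_n)\to1$, and reading off the $\Theta$-order from the maximum) is sound, and your treatment of the regimes $\beta_0=0$ and $\beta_0=\infty$ matches what the paper does. The problem is your case split. You place the boundary at $\beta_0=0$ versus $\beta_0\in(0,\infty]$, and you then correctly observe that for $\beta_0\in(0,\infty)$ the backgrounds $\tilde f_a^{(n)}(\K_0)$ and $f_a^{*(n)}(\K_0)$ survive the scaling by $g_n^{(1)}$, so the limit price functions become shifted monomials with \emph{a priori different} shifts for the NE-background and the SO-background, and the two limit games need no longer share equilibria. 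You do not resolve this — you only say you ``expect'' it to follow from \eqref{eq:Crucial1_ind_ste_0.5}, \eqref{eq:Crucial2_ind_ste_0.5} and \eqref{eq:Step1_IMPORT} — and indeed it is not clear that $P_n/Q_n\to1$ can be extracted this way. That is a genuine gap as the argument stands.

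The gap is avoidable, and the paper avoids it by splitting at $\beta_0<\infty$ versus $\beta_0=\infty$ instead. For \emph{every} $\beta_0<\infty$ (not just $\beta_0=0$) one has $g_n^{(1)}\in O(g_n^{(0)})$, so Claim~\ref{theo:Crucial_Induc_Step_1} gives $\tilde L_k^{(n)},L_k^{*(n)}\in O(g_n^{(0)})$ for $k\in\K_1$, hence
\[
P_n,\;Q_n\in O\big(T_1(d^{(n)})\,g_n^{(0)}\big)\subseteq o\big(T_0(d^{(n)})\,g_n^{(0)}\big)=o(A_n)=o(B_n),
\]
using $T_1(d^{(n)})\in o\big(T_0(d^{(n)})\big)$. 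So in the entire regime $\beta_0\in(0,\infty)$ the $\K_1$-block is negligible and the statement $P_n/Q_n\to1$ is never needed; the delicate shifted-background analysis you flag simply does not have to be carried out. Only when $\beta_0=\infty$ must one run the step-$0$ argument on the marginal games, and there, as you note, $\tau_a\big(\tilde f_a^{(n)}(\K_0)\big),\,c_a\big(f_a^{*(n)}(\K_0)\big)\in O(g_n^{(0)})\subseteq o(g_n^{(1)})$ crushes the backgrounds and yields the clean limits \eqref{eq:Limit_CPF1}, after which \eqref{eq:ADCrucial2} makes the $\tau$- and $c$-limit games share equilibria exactly as in Claim~\ref{theo:Induction_Step_0}. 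Redraw your case boundary at $\beta_0<\infty$ and the proof closes.
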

																											\begin{proof}[Proof of Claim \ref{theo:Curial_Ind_Step1_2}]
																												We separate the discussion into two cases.
																												
																												\textbf{(Case 1: $\beta_0<\infty$)}
																												In this case, the total cost of \Wuu{the} groups from $\K_1$
																												are negligible w.r.t. the total cost of \Wu{the} groups from
																												$\K_0.$ By Claim~\ref{theo:Induction_Step_0},
																												see
																												\eqref{eq:Crucial1_in_Step0} and
																												\eqref{eq:Crucial2_in_Step0}, we obtain 
																												for large enough $n$  that
																												\[
																												\sum_{k\in \K_0}C_k(\tilde{f}^{(n)})\approx \sum_{k\in \K_0}C_k(f^{*(n)})\in \Theta\big(T_0(d^{(n)})\cdot g_n^{(0)}\big).
																												\]
																												\Wu{With} Claim \ref{theo:Crucial_Induc_Step_1},  we obtain that
																												\[
																												\sum_{k\in \K_1}C_k(\tilde{f}^{(n)})=\sum_{k\in \K_1}d_k^{(n)} \cdot \tilde{L}_k^{(n)}\in O\big(T_1(d^{(n)})\cdot g_n^{(0)}\big),
																												\]
\Wuu{since $g_n^{(1)}\in O(g_n^{(0)})$ in this case.}
																												
																												Note that 
																												\[
																												\tau_a(f^{*(n)}_a)\in \Theta\big(c_a(f^{*(n)}_a)\big)
																												\quad \text{and}\quad c_a(f^{*(n)}_a)=c_a^{(1,n)}\big(f^{*(n)}_a(\K\backslash\K_0)\big)
																												\]
																												for each $a\in A.$ Thus, again by Claim \ref{theo:Crucial_Induc_Step_1},
																												we obtain
																												\[
																												\sum_{k\in \K_1}C_k(f^{*(n)})\in\Theta\Big(\sum_{k\in \K_1}d_k^{(n)}\cdot L^{*(n)}_k\Big)\subseteq O\big(T_1(d^{(n)})\cdot g_n^{(0)}\big).
																												\]
																												
																												Since  $T_1(d^{(n)})\in o\big(T_0(d^{(n)})\big),$ we obtain that 
																												\[
																												\sum_{k\in\K_1}C_k(\tilde{f}^{(n)})
																												\in o\Big(\sum_{k\in \K_0}C_k(\tilde{f}^{(n)})\Big)
																												\quad\text{and}\quad\sum_{k\in \K_1}C_k(f^{*(n)})\in o\Big(\sum_{k\in \K_0}C_k(f^{*(n)})\Big).
																												\]
																												\Wu{This gives}
																												\[
																												\lim_{n\to \infty}\frac{\sum_{k\in \bigcup_{u=0}^{1}\K_u}C_k(\tilde{f}^{(n)})}{
																													\sum_{k\in \bigcup_{u=0}^{1}\K_u}C_k(f^{*(n)})
																													}
																													=\lim_{n\to \infty}\frac{\sum_{k\in \K_0}C_k(\tilde{f}^{(n)})}{
																														\sum_{k\in \K_0}C_k(f^{*(n)})
																														}=1.
																														\]
																														
																														\Wu{With} Claim \ref{theo:Induction_Step_0}, we obtain that
																														\[
																														\sum_{k\in \bigcup_{u=0}^{1} \K_u }C_k (\tilde{f}^{(n)}),
																														\sum_{k\in \bigcup_{u=0}^{1} \K_u }C_k (f^{*(n)})
																														\in \Theta\big(\max\{T_0(d^{(n)})\cdot g_n^{(0)},T_1(d^{(n)})\cdot g_n^{(1)}\}\big),\]
																														since 
\Wuu{$g_n^{(1)}\in O(g_n^{(0)})$ and $T_1(d^{(n)})\in o(T_0(d^{(n)})).$}
																														
																														\textbf{(Case 2: $\beta_0=\infty$)}
																														In this case, we can use a similar argument as in the proof of Claim 
																														\ref{theo:Induction_Step_0} to show that 
																														\[
																														\lim_{n\to \infty}\frac{\sum_{k\in \K_1}C_k(\tilde{f}^{(n)})}{\sum_{k\in \K_1}
																															C_k(f^{*(n)})}=1.
																															\]
																															To this end, we need to look more closely into
																															the price functions $\tau_a^{(1,n)}(\cdot)$ and
																															$c_a^{(1,n)}(\cdot)$. 
																															
																															Let $a\in A$ be an arbitrarily fixed resource.
																															We aim to show that the price of $a$ is asymptotically
																															determined only by users from groups $k\in \K_1$ \Wu{under}
																															the assumption that $\beta_0=\infty$ and \Wu{the} scaling
																															factor \Wu{is} $g_n^{(1)}=T_1(d^{(n)})^{\alpha_1}$ for each $n\in \N.$
																															
																															If $\rho_a<\alpha_1,$ then we obtain by \eqref{eq:Crucial1_ind_ste_0.5} and \eqref{eq:Crucial2_ind_ste_0.5} that  the limit
																															\begin{equation}\label{eq:Step1_LimitTau1}
																															\lim_{n\to\infty}\frac{\tau_a^{(1,n)}(T_1(d^{(n)})x)}{g_n^{(1)}}=0,
																															\end{equation}
									\Wu{for	each $x> 0,$	}														since $g_n^{(0)}\in o(g_n^{(1)})$ and
																															\[
																															\tau_a^{(1,n)}(T_1(d^{(n)})x)\!\in\! O\Big(
																															\max\big\{g_n^{(0)},T_1(d^{(n)})^{\rho_a}\big\}\Big) \!\subseteq \! o\big(g_n^{(1)}\big).
																															\]

																															If $\rho_a\ge \alpha_1,$
																															then we obtain by the definition
																															of $\tau_a^{(1,n)}(\cdot)$ and
																															\eqref{eq:Step1_IMPORT} that 
																															the limit
																															\begin{equation}\label{eq:Step1_LimitTau2}
																															\lim_{n\to\infty}\frac{\tau_a^{(1,n)}(T_1(d^{(n)})x)}{g_n^{(1)}}=
																															\lim_{n\to\infty}\frac{\tau_a(T_1(d^{(n)})x)}{T_1(d^{(n)})^{\alpha_1}},
																															\end{equation}
																															exists \Wu{for all $x\ge 0.$} \Wu{Here} we observe that
																															\[
																															\tau_a\big(\tilde{f}_a^{(n)}(\K_0)\big)\in O(g_n^{(0)})
																															\subseteq o(g_n^{(1)}).\]  
																															Similarly, we obtain  that for each $x>0$ and each $a\in A$ 		with $\rho_a\ge \alpha_1,$ the limit
\[
\lim_{n\to\infty} \frac{c_a^{(1,n)}(T_1(d^{(n)})x)}{g_n^{(1)}}
=
\lim_{n\to\infty}\frac{c_a(T_1(d^{(n)})x)}{g_n^{(1)}}
\]
exists.																														
																							Then,
																															 \eqref{eq:Step1_LimitTau1}
																															and \eqref{eq:Step1_LimitTau2} \Wu{yield} for each \Wuu{$x> 0$} and $a\in A$ that
																															\begin{equation}\label{eq:Limit_CPF1}
																															\begin{split}
																															&\lim_{n\to\infty}\frac{\tau_a^{(1,n)}(T_1(d^{(n)})x)}{g_n^{(1)}}=:
																															\tau_a^{(1,\infty)}(x)=
																															\begin{cases}
																															0,&\text{if } \rho_a<\alpha_1,\\
																															b_{a}\cdot (\alpha_1+1) x^{\alpha_1},&\text{if } \rho_a=\alpha_1,\\
																															\infty,&\text{if }\rho_a>\alpha_1,
																															\end{cases}\\
																															&\lim_{n\to\infty}\frac{c_a^{(1,n)}(T_1(d^{(n)})x)}{g_n^{(1)}}=:
																															c_a^{(1,\infty)}(x)=
																															\begin{cases}
																															0,&\text{if } \rho_a<\alpha_1,\\
																															b_{a}\cdot (\alpha_1 +1) x^{\alpha_1},&\text{if } \rho_a=\alpha_1,\\
																															\infty,&\text{if }\rho_a>\alpha_1,
																															\end{cases}
																															\end{split}
																															\end{equation}
																															where, again, $b_a>0$ is the coefficient of
																															the term $x^{\alpha_1}$ of polynomial
																															$\tau_a(\cdot),$ if $\tau_a(\cdot)$ does have
																															degree $\alpha_1,$ for each
																															$a\in A.$ Notice that \eqref{eq:Limit_CPF1}
																															actually indicates that users from groups
																															$k\in \K\backslash\bigcup_{u=0}^{1}\K_u$ can be ignored
																															when we discuss the limit behavior of users
																															from groups $k\in \K_1,$ since their total
																															volume is negligible compared to that of users
																															from groups $k\in \K_1.$
																															
																															By \eqref{eq:Limit_CPF1}, with an
																															\Wuu{ argument 
																																similar to} that \Wuu{for} Claim \ref{theo:Induction_Step_0},
																															\Wu{we} can show that 
																															\[
																															\f^{*(1,\infty)}:=\big(\f_s^{*(1,\infty)}\big)_{k\in \K_1:s\in \S_k}
																															\quad \text{and} \quad \tilde{\f}^{(1,\infty)}:=\big(\tilde{\f}_s^{(1,\infty)}\big)_{k\in \K_1:s\in \S_k}
																															\]
																															are NE profiles w.r.t.
																															\Wu{the limit} marginal games 
																															\[
																															\Gamma_{\tau}^{(1,\infty)}\!:=\!\big(\K_1,\! A,\! \bigcup_{k\in \K_1}\!\S_k,\! (r(a,s))_{a\in A, s\in \S_k, k\in \K_1},\! (\tau_a^{(1,\infty)})_{a\in A},\!(\d_k^{(1,\infty)})_{k\in \K_1}\big)
																															\]
																															and
																															\[
																															\Gamma_{c}^{(1,\infty)}\!:=\!\big(\K_1,\! A,\! \bigcup_{k\in \K_1}\!\S_k,\! (r(a,s))_{a\in A, s\in \S_k, k\in \K_1},\! (c_a^{(1,\infty)})_{a\in A},\!(\d_k^{(1,\infty)})_{k\in \K_1}\big),
																															\]
																															respectively. 
																															
																															Moreover, we can show that 
																															\[
																															\lim_{n\to\infty} \frac{\sum_{k\in \K_1}C_k(\tilde{f}^{(n)})}{T_1(d)\cdot g_n^{(1)}}=
																															\sum_{k\in \K_1}\sum_{s\in \S_k: \rho_s\le \alpha_1}\tilde{\f}_s^{(1,\infty)}
																															\cdot \sum_{a\in A}r(a,s)\tau_a^{(1,\infty)}(\tilde{\f}_a^{(1,\infty)})\in \Theta(1)
																															\]
																															and 
																															\[
																															\lim_{n\to\infty} \frac{\sum_{k\in \K_1}C_k(f^{*(n)})}{T_1(d)\cdot g_n^{(1)}}=
																															\sum_{k\in \K_1}\sum_{s\in \S_k: \rho_s\le \alpha_1}{\f}_s^{*(1,\infty)}
																															\cdot \sum_{a\in A}r(a,s)\tau_a^{(1,\infty)}({\f}_a^{*(1,\infty)})\in \Theta(1).
																															\]
																															Note that there exists at least one
																															group $k\in \K_1,$ such 
																															that $\d^{(1,\infty)}_k>0$ and
																															$\rho_k=\alpha_1.$ Therefore, 
																															\[
																															\frac{
																																\sum_{k\in \K_1}\sum_{s\in \S_k: \rho_s\le \alpha_1}\tilde{\f}_s^{(1,\infty)}
																																\cdot \sum_{a\in A}r(a,s)\tau_a^{(1,\infty)}(\tilde{\f}_a^{(1,\infty)})
																																}{
																																\sum_{k\in \K_1}\sum_{s\in \S_k: \rho_s\le \alpha_1}{\f}_s^{*(1,\infty)}
																																\cdot \sum_{a\in A}r(a,s)\tau_a^{(1,\infty)}({\f}_a^{*(1,\infty)})
																																}=1,
																																\]
																																since the two games
																																$\Gamma_\tau^{(1,\infty)}$
																																and $\Gamma_c^{(1,\infty)}$
																																have \Wu{the} same NE profiles, the \Wu{numerator}
																																is the cost of the NE profile $\tilde{\f}^{(1,\infty)},$
																																and the \Wu{denominator} is the cost of the
																																cost of the NE profile
																																$\f^{*(1,\infty)}.$
																																
																																As a result, we obtain that
																																\[
																																\lim_{n\to \infty}\frac{\sum_{k\in \K_1}C_k(\tilde{f}^{(n)})}{\sum_{k\in \K_1}
																																	C_k(f^{*(n)})}=1,
																																	\]
																																	which, in turn, \Wu{implies by Claim \ref{theo:Induction_Step_0}} that
																																	\[
																																	\lim_{n\to \infty}\frac{\sum_{k\in \bigcup_{u=0}^{1} \K_u }C_k (\tilde{f}^{(n)})}{\sum_{k\in \bigcup_{u=0}^{1} \K_u }C_k (f^{*(n)})}
																																	=1.
																																	\]
																																	This completes the proof of
																																	Claim \ref{theo:Curial_Ind_Step1_2}.
																																	
																																	Consequently,
																																	\[
																																	\sum_{k\in \K_1}C_k(\tilde{f}^{(n)})\approx\sum_{k\in \K_1}
																																	C_k(f^{*(n)})\in \Theta\Big( T_1(d^{(n)})\cdot g_n^{(1)}\Big).
																																	\]
																																	\Wu{With} Claim \ref{theo:Induction_Step_0},
																																	we \Wu{then} further obtain that
																																	\[
																																	\sum_{k\in \bigcup_{u=0}^{1} \K_u }C_k (\tilde{f}^{(n)}),
																																	\sum_{k\in \bigcup_{u=0}^{1} \K_u }C_k (f^{*(n)})
																																	\in \Theta\big(\max\{T_0(d^{(n)})g_n^{(0)},T_1(d^{(n)})g_n^{(1)}\}\big).\]
																																	\end{proof}
																																	
																																	 \Wu{If}
																																	$\K=\K_0\cup\K_1,$ then we have already finished
																																	the decomposition and completed
																																	the whole proof \Wu{with} Claim \ref{theo:Curial_Ind_Step1_2}. Otherwise, we can continue with \Wu{an} 
																																	argument \Wu{similar} to those \Wu{at} steps $m=0,1.$ 
																																	Note that this procedure will eventually
																																	terminate, since the number $|\K|=K$ of groups is finite.
																																	\Wu{We now outline} the
																																	general inductive step $m=l$ for some integer
																																	$l=0,\ldots,K$.
																																	
																																	\textbf{Step $m=l:$ construct $\K_{l},$ and
																																		prove \eqref{eq:InductiveStepObj} for $m=l.$}
																																		
																																		We assume that we have
																																		partitioned $\K$ into 
																																		$\K_0, \ldots, \K_{l-1}, \K\backslash
																																		\bigcup_{u=0}^{l-1}\K_u$ for some
																																		integer $l=0,\ldots,K,$ where
																																		we use a convention that
																																		$\bigcup_{u=0}^{-1}\K_u=\emptyset,$ and
																																		$\K_{-1}=\emptyset.$ Moreover, we \Wu{make
																																			the following inductive assumptions.}
																																		\begin{itemize}
		\item[$IA1.$] The limit																															\[																																\lim_{n\to \infty} \frac{T_{u+1}(d^{(n)})}{T_u(d^{(n)})}=0,
	\quad\text{and}\quad 
	\lim_{n\to\infty}\frac{\sum_{k\in \K_{l-1}}d_k^{(n)}}{T_{l-1}}=1,																																	\]
																																		for each $u=0,1,\ldots,l\!-\!2,$ where  each
		$T_u(d^{(n)})=\sum_{k\in \K\backslash\bigcup_{i=0}^{u-1}\K_i}d_k^{(n)},$ for each $n\in \N$ for $u=0,\ldots,l-1.$ 
																																			\item[$IA2.$] 											Define 
																																			\[
																																			\alpha_u:=\max\{\rho_k: k\in\K_u\}
																																			\]
																																			for each $u=0,1,\ldots,l-1.$ 
																																			\item[$IA3.$] For each $k\in \bigcup_{u=0}^{l-1}\K_u,$ 
																																			\[ 
																			\tilde{L}_k^{(n)},																L_k^{*(n)}\in 	O\big(\max\{g_n^{(0)},\ldots,g_n^{(l-1)}\}\big),
																																			\]
																																			where each $g_n^{(u)}=T_u(d^{(n)})^{\alpha_u},$
																																			for $u=0,\ldots,l-1.$
																																			
																																			\item [$IA4.$] For
																																			each $u=0,\ldots,l-1,$
																																			\[
																																			\lim_{n\to\infty} \frac{\sum_{k\in \bigcup_{i=0}^{u}\K_i}C_k\big(\tilde{f}^{(n)}\big)}{
																																				\sum_{k\in \bigcup_{i=0}^{u}\K_i}C_k\big(f^{*(n)}\big)
																																				}=1,
																																				\]
																																				and
																																				\[
		\begin{split}																															\sum_{k\in \bigcup_{i=0}^{u}\K_i}C_k\big(f^{*(n)}\big),
																																				&\sum_{k\in\bigcup_{i=0}^{u}\K_i} C_k\big(\tilde{f}^{(n)}\big)\\
																																				&\in \Theta\Big(
																																		\max\{T_0(d^{(n)})g_n^{(0)},\ldots,T_u(d^{(n)})g_{n}^{(u)}\}																																\Big).
\end{split}																				\]
																																				\end{itemize}
																																				
																																				Note that 
																																				we only need to check the validity of 																											$IA3$-$IA4$ at each step. $IA1$
follows immediately from the definition
of total user volumes $T_u(d^{(n)})$ of
the marginals, and $IA2$ defines the ``degree"
of each $\K_u$ for $u=0,\ldots,l-1$.

																																				Again, if $\bigcup_{u=0}^{l-1}\K_u=\K,$ then we have already
finished the decomposition and completed the
proof. Otherwise, we can \Wuu{apply an argument similar to} \Wu{those above.} Due to the heavy similarity,
we now only
list the key components at this general step
$m=l$, but 
omit the detailed proof.

Similarly we need the following assumptions: 
\begin{itemize}
	
																																					\item The limit
																																					\[
																																					\lim_{n\to \infty}\frac{d_k^{(n)}}{T_{l}(d^{(n)})}=:\d_k^{(l,\infty)}
																					\in [0,1]																\]
																																					\Wu{exists} for each $k\in \K\backslash\bigcup_{u=0}^{l-1}\K_{u},$
where $T_l(d^{(n)})=\sum_{k\in \K\backslash\bigcup_{u=0}^{l-1}\K_u}d_k^{(n)}.$
																																					\item The limit
																																					\[
																																					\lim_{n\to \infty} \frac{\tilde{f}_s^{(n)}}{T_{l}(d^{(n)})}=:\tilde{\f}^{(l,\infty)}_s
																																					\quad \text{and}\quad
																																					\lim_{n\to \infty} \frac{{f}_s^{*(n)}}{T_{l}(d^{(n)})}=:\f^{*(l,\infty)}_s
																																					\]
																																					\Wu{exist} for some constants $\tilde{\f}^{(l,\infty)}_s,\f^{*(l,\infty)}_s\in [0,1]$ for each $s\in \S_k,$ and each $k\in \K\backslash\bigcup_{u=0}^{l-1}\K_u.$
																																					\end{itemize}
																																					Otherwise, we can \Wu{again} take \Wu{an} infinite subsequence to fulfill
																																					these \Wu{assumptions}.

																											Similarly,	we define
																																					\[
																																					\alpha_{l}:=\max\Big\{\rho_k: k\in \K\backslash\bigcup_{u=0}^{l-1}\K_u,
																																					\d_k^{(l,\infty)}>0\Big\},
																																					\]
																																					and put
																																					\[
																																					\K_{l}:=\Big\{k\in \K\backslash\bigcup_{u=0}^{l-1}\K_u: \rho_k\le
																																					\alpha_{l}\Big\}.
																																					\]
																																					Obviously, $\K_l,\alpha_l$ and 																$T_l(d^{(n)})$ together validate $IA1$-$IA2$ for step $m=l.$ 
Moreover, there exists $k\in \K_l$ such that
																																					$\d_k^{(1,\infty)}>0$ and $\rho_k=\alpha_l.$

																																					Similarly, \Wuu{we} put
																																					\[
																																					\tau_a^{(l,n)}(x):=\tau_a\Big(x+\tilde{f}_a^{(n)}(\bigcup_{u=0}^{l-1}\K_u)\Big)
																																					\]
																																					and
																																					\[
																																					c_a^{(l,n)}(x):=c_a\Big(x+{f}_a^{*(n)}(\bigcup_{u=0}^{l-1}\K_u)\Big),
																																					\]
																																					be the price functions \Wu{under
																																						the condition that} 
																																					users from groups in $\bigcup_{u=0}^{l-1}\K_u$
																																					stick to the strategies they used in 
																																					NE profiles $\tilde{f}^{(n)}$ and SO profiles
																																					$f^{*(n)},$ respectively, for each $n\in \N$
																																					and each $a\in A.$ 
																																					\Wu{Here}
																																					\[
																																					\tilde{f}_a^{(n)}(\bigcup_{u=0}^{l-1}\K_u):=\sum_{k\in \bigcup_{u=0}^{l-1}\K_u}
																																					\sum_{s\in \S_k} r(a,s)\cdot \tilde{f}_s^{(n)}
																																					\]
																																					and 
																																					\[
																																					{f}_a^{*(n)}(\bigcup_{u=0}^{l-1}\K_u):=\sum_{k\in \bigcup_{u=0}^{l-1}\K_u}
																																					\sum_{s\in \S_k} r(a,s)\cdot {f}_s^{*(n)}
																																					\]
																																					are \Wu{the} volumes of resource $a$ consumed by users from groups $\in
																																					\bigcup_{u=0}^{l-1}\K_u$ w.r.t. \Wu{the} profiles $\tilde{f}^{(n)}$ and $f^{*(n)},$  
																																					respectively, for each $n\in \N$ and each $a\in A.$
																																					
																																					Obviously, 
																																					\[
																																					\tau_a(\tilde{f}_a^{(n)})=\tau_a^{(1,n)}\Big(\tilde{f}_a^{(n)}\big(\K\backslash\bigcup_{u=0}^{l-1}\K_u\big)\Big)\text{ and }
																																					c_a(f_a^{*(n)})=c_a^{(1,n)}\Big(f_a^{*(n)}\big(\K\backslash\bigcup_{u=0}^{l-1}\K_u\big)\Big)
																																					\]
				\Wu{for each $n\in \N,$ and each resource $a\in A.$	}																																Here 
																																					\[
																																					\tilde{f}_a^{(n)}\big(\K\backslash\bigcup_{u=0}^{l-1}\K_u\big):=
																																					\tilde{f}_a^{(n)}-\tilde{f}_a^{(n)}\big(\bigcup_{u=0}^{l-1}\K_u\big)
																																					=\sum_{k\in \K\backslash\bigcup_{u=0}^{l-1}\K_u}\sum_{s\in \S_k}
																																					r(a,s)\cdot \tilde{f}_s
																																					\]
																																					denotes the volume of resource $a$ consumed by users from 
																																					groups $k\in \K\backslash\bigcup_{u=0}^{l-1}\K_u$
																																					w.r.t. NE profile $\tilde{f}^{(n)}.$  \Wu{This yields}
																																					similarly for $f_a^{*(n)}\big(\K\backslash\bigcup_{u=0}^{l-1}\K_u\big).$
																																					
																																					\Wu{By the inductive assumption $IA3$ and
																																					a simlar argument to} the proof of
																																					Claim \ref{theo:Crucial_Induc_Step_1},  we obtain for each $a\in A$ that
																																					\[
																																					\tau_a\Big(\tilde{f}_a^{(n)}\big(\bigcup_{u=0}^{l-1}\K_u\big)\Big)
																																					 \in O\big(\max\{g_n^{(0)},\ldots,g_n^{(l-1)}\}\big)
																																					\]
																																					and																											\[
																																					c_a\Big(f_a^{*(n)}\big(\bigcup_{u=0}^{l-1}\K_u\big)\Big)
									\in
																																					O\big(\max\{g_n^{(0)},\ldots,g_n^{(l-1)}\}\big).
																																					\]
Since both $\tau_a(\cdot)$ and
$c_a(\cdot)$ are asymptotically 
non-decreasing, we thus obtain  that
\[
																																					c_a^{(1,n)}\Big(T_l(d^{(n)})x\Big),\tau_a^{(1,n)}\Big(T_l(d^{(n)})x\Big)\!\in\! O\Big(\max\big\{T_l(d^{(n)})^{\rho_a},g_n^{(0)},\ldots,g_n^{(l-1)}\big\}\Big)
																																					\]
\Wu{for each $a\in A$
and each $x\ge 0.$	}										
																																					\Wu{An  argument similar to that for Claim \ref{theo:Crucial_Induc_Step_1} the gives:}
																																					\begin{claim}\label{theo:Crucial_Induc_Step_l}
																																						For each $k\in \K_l,$ 
																																						\[
																																						\tilde{L}_k^{(n)},L_k^{*(n)}\in O\Big(\max\{g_n^{(0)},\ldots,g_n^{(l)}\}\Big),
																																						\]
where $g_n^{(l)}=T_l(d^{(n)})^{\alpha_l}.$
																																						\end{claim}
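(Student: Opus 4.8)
The plan is to transcribe the argument of Claim \ref{theo:Crucial_Induc_Step_1} into the general step, now carrying the whole list $g_n^{(0)},\ldots,g_n^{(l-1)}$ of earlier scaling factors instead of just $g_n^{(0)}$. The two estimates needed to start are already established immediately before the statement: from the inductive hypothesis $IA3$ (and the reasoning behind \eqref{eq:Step1_IMPORT}) the background consumption of the processed groups satisfies $\tau_a(\tilde{f}_a^{(n)}(\bigcup_{u=0}^{l-1}\K_u)), c_a(f_a^{*(n)}(\bigcup_{u=0}^{l-1}\K_u))\in O(\max\{g_n^{(0)},\ldots,g_n^{(l-1)}\})$, and the asymptotic monotonicity of $\tau_a,c_a$ then gives
\[
\tau_a^{(l,n)}\big(T_l(d^{(n)})x\big),\ c_a^{(l,n)}\big(T_l(d^{(n)})x\big)\in O\Big(\max\big\{T_l(d^{(n)})^{\rho_a},g_n^{(0)},\ldots,g_n^{(l-1)}\big\}\Big)
\]
for each $a\in A$ and $x\ge 0$. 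Hence only the transfer from these per-resource bounds to the per-group quantities $\tilde{L}_k^{(n)},L_k^{*(n)}$ remains.

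First I would evaluate the second estimate at the actual marginal profiles. Since $\lim_{n\to\infty}\tilde{f}_a^{(n)}(\K\backslash\bigcup_{u=0}^{l-1}\K_u)/T_l(d^{(n)})=\tilde{\f}_a^{(l,\infty)}\in O(1)$, and likewise for the SO profile, every resource is consumed in volume $O(T_l(d^{(n)}))$, so the price of any strategy $s$ with respect to the marginal NE profile and the shifted prices $\tau_a^{(l,n)}$ lies in $O(\max\{T_l(d^{(n)})^{\rho_s},g_n^{(0)},\ldots,g_n^{(l-1)}\})$, because $\rho_a\le\rho_s$ whenever $r(a,s)>0$. Then I would fix $k\in\K_l$, pick a minimum-degree strategy $s_0\in\S_k$ with $\rho_{s_0}=\rho_k$, and apply the user optimality \eqref{def:WE} of the marginal NE profile: every used strategy of group $k$ costs at most as much as $s_0$, so
\[
\tilde{L}_k^{(n)}\le \tau_{s_0}^{(l,n)}\big(\tilde{f}^{(n)}(\K\backslash\bigcup_{u=0}^{l-1}\K_u)\big)\in O\big(\max\{T_l(d^{(n)})^{\rho_k},g_n^{(0)},\ldots,g_n^{(l-1)}\}\big).
\]
As $\rho_k\le\alpha_l$ for every $k\in\K_l$ and $T_l(d^{(n)})\to\infty$, the term $T_l(d^{(n)})^{\rho_k}$ is eventually dominated by $T_l(d^{(n)})^{\alpha_l}=g_n^{(l)}$, which yields $\tilde{L}_k^{(n)}\in O(\max\{g_n^{(0)},\ldots,g_n^{(l)}\})$. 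The same chain of inequalities applied to $f^{*(n)}$, the auxiliary prices $c_a^{(l,n)}$, and the corresponding marginal game --- using that $f^{*(n)}$ is an NE profile for the prices $c_a$ and that $c_a\in\Theta(\tau_a)$ --- produces the identical bound for $L_k^{*(n)}$.

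The main obstacle I anticipate is not conceptual but organizational: keeping the two contributions to a strategy's cost cleanly separated inside the $O(\cdot)$ estimates --- the background consumption of the already-processed groups, which $IA3$ caps at $\max\{g_n^{(0)},\ldots,g_n^{(l-1)}\}$, against the fresh contribution at the current scale, which the degree bound $\rho_k\le\alpha_l$ caps at $g_n^{(l)}$ --- and making sure the asymptotic monotonicity of $\tau_a^{(l,n)}$ legitimately permits replacing the unknown marginal consumption of resource $a$ by the upper bound $T_l(d^{(n)})x$ before passing to the $O(\cdot)$ estimate. Once these are in hand, the step is an exact replica of Claim \ref{theo:Crucial_Induc_Step_1}, with $\K_0$ replaced by $\bigcup_{u=0}^{l-1}\K_u$ and the single factor $g_n^{(0)}$ replaced by the maximum of $g_n^{(0)},\ldots,g_n^{(l-1)}$.
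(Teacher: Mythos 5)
Your proposal is correct and follows exactly the route the paper intends: the paper omits the detailed proof of this claim, stating only that the displayed bounds on $\tau_a^{(l,n)}$ and $c_a^{(l,n)}$ together with ``an argument similar to that for Claim \ref{theo:Crucial_Induc_Step_1}'' give the result, and your transcription (marginal consumption in $O(T_l(d^{(n)}))$, per-strategy price bounds, user optimality applied to a minimum-degree strategy $s_0$ with $\rho_{s_0}=\rho_k\le\alpha_l$, and the parallel treatment of $f^{*(n)}$ via the auxiliary prices $c_a\in\Theta(\tau_a)$) is precisely that argument.
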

																																						
																																					 Claim \ref{theo:Crucial_Induc_Step_l} validates
																																						the inductive assumption $IA3$ for step $m=l.$ 
																																			
																																							To validate the inductive assumption $IA4$ for step $m=l,$ we assume that the limit
\[
\lim_{n\to\infty} \frac{g_l^{(n)}}{\max\{g_n^{(0)},\ldots,g_n^{(l-1)}\}}=\beta_{l-1}\in [0,\infty]
\]
exists for some constant $\beta_{l-1}.$	
In the case that $\beta_{l-1}<\infty,$ we obtain
that $g_n^{(l)}\in O\big(\max\{g_n^{(0)},\ldots,g_n^{(l-1)}\}\big),$
which implies that groups $\in\K_l$ are negligible
w.r.t. groups $\in \bigcup_{u=0}^{l-1}\K_u,$
since $T_l(d^{(n)})\in o\big(T_{l-1}(d^{(n)})\big).$
Hence, if $\beta_{l-1}<\infty,$ then $IA4$ is valid for step $m=l.$

In the case that $\beta_{l-1}=\infty,$ we obtain that
$g_n^{(l)}\in \omega\big(\max\{g_n^{(0)},\ldots,g_n^{(l-1)}\}\big),$
which implies that the behavior of users from groups
$\K_l$ is asymptotically independent of users from 
groups $\bigcup_{u=0}^{l-1}\K_u.$ Then, \Wu{an}
argument \Wu{similar to} that \Wu{for} Claim \ref{theo:Curial_Ind_Step1_2} applies.
\Wu{This yields} Claim \ref{theo:Crucial_Induc_Step_l_2} below, \Wu{and}
validates $IA4$ for step $m=l.$
\begin{claim}\label{theo:Crucial_Induc_Step_l_2}	\[
																																								\lim_{n\to\infty} \frac{\sum_{k\in \bigcup_{i=0}^{l}\K_i}C_k\big(\tilde{f}^{(n)}\big)}{
																																									\sum_{k\in \bigcup_{i=0}^{l}\K_i}C_k\big(f^{*(n)}\big)
																																									}=1,
																																									\]
																																									and
																																									\[
																																									\sum_{k\in \bigcup_{i=0}^{l}\K_i}C_k\big(f^{*(n)}\big),
																																									\sum_{k\in\bigcup_{i=0}^{l}\K_i} C_k\big(\tilde{f}^{(n)}\big)\in \Theta\Big(
																																									\max\{T_0(d^{(n)})g_n^{(0)},\ldots,T_l(d^{(n)})g_n^{(l)}\}												\Big).
																																									\]
																																									\end{claim}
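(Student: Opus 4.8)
The plan is to mirror the arguments already carried out for steps $m=0$ and $m=1$, that is, Claim \ref{theo:Induction_Step_0} and Claim \ref{theo:Curial_Ind_Step1_2}, but now reading everything through the marginal games $\Gamma_\tau^{(l,n)}$ and $\Gamma_c^{(l,n)}$ in which the users of $\bigcup_{u=0}^{l-1}\K_u$ are frozen at the strategies they use in $\tilde f^{(n)}$ and $f^{*(n)}$, respectively. First I would invoke Claim \ref{theo:Crucial_Induc_Step_l} (which supplies the validity of $IA3$ at level $l$) together with the assumed existence of $\beta_{l-1}=\lim_{n\to\infty} g_n^{(l)}/\max\{g_n^{(0)},\ldots,g_n^{(l-1)}\}$, and split the analysis into the two regimes $\beta_{l-1}<\infty$ and $\beta_{l-1}=\infty$, exactly as in Claim \ref{theo:Curial_Ind_Step1_2}.

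In the regime $\beta_{l-1}<\infty$ the groups in $\K_l$ are cost-negligible: by Claim \ref{theo:Crucial_Induc_Step_l} each $\tilde L_k^{(n)},L_k^{*(n)}\in O(\max\{g_n^{(0)},\ldots,g_n^{(l-1)}\})$ for $k\in\K_l$, and since $T_l(d^{(n)})\in o(T_{l-1}(d^{(n)}))$ by $IA1$, both $\sum_{k\in\K_l}C_k(\tilde f^{(n)})$ and $\sum_{k\in\K_l}C_k(f^{*(n)})$ lie in $o$ of the corresponding sums over $\bigcup_{u=0}^{l-1}\K_u$, whose exact magnitude is fixed by $IA4$. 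The ratio limit then reduces to the one in $IA4$, and since $T_l(d^{(n)})g_n^{(l)}\in O(\max_{u\le l-1}T_u(d^{(n)})g_n^{(u)})$ in this regime, the claimed $\Theta$-bound coincides with the one in $IA4$.

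The regime $\beta_{l-1}=\infty$ is the substantive case and the main obstacle. Here $g_n^{(l)}\in\omega(\max\{g_n^{(0)},\ldots,g_n^{(l-1)}\})$, so the crucial point to establish is that, measured against the scale $g_n^{(l)}$, the frozen users of $\bigcup_{u=0}^{l-1}\K_u$ contribute nothing to the resource prices. Using the bounds $\tau_a(\tilde f_a^{(n)}(\bigcup_{u=0}^{l-1}\K_u)),\,c_a(f_a^{*(n)}(\bigcup_{u=0}^{l-1}\K_u))\in O(\max\{g_n^{(0)},\ldots,g_n^{(l-1)}\})\subseteq o(g_n^{(l)})$ together with the monotone-polynomial structure, I would pass to the limit to obtain, for each $a\in A$, the limit price functions
\[
\tau_a^{(l,\infty)}(x)=\begin{cases}0,&\rho_a<\alpha_l,\\ b_a\,x^{\alpha_l},&\rho_a=\alpha_l,\\ \infty,&\rho_a>\alpha_l,\end{cases}
\qquad c_a^{(l,\infty)}(x)=(1+\alpha_l)\,\tau_a^{(l,\infty)}(x),
\]
in complete analogy with \eqref{eq:Limit_CPF1}. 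The multiplicative relation $c_a^{(l,\infty)}=(1+\alpha_l)\tau_a^{(l,\infty)}$, driven by \eqref{eq:AD_Curicial}, is precisely what forces the two limit marginal games $\Gamma_\tau^{(l,\infty)}$ and $\Gamma_c^{(l,\infty)}$ to share the same NE profiles.

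With these limit games in hand I would then run the equilibrium-preservation argument of Claim \ref{theo:Induction_Step_0}: appealing to Fact \ref{theo:StrategyPriceConvergence} and Fact \ref{theo:StrategyPriceLimit_NonTight} for the marginal games, I would show that $\tilde{\f}^{(l,\infty)}$ is an NE profile of $\Gamma_\tau^{(l,\infty)}$ and $\f^{*(l,\infty)}$ is an NE profile of $\Gamma_c^{(l,\infty)}$, hence also of $\Gamma_\tau^{(l,\infty)}$ by the shared-NE property; the existence of some $k\in\K_l$ with $\d_k^{(l,\infty)}>0$ and $\rho_k=\alpha_l$ keeps both normalized costs in $\Theta(1)$. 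Equality of NE costs then yields $\lim_{n\to\infty}\sum_{k\in\K_l}C_k(\tilde f^{(n)})/\sum_{k\in\K_l}C_k(f^{*(n)})=1$ and $\sum_{k\in\K_l}C_k(\cdot)\in\Theta(T_l(d^{(n)})g_n^{(l)})$. Finally, writing $A_n,A'_n$ for the NE and SO cost over $\bigcup_{u=0}^{l-1}\K_u$ and $B_n,B'_n$ for that over $\K_l$, I would combine this with $IA4$ through the elementary mediant estimate $\frac{A_n+B_n}{A'_n+B'_n}=1+\frac{A'_n\epsilon_n+B'_n\delta_n}{A'_n+B'_n}$, valid for non-negative costs with $A_n/A'_n=1+\epsilon_n$, $B_n/B'_n=1+\delta_n$ and $\epsilon_n,\delta_n\to 0$. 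This delivers both the ratio limit over $\bigcup_{i=0}^{l}\K_i$ and, since the two contributions are $\Theta(\max_{u\le l-1}T_u g_n^{(u)})$ and $\Theta(T_l g_n^{(l)})$ respectively, the stated $\Theta(\max\{T_0 g_n^{(0)},\ldots,T_l g_n^{(l)}\})$ bound, so that $IA4$ holds at level $l$ and the induction closes.
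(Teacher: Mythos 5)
Your proposal is correct and follows essentially the same route as the paper: the paper itself only states that "an argument similar to that for Claim \ref{theo:Curial_Ind_Step1_2} applies," splitting on whether $\beta_{l-1}$ is finite (groups in $\K_l$ cost-negligible via Claim \ref{theo:Crucial_Induc_Step_l} and $T_l(d^{(n)})\in o(T_{l-1}(d^{(n)}))$) or infinite (limit marginal games $\Gamma_\tau^{(l,\infty)}$ and $\Gamma_c^{(l,\infty)}$ sharing NE profiles via \eqref{eq:AD_Curicial}), and your write-up fills in exactly those details in the same way the paper's detailed proofs of Claims \ref{theo:Induction_Step_0} and \ref{theo:Curial_Ind_Step1_2} do. Your version of the limit price functions (with $\tau_a^{(l,\infty)}(x)=b_a x^{\alpha_l}$ and $c_a^{(l,\infty)}=(1+\alpha_l)\tau_a^{(l,\infty)}$) is in fact the correct reading of the paper's \eqref{eq:Limit_CPF1}, which contains a harmless typo there.
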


																																																																			All \Wu{above} together validate the inductive assumptions
																																														$IA1$-$IA4$ for step $m=l.$ 																	
																																														\Wu{So the induction completes}, \Wu{and}
																																																Theorem \ref{theo:SelfishMainTh} \Wu{is proved}.
\end{proof}																																
\subsection*{Proof of Lemma \ref{theo:RV_Properties}}
\begin{proof}[Proof of Lemma \ref{theo:RV_Properties}]
	\Wu{A proof \Wuu{of} Lemma \ref{theo:RV_Properties}
		may already exist in \cite{Bingham1987Regular}.
		However, we cannot directly access \cite{Bingham1987Regular}.
		Our knowledge on regular variation
		is actually indirectly obtained from Wikipedia on the page
		\[
		\text{https://en.wikipedia.org/wiki/Slowly\_varying\_function.}
		\]
		Therefore, we supply a detailed proof to ensure \Wuu{completeness of this paper}.}
	
	\textbf{Proof of $a):$} Let $\epsilon>0$ be an \Wu{arbitrarily} fixed constant.
	\Wuu{Using} Karamata's Characterization Theorem and
	Representation Theorem, see, e.g., \cite{Bingham1987Regular},
	we can write 
	\[
	\tau(x)=x^{\rho}\cdot e^{\eta(x)+\int_{b}^{x}\frac{\xi(t)}{t} dt},
	\] 
	where 
	\begin{itemize}
		\item $\eta(x)$ is a real-valued measurable function such that 
		$\lim_{x\to \infty}\eta(x)=p$ for some constant $p\ge 0,$
		\item $b\ge 0$ is a constant, and $\xi(x)$ is a real-valued measurable function such that $\lim_{x\to \infty}\xi(x)=0.$
	\end{itemize}
	Thus, we obtain that
	\[
	\begin{split}
	\lim_{x\to \infty} \frac{\tau(x)}{x^{\rho+\epsilon}}
	&=\lim_{x\to \infty}e^{-\epsilon\cdot\ln x+\eta(x)+\int_{b}^{x}\frac{\xi(t)}{t}dt}
	=\lim_{x\to \infty}e^{-\int_{1}^{x}\frac{\epsilon}{t}dt+p+\int_{b}^{x}\frac{\xi(t)}{t}dt}\\
	&=\lim_{x\to \infty}e^{-\int_{1}^{x}\frac{\epsilon-\xi(t)}{t}dt+p+\int_{b}^{1}\frac{\xi(t)}{t}dt}=0,
	\end{split}
	\]
	where we observe that
	\[
	\lim_{x\to \infty}\int_{1}^{x}\frac{\epsilon-\xi(t)}{t}dt=\infty,
	\]
	since $\xi(t)\to 0$ as $t\to \infty,$ and $\epsilon>0.$
	
	Similarly, one can prove that
	\[
	\lim_{x\to\infty} \frac{\tau(x)}{x^{\rho-\epsilon}}=\infty.
	\]
	
	\textbf{Proof of $b):$} For each $x>0,$
	\[
	\lim_{t\to\infty}\frac{g(tx)}{g(t)}
	=\lim_{t\to\infty}\frac{\frac{g(tx)}{\tau(tx)}}{\frac{g(t)}{\tau(t)}}
	\cdot\lim_{t\to\infty} \frac{\tau(tx)}{\tau(t)}=
	x^{\rho}.
	\]
	
	\textbf{Proof of $c):$} For each $x>0,$
	\[
	\lim_{t\to\infty} \frac{\frac{g(tx)}{\tau(tx)}}{\frac{g(t)}{\tau(t)}}
	=\lim_{t\to \infty} \frac{g(tx)}{g(t)}\cdot
	\lim_{t\to \infty} \frac{1}{\frac{\tau(tx)}{\tau(t)}}
	=x^{\rho'}\cdot x^{-\rho}=x^{\rho'-\rho}\in (0,\infty).
	\]
\end{proof}	

\subsection*{Proof of 
	Lemma \ref{theo:RegularlyVaryingConvex}}
\begin{proof}[Proof of Lemma \ref{theo:RegularlyVaryingConvex}]
	We assume that $\tau(x)$ is nondecreasing,
	non-negative, convex, differentiable and regularly 
	varying
	with \Wuu{index} $\rho\in \mathbb{R}$.
	
	We first show that $\rho\ge 0.$
	\Wuu{Since $\tau(\cdot)$ is convex, 
	\[
	\tau\big((1+\eta)x\big)=\tau\Big((1-\eta)\cdot x+\eta\cdot 2x\Big)
	\le (1-\eta)\tau(x)+\eta\tau(2x)
	\]
	for each $x\ge 0$ and
	each $\eta\in [0,1].$}
	Therefore, 
	\[
	\frac{\tau\big((1+\eta)x\big)}{\tau(x)}
	\le 1-\eta + \eta\frac{\tau(2x)}{\tau(x)}
	\]
	\Wuu{for each $x$ with $\tau(x)>0.$}
	Letting $x\to \infty,$ we obtain from the 
	regular variation of $\tau(\cdot)$ that
	\begin{equation}\label{eq:ConvexityOfTau}
	\big(1+\eta\big)^{\rho}\le 1-\eta+\eta\cdot 2^{\rho}
	\end{equation}
	\Wuu{for each $\eta\in [0,1],$} which, in turn, implies that $\rho\ge 0.$
	\Wuu{Otherwise,} if $\rho<0,$ then
	\[
	\frac{\partial}{\partial \eta}\Big(1-\eta+\eta\cdot 2^{\rho}-\big(1+\eta\big)^{\rho}\Big)
	=2^{\rho}-1-\rho\big(1+\eta\big)^{\rho-1}<0,
	\]
	when 
	\[
	0\le \eta<\Big(\frac{\rho}{2^{\rho}-1}\Big)^{\frac{1}{1-\rho}}-1
	\in (0,1).
	\]
	Therefore, if $\rho<0,$ then 
	we \Wuu{obtain for these $\eta$}
	that
	\[
	1-\eta+\eta\cdot 2^{\rho}-\big(1+\eta\big)^{\rho}
	<1-0+0\cdot 2^{\rho}-\big(1+0\big)^{\rho}=0,
	\]
	which \Wu{contradicts} \eqref{eq:ConvexityOfTau}.
	
	\Wu{So}, convexity of $\tau(\cdot)$ implies
	that $\rho\ge 0.$
	
	\Wuu{Convexity} and differentiability of $\tau(\cdot)$
	further imply for each $t>0$ and $x>0$ that
	\[
	\frac{1}{t}x\cdot \tau'(x)\le 
	\int_{x}^{\big(1+\frac{1}{t}\big)x}
	\tau'(u)du=\tau\Big(\big(1+\frac{1}{t}\big)x\Big)
	-\tau(x).
	\]
	Therefore, \Wuu{
	\[
	\frac{x\cdot \tau'(x)}{t\tau(x)}
	\le \Big(\frac{\tau\big((1+\frac{1}{t})x\big)}{\tau(x)}-1\Big)
	\]
	for each $t>0$ and $x>0.$}
	Letting $x\to \infty,$ \Wuu{the regular variation of $\tau(x)$ yields
	\[
	\varlimsup_{x\to\infty}
	\frac{x\tau'(x)}{\tau(x)}
	\le t\Big(\big(1+\frac{1}{t}\big)^{\rho}-1\Big)
	\]
	for each $t>0.$}
	Note that $\rho\ge 0$ implies that
	\[
	\lim_{t\to\infty}t\Big(\big(1+\frac{1}{t}\big)^{\rho}-1\Big)
	=\lim_{z\to 0}\frac{(1+z)^{\rho}-1}{z}= \rho.
	\]
	\Wuu{So, altogether, if} $\tau(\cdot)$ is convex and differentiable,
	\Wuu{then}
	\[
	\varlimsup_{x\to\infty}
	\frac{x\tau'(x)}{\tau(x)}\le \rho.
	\]
	
	Similarly, \Wuu{using again} the convexity and differentiability
	of $\tau,$ we obtain for each $t>1$ and
	each $x>0$ that
	\[
	\frac{1}{t}x\tau'(x)\ge \int_{\big(1-\frac{1}{t}\big)x}^{x}
	\tau'(u)du=\tau(x)-\tau\Big(\big(1-\frac{1}{t}\big)x\Big).
	\]
	\Wuu{An} almost identical argument to the above \Wuu{then yields}
	\[
	\varliminf_{x\to \infty}
	\frac{x\tau'(x)}{\tau(x)}\ge 
	\lim_{z\to 0} \frac{1-(1-z)^{\rho}}{z}=\rho.
	\]

	\Wuu{Altogether,} we obtain that 
	\[
	\lim_{x\to\infty} \frac{x\tau'(x)}{\tau(x)}=\rho\ge 0,
	\]
	which completes the proof.
\end{proof}

\subsection*{Proof of Theorem \ref{theo:MainTheoremRegularVariation}}										
\begin{proof}[Proof of Theorem \ref{theo:MainTheoremRegularVariation}]
	The proof is similar \Wu{to} that \Wu{for} Theorem
	\ref{theo:SelfishMainTh}. 
	To save space, we only sketch
	the main idea and give details for some
	crucial points.
	
	Consider an \Wu{arbitrary} sequence 
	$\{d^{(n)}\}_{n\in \N}$ \Wu{of user volume
		vectors} such that 
	$\lim_{n\to \infty} T(d^{(n)})=\infty,$
	and \Wu{consider an} NE profile $\tilde{f}^{(n)}$ and 
	\Wu{an} SO profile
	$f^{*(n)}$ \Wu{for each $n\in \N.$}
	We \Wuu{want} to apply the asymptotic decomposition
	to this sequence. The Theorem \Wu{will follow directly} from the
	\Wu{arbitrary choice} of the sequence $\{d^{(n)}\}_{n\in \N}$.
	
	\Wu{To construct suitable
		scaling factors at each
		inductive step, we now aim to define a \Wuu{suitable} ordering $\preceq$ on
		the set $A$ of} 
	resources. 
	\Wu{Note that the degrees of
		the polynomial price functions
		serve as \Wuu{such an} ordering
		in the proof of Theorem \ref{theo:SelfishMainTh}.
		Here, \Wuu{the} $\tau_a(\cdot)$ are generally no longer \Wuu{polynomials,} thus \Wuu{such an} ordering \Wuu{is not}
		so obvious.}
	
	Let $\rho_a$ \Wuu{denote} the regular variation 
	index of $\tau_a(\cdot)$ for each $a\in A.$
	Note that the ``degrees" $\rho_a$ now
	\Wuu{cannot be used directly as such an} ordering
	on the set $A.$ 
	\Wuu{We instead use} the mutual comparability of
	the price functions \Wuu{to} define a \Wu{suitable}  ordering
	$\preceq$ on the set
	$A.$ We put
	\[
	a \preceq b
	\iff \lim_{x\to \infty}\frac{\tau_a(x)}{\tau_b(x)}=q_{a,b}
	<\infty. 
	\]
	Obviously, by Lemma \ref{theo:RV_Properties}, $\rho_a<\rho_b$ implies that
	$a\preceq b.$ \Wuu{But the inverse \Wu{need} not be true.}

	\Wu{This} ordering on
	$A$ also carries over to the case that the price functions are \Wuu{$c_a(x)=x\tau_a'(x)+\tau_a(x).$}
	By Lemma \ref{theo:RegularlyVaryingConvex} and
	Lemma \ref{theo:RV_Properties}, \Wuu{and
		the convexity of
		each $\tau_a(\cdot),$} we obtain 
	immediately for
	each $a\in A$ that $c_a(\cdot)$ is also regular
	varying with \Wu{the same} index $\rho_a,$ and 
	\[
	\lim_{x\to\infty}\frac{c_a(x)}{\tau_a(x)}=\rho_a+1.
	\]
	
	For each $s\in \S,$ let $\bar{a}_s$ be \Wuu{a} maximum 
	element of $\{a\in A: r(a,s)>0\}$ w.r.t.
	the ordering $\preceq.$ Then, we obtain
	\Wu{by the mutual comparability} that
	\[
	q_{\bar{a}_s, \bar{a}'_s}=\lim_{x\to\infty} \frac{\tau_{\bar{a}_s}(x)}{\tau_{\bar{a}'_s}(x)}\in (0,\infty),
	\]
	where $\bar{a}'_s$ is another maximum element 
	of $\{a\in A:r(a,s)>0\}$ w.r.t. \Wuu{the} ordering
	$\preceq.$  \Wu{This means that these maximum elements
		are mutually equivalent w.r.t. the ordering $\preceq.$} Hence, by
	Lemma \ref{theo:RV_Properties}, two maximum elements
	of $\{a\in A:r(a,s)>0\}$ have the same regular
	variation index. Again, the inverse \Wu{need} not
	be true. 
	
	With the ordering $\preceq,$ we can now 
	easily \Wuu{identify}  the {\em cheapest} strategy 
	$s^*_k$ for each group 
	$k\in \K.$ Obviously, $s^*_k$ is \Wuu{a} strategy
	$s\in \S_k$  such that $\bar{a}_s\preceq \bar{a}_{s'}$
	for each $s'\in \S_k.$
	
	The  ordering $\preceq$ on resource
	set $A$ \Wuu{are crucial} \Wu{for the construction of} the scaling
	factors \Wuu{of the marginal games} \Wu{at each inductive step} in the asymptotic
	decomposition. 
	
	\textbf{step $m=0:$ Construct $\K_0$ and the scaling
		factors $g_n^{(0)}$}

	Let us put $T_0(d^{(n)})=T(d^{(n)})$ for each
	$n\in \N.$
	
	\Wuu{Similar to the proof of Theorem \ref{theo:SelfishMainTh}}, we assume w.l.o.g. that  the limits 
	\[
	\lim_{n\to\infty}
	\frac{d_k^{(n)}}{T_0(d^{(n)})}
	\!=\!\d_k^{(0,\infty)},\quad\tilde{\f}_s^{(0,\infty)}\!=\!\lim_{n\to\infty}
	\frac{\tilde{f}_s^{(n)}}{T_0(d^{(n)})},
	\quad\text{and}\quad 
	\f_s^{*(0,\infty)}\!=\!\lim_{n\to\infty}
	\frac{f_s^{*(n)}}{T_0(d^{(n)})}
	\]
	exist for each $k\in \K$ and $s\in S$. We put $\alpha_0=\max_{\preceq}\{\bar{a}_{s_k^*}: k\in \K,
	\d_k^{(0,\infty)}>0\}\in A,$ where the \Wu{maximization}
	is w.r.t. the ordering $\preceq.$
	Note that if there are multiple \Wuu{maxima}, then
	we \Wu{can} \Wuu{pick} \Wu{arbitrary} one \Wuu{of} them.
	Moreover, we put $\K_0=\{k\in \K: \bar{a}_{s_k^*}
	\preceq \alpha_0\},$ and $g_n^{(0)}=\tau_{\alpha_0}\big(T_0(d^{(n)})\big)$
	for each $n\in \N.$
	
	\Wu{With} the user optimality \eqref{def:WE}, \Wu{we} can easily obtain for each $k\in \K_0$ that 
	\[
	\tilde{L}_k^{(n)},L_k^{*(n)}
	\in O\big(g_n^{(0)}\big)
	=O\big(\tau_{\alpha_0}(T_0(d^{(n)}))\big),
	\]
	since there exists an strategy $s\in \S_k$
	such that $\bar{a}_s\preceq \alpha_0$ for each
	$k\in \K_0,$ and the maximum volume
	of users adopting a strategy is in $O(T_0(d^{(n)})).$
	
	\Wu{With} the regular variation of the price functions
	$\tau_a(\cdot)$ and $c_a(\cdot)$, we obtain
	for each $x>0$ that
	\[
	\lim_{n\to \infty} \frac{\tau_a(T(d^{(n)})x)}{
		g_n^{(0)}
		}=\lim_{n\to \infty} \frac{\tau_a(T(d^{(n)})x)}{
		\tau_{\alpha_0}(T(d^{(n)}))
	}=q_{a,\alpha_0}\cdot x^{\alpha_0},
	\]
	and 
	\[
	\lim_{n\to \infty} \frac{c_a(T(d^{(n)})x)}{
		g_n^{(0)}
	}=\lim_{n\to \infty} \frac{c_a(T(d^{(n)})x)}{
	\tau_{\alpha_0}(T(d^{(n)}))
}=(1+\rho_{\alpha_0})q_{a,\alpha_0}\cdot x^{\alpha_0}.
	\]
\Wu{Therefore}, we can \Wu{then} obtain from \Wu{an} 
argument \Wu{similar to} that \Wu{for} Claim \ref{theo:Induction_Step_0} that
\[
\sum_{k\in \K_0} C_k(\tilde{f}^{(n)})
\approx \sum_{k\in \K_0} C_k(f^{*(n)})
\in \Theta\big(T_0(d^{(n)})g_n^{(0)}\big).
\]
\Wu{Here,} we observe that there exists at least
one group $k\in \K_0$ such that $\bar{a}_{s_k^*}=\alpha_0$
and $\d_k^{(0,\infty)}>0.$

\textbf{step $m=l:$ Construct $\K_l$ and scaling
	factors $g_n^{(l)}$}

\Wuu{Similar to the proof of Theorem \ref{theo:SelfishMainTh}}, we \Wu{now make inductive assumptions} that we have constructed 
$\K_0,\ldots,\K_{l-1}$ and 
$g_n^{(0)}, \ldots, g_n^{(l-1)}$ such that
for each $k\in \bigcup_{u=0}^{l-1}\K_u$
\[
\tilde{L}_k^{(n)}, L_k^{*(n)}
\in O\big(\max\{g_n^{(0)},\ldots,g_n^{(l-1)}\}\big),
\]
and 
\[
\begin{split}
\sum_{k\in \bigcup_{u=0}^{l-1}\K_u} C_k(\tilde{f}^{(n)})
\approx &\sum_{k\in \bigcup_{u=0}^{l-1}\K_u} C_k(f^{*(n)})\\
&\in \Theta\big(\max\{T_0(d^{(n)})g_n^{(0)}, \ldots,T_{l-1}(d^{(n)})g_n^{(l-1)}\}\big).
\end{split}
\]
Moreover, we assume for each 
$u=0,\ldots,l-2$ \Wu{that }
\[
\lim_{n\to\infty} \frac{T_{u+1}(d^{(n)})}{T_{u}(d^{(n)})}=0,\quad\text{and}\quad
\lim_{n\to\infty}
\frac{\sum_{k\in \K\backslash\bigcup_{i=0}^{l-1}\K_i}d_k^{(n)}}{T_{l-1}(d^{(n)})}=0.
\]

To construct $\K_l$ and $g_n^{(l)},$ we \Wu{make a }further 
\Wu{assumption} that the limits 
\[
\lim_{n\to\infty}\frac{d_k^{(n)}}{T_l(d^{(n)})}
=\d_k^{(l,\infty)}, \ 
\lim_{n\to\infty} \frac{\tilde{f}_s^{(n)}}{T_l(d^{(n)})}
=\tilde{\f}_s^{(l,\infty)},\
\text{and}\ 
\lim_{n\to \infty}
\frac{f^{*(n)}_s}{T_l(d^{(n)})}
=\f^{*(l,\infty)}_s
\]
exist for each $s\in \S_k$ and 
$k\in \K\backslash\bigcup_{u=0}^{l-1}\K_u,$
where $T_l(d^{(n)})=\sum_{k\in \K\backslash\bigcup_{u=0}^{l-1}\K_u}d_k^{(n)}$
for each $n\in \N.$

Let $\alpha_{l}=\max_{\preceq}\{\bar{a}_{s_k^*}: k
\in\K\backslash\bigcup_{u=0}^{l-1}\K_u,\d_k^{(l,\infty)}>0\}$ and $\K_l=\{k\in \K\backslash\bigcup_{u=0}^{l-1}\K_u: 
\bar{a}_{s^*_k}\preceq \alpha_l\}.$
Then, we put $g_{n}^{(l)}=\tau_{\alpha_l}(T_l(d^{(n)})).$

Similarly, we can obtain \Wu{for each $k\in \K_l$} that
\[
\tilde{L}_k^{(n)},L_k^{*(n)}
\in O\big(\max\{g_n^{(0)},\ldots,g_n^{(l)}\}\big),
\]
since both $\tau_a(\cdot)$ and
$c_a(\cdot)$ are non-decreasing, and \Wu{thus} for each $a\in A$ and each $x>0$
\[
\begin{split}
\tau_a^{(l,n)}\big(T_l(d^{(n)})x\big)
&=\tau_a\big(T_l(d^{(n)})x+\tilde{f}^{(n)}_a(\bigcup_{u=0}^{l-1}\K_u)\big)\\
&\in O\big(\max\{\tau_a(T_l(d^{(n)})),
\tau_a(\tilde{f}^{(n)}_a(\bigcup_{u=0}^{l-1}\K_u))
\}\big)
\end{split}
\]
and
\[
\begin{split}
c_a^{(l,n)}\big(T_l(d^{(n)})x\big)
&=c_a\big(T_l(d^{(n)})x+f^{*(n)}_a(\bigcup_{u=0}^{l-1}\K_u)\big)\\
&\in O\big(\max\{c_a(T_l(d^{(n)})),
c_a(f^{*(n)}_a(\bigcup_{u=0}^{l-1}\K_u))
\}\big).
\end{split}
\]

\Wuu{So,} \Wu{if}
$g_n^{(l)}\in O\big(\max\{g_n^{(0)},\ldots,g_n^{(l-1)}\}\big),$
\Wu{then all} groups \Wu{in} $\K_l$ \Wuu{are} negligible w.r.t. 
groups $\bigcup_{u=0}^{l-1}\K_u.$
Otherwise, $g_n^{(l)}\in \omega
\big(\max\{g_n^{(0)},\ldots,g_n^{(l-1)}\}\big).$
\Wu{If this is the case,} we \Wu{can then } independently consider $\K_l$
\Wu{with an argument similar to} that \Wu{for} Theorem
\ref{theo:SelfishMainTh}.
\Wu{This will yield}
\[
\begin{split}
\sum_{k\in \bigcup_{u=0}^{l}\K_u} C_k(\tilde{f}^{(n)})
\approx &\sum_{k\in \bigcup_{u=0}^{l}\K_u} C_k(f^{*(n)})\\
&\in \Theta\big(\max\{T_0(d^{(n)})g_n^{(0)}, \ldots,T_{l-1}(d^{(n)})g_n^{(l-1)},T_l(d^{(n)})g_n^{(l)}\}\big),
\end{split}
\]
which completes \Wu{the induction and finishes} the proof.
\end{proof}

\subsection*{Proof of Theorem \ref{theo:MainTheoremGaugeableExtension}}
\begin{proof}[Proof of Theorem \ref{theo:MainTheoremGaugeableExtension}]
	This proof is very similar \Wu{to} those for
	Theorem \ref{theo:SelfishMainTh} and Theorem \ref{theo:MainTheoremRegularVariation}.
	We thus only sketch the \Wu{main idea}.
	
	\Wu{Note that we will not need a \Wuu{suitable} ordering on
		resources in this proof. Conditions
		$G1')$-$G3')$ have already guaranteed
		the existence of a suitable scaling factor sequence
		$\{g_n^{(l)}\}_{n\in \N}$ at each
		inductive step.}
	
	We assume that we are now at an inductive step
	$m=l,$ and we have already shown that
	for each $k\in \bigcup_{u=0}^{l-1}\K_u$
	\[
	\tilde{L}_k^{(n)}, L_k^{*(n)}
	\in O\big(\max\{g_n^{(0)},\ldots,g_n^{(l-1)}\}\big),
	\]
	and 
	\[
	\begin{split}
	\sum_{k\in \bigcup_{u=0}^{l-1}\K_u} C_k(\tilde{f}^{(n)})
	\approx &\sum_{k\in \bigcup_{u=0}^{l-1}\K_u} C_k(f^{*(n)})\\
	&\in \Theta\big(\max\{T_0(d^{(n)})g_n^{(0)}, \ldots,T_{l-1}(d^{(n)})g_n^{(l-1)}\}\big),
	\end{split}
	\]
	where all the notations are the same as those \Wu{for} the proofs
	of 
	Theorem \ref{theo:SelfishMainTh} and Theorem \ref{theo:MainTheoremRegularVariation}.
	Moreover, \Wu{we make \Wuu{the} further inductive \Wuu{assumptions} that}
	\[
	\lim_{n\to\infty}\frac{T_{u+1}(d^{(n)})}{T_u(d^{(n)})}=0,
	\quad\text{and}\quad 
	\lim_{n\to\infty} \frac{\sum_{k\in\K\backslash\bigcup_{u=0}^{l-1}\K_u}d_k^{(n)}}{T_{l-1}(d^{(n)})}=0
	\]
	for each $u=0,\ldots,l-2.$
	If $\K=\bigcup_{u=0}^{l-1}\K_u,$ then we terminate.
	Otherwise, we continue \Wuu{as} follows.
	
	We define $\K_l=\{k\in \K\backslash\bigcup_{u=0}^{l-1}\K_u:\d_k^{(l,\infty)}>0\},$
	where all $\d_k^{(l,\infty)}$ are again defined \Wu{similarly}
	as in the proofs of Theorem~\ref{theo:SelfishMainTh}
	and Theorem~\ref{theo:MainTheoremRegularVariation}.
	\Wu{The} condition of Theorem \ref{theo:MainTheoremGaugeableExtension}
	\Wu{implies that}
	there \Wu{is} a regularly varying function $g(\cdot)$
	for $\K_l$ such that $G1')$-$G3')$ hold.
	We put $g_n^{(l)}=g(T_l(d^{(n)})),$ where \Wu{we put again}
	$T_l(d^{(n)})=\sum_{k\in \K_l}d_k^{(n)}$ for each $n\in \N.$
	
	Then, \Wuu{by the user optimality \eqref{def:WE} and
		conditions $G1')$-$G2'),$} we can similarly obtain \Wu{for each $k\in \K_l$} that 
	\[
	\tilde{L}_k^{(n)}, L_k^{*(n)}
	\in O\big(\max\{g_n^{(0)},\ldots,g_n^{(l)}\}\big),
	\]
	since $\tau_a(x)$ is non-decreasing, \eqref{eq:ImportantCondiAD}
	and thus
	$c_a(x)=x\tau_a'(x)+\tau_a(x)$ are asymptotically
	non-decreasing for each $a\in A.$ \Wu{By} comparing
	$g_n^{(l)}$ with 
	$\max\{g_n^{(0)},\ldots,g_n^{(l-1)}\},$ we can \Wu{again}
	obtain \Wu{with conditions
	\eqref{eq:ImportantCondiAD} and $G3')$} that
	\[
	\begin{split}
	\sum_{k\in \bigcup_{u=0}^{l}\K_u} C_k(\tilde{f}^{(n)})
	\approx &\sum_{k\in \bigcup_{u=0}^{l}\K_u} C_k(f^{*(n)})\\
	&\in \Theta\big(\max\{T_0(d^{(n)})g_n^{(0)}, \ldots,T_{l-1}(d^{(n)})g_n^{(l-1)},T_l(d^{(n)})g_n^{(l)}\}\big),
	\end{split}
	\]
	which validates the inductive assumption at step $m=l.$
	
	\Wu{Therefore,} the game is asymptotic decomposable and 
	asymptotically well designed.
\end{proof}

\end{document}